\renewcommand{\H}{\mathcal{H}}
\newcommand{\V}{\mathcal{V}}
\newcommand{\B}{\mathcal{B}}
\newcommand{\E}{\mathcal{E}}
\newcommand{\X}{\mathcal{X}}
\renewcommand{\O}{\mathcal{O}}
\newcommand{\inter}{1}
\newcommand{\intra}{0}
\newcommand{\restintra}{{\Lambda,\intra}}
\newcommand{\block}{\text{b}}
\newcommand{\restblock}{{\Lambda,\block}}
\newcommand{\Ham}{\mathrm{H}} 
\newcommand{\K}{\mathcal{K}} 
\newcommand{\Prob}{\mathcal{P}} 
\newcommand{\piecewise}[1]{\left\{\begin{array}{ll}#1\end{array}\right.}
\newcommand{\abs}[1]{\left| #1 \right|} 
\newcommand{\avg}[1]{\left\langle{#1}\right\rangle}
\newcommand{\nn}[1]{\left\langle {#1}\right\rangle}
\newcommand{\matrixel}[3]{\left\langle #1 \vphantom{#2#3} \right| #2 \left| #3 \vphantom{#1#2} \right\rangle} 
\newcommand{\aref}[1]{Appendix \ref{appendix:#1}}
\newtheorem*{lemma}{Lemma}
\newtheorem{corollary}{Corollary}
\newtheorem{proposition}{Proposition}
\begin{document}
	\title{Optimal Renormalization Group Transformation from Information Theory}

	\author{Patrick M. Lenggenhager}
		\affiliation{Institute for Theoretical Physics, ETH Zurich, 8093 Zurich, Switzerland}
	\author{Doruk Efe G{\" o}kmen}
		\affiliation{Institute for Theoretical Physics, ETH Zurich, 8093 Zurich, Switzerland}
	\author{Zohar Ringel}
		\affiliation{Racah Institute of Physics, The Hebrew University of Jerusalem, Jerusalem 9190401, Israel}
	\author{Sebastian D. Huber}
		\affiliation{Institute for Theoretical Physics, ETH Zurich, 8093 Zurich, Switzerland}
	\author{Maciej Koch-Janusz}
		\affiliation{Institute for Theoretical Physics, ETH Zurich, 8093 Zurich, Switzerland}
	

	\begin{abstract}
		Recently a novel real-space RG algorithm was introduced, identifying the relevant degrees of freedom of a system by maximizing an information-theoretic quantity, the real-space mutual information (RSMI), with machine learning methods. Motivated by this, we investigate the information theoretic properties of coarse-graining procedures, for both translationally invariant and disordered systems.
		We prove that a perfect RSMI coarse-graining does not increase the range of interactions in the renormalized Hamiltonian, and, for disordered systems, suppresses generation of correlations in the renormalized disorder distribution, being in this sense \emph{optimal}. We empirically verify decay of those measures of  complexity, as a function of information retained by the RG, on the examples of arbitrary coarse-grainings of the clean and random Ising chain. The results establish a direct and quantifiable connection between properties of RG viewed as a compression scheme, and those of physical objects i.e.~Hamiltonians and disorder distributions. We also study the effect of constraints on the number and type of coarse-grained degrees of freedom on a generic RG procedure. 
	\end{abstract}

	\maketitle
	
	
	\section{Introduction}
	
	The conceptual relations between physics and information theory date back to the very earliest days of statistical mechanics; they include the pioneering work of Boltzmann and Gibbs on entropy~\cite{Boltzmann1877,Gibbs1902}, finding its direct counterpart in Shannon's information entropy~\cite{Shannon1948}, and investigations of Szilard and Landauer~\cite{Szilard1929,Landauer1961}. In the quantum regime research initially focused on foundational challenges posed by the notion of entanglement, but soon gave rise to the wide discipline of quantum information theory~\cite{Bennett1998}, whose more practical aspects include quantum algorithms and computation.
	
	In recent years there has been a renewed interest in applying the formalism and tools of information theory to fundamental problems of theoretical physics. The motivation mainly comes from two, not entirely unrelated, directions. On the one hand the high-energy community is actively investigating the idea of holography in quantum field theories~\cite{tHooft1993,Susskind1995,Bousso2002}, originally inspired by black-hole thermodynamics. On the other hand in condensed matter theory there is a growing appreciation of the role of the entanglement \emph{structure} of quantum wave functions in determining the physical properties of the system. This is exemplified by the short- and long-range entanglement distinguishing the symmetry protected  topological phases~\cite{Kitaev2006,Levin2006,Chen2010} (e.g.\ topological insulators) from genuine, fractionalized topological orders (e.g.\ Fractional Quantum Hall states). The conceptual advances led also to \emph{constructive} developments in the form of new ans{\"a}tze for wave functions (MPS~\cite{Oestlund1995}, MERA~\cite{Vidal2008}) and numerical algorithms (DMRG~\cite{White1992}, NQS~\cite{Carleo602}). 
	
	The focus of this work is on the renormalization group (RG). One of the conceptually most profound developments in theoretical physics, in particular condensed matter theory, it provides -- beyond more direct applications -- a theoretical foundation for the notion of universality~\cite{Kadanoff1966,Wilson1974,Wilson1975,Fisher1998,Efrati2014}.
	The possible connections of RG to information theory have been explored in a number of works \cite{PhysRevD.54.5163,PhysRevLett.81.3587,Apenko2012,Machta604,Beny2015a, Beny2015b,Koch-Janusz2018} in both classical and quantum settings. In particular, some of the present authors introduced a numerical algorithm
	for real-space RG of classical statistical systems \cite{Koch-Janusz2018}, based on the characterization of relevant degrees of freedom supported in a spatial block as the ones sharing the most mutual information with the environment of the block. The algorithm employs machine learning techniques to extract those degrees of freedom and combines it with an iterative sampling scheme of Monte Carlo RG \cite{PhysRevLett.37.461,Swendsen1979}, though, in a crucial difference, the form of the RG coarse-graining rule is not given, but rather \emph{learned}. Strikingly, the coarse-graining rules discovered by the algorithm for the test systems were in an operational sense optimal~\cite{optimal-misc}: they ignored irrelevant short-scale noise and they result in simple effective Hamiltonians or match non-trivial analytical results.
	
	The above suggests, that real-space RG can be universally defined in terms of information theory, rather than based on problem-specific physical intuition. Here we develop a theoretical foundation inspired by, and underlying those numerical results. We show they were not accidental, but rather a consequence of general principles. To this end we prove that a \emph{perfect}, full-RSMI-retaining coarse-graining of a finite-range Hamiltonian does not increase the range of interactions in the renormalized Hamiltonian, in any dimension.
	We then study analytically generic coarse-grainings and the effective Hamiltonian they define, as a function of the real-space mutual information with the environment (RSMI) retained. For the example of the Ising chain we perturbatively derive all the couplings in the renormalized Hamiltonian resulting from, and RSMI captured by, an \emph{arbitrary} coarse-graining and show monotonic decay of the higher-order and/or long-range terms with increased RSMI. 
	
	Those properties also hold in the presence of disorder. We further prove that perfect RSMI-maximizing coarse grainings are stable to local changes in disorder realizations and suppress generation of correlations in the renormalized disorder distribution. Using the solvable example of random dilute Ising chain, we study the properties of the renormalized disorder distribution induced by an arbitrary RG procedure, and show decay of statistical measures of correlation in that distribution as a function of the RSMI retained.
	
	We also theoretically investigate the effects imposed by constraints on the number and type of coarse-grained variables, which can make the loss of part of relevant information inevitable. 
	 We construct simple toy models providing intuitive understanding of our results.
	
	Our results establish a direct link between compression theory intuitions behind introduction of RSMI~\cite{infbottle1}, and physical properties of the renormalized Hamiltonian/disorder distribution. They strongly support RSMI-maximization as a model-independent variational principle defining the \emph{optimal} RG coarse-graining. In contrast to fixed schemes, this RG approach is, by construction, informed by the physics of the system considered, including the position in the phase diagram. This could allow application of  RG schemes to systems, for which they are currently not known, avoiding many of the pitfalls befalling fixed RG transformations \cite{vanEnter1993,Kennedy1997}.

	The paper is organized as follows: in Sec.~\ref{sec:RSMI} the information theoretic formalism and the RSMI algorithm are reviewed, in Sec.~\ref{sec:effHam} we prove that a RSMI-perfect RG does not generate longer-range interactions.
	In Sec.~\ref{sec:Ising} we investigate the renormalized Hamiltonian as a function of the information retained, on the example of arbitrary coarse-grainings of the 1D Ising model. In Sec.~\ref{sec:toymodels} we study the effect of constraints on the number and type of coarse-grained degrees of freedom on a generic RG procedure. We introduce toy models explaining the differences in optimal coarse-grainings in 1D and 2D.
	In Sec.~\ref{sec:disorder} we extend the analysis to disordered systems. We prove RSMI-perfect RG does not generate correlations in disorder. We study properties of the renormalized disorder as a function of the information retained for arbitrary coarse-grainings of the random dilute Ising chain.
	Finally, in Sec.~\ref{sec:conclusions} we discuss implications of the results, generalizations and open questions. Appendices give details of the proofs, derivations of the statements in the main text, and additional information.

	
\section{The RSMI algorithm}\label{sec:RSMI}

The real-space mutual information (RSMI) algorithm is defined in the context of real-space RG, originally introduced by Kadanoff for lattice models~\cite{Kadanoff1966}.
The goal of real-space RG~\cite{Efrati2014} is to coarse-grain a given set of degrees of freedom $\X$ in position space in order to 
integrate out short-range fluctuations and retain only long-range correlations, and in so doing to construct an effective theory. An iterative application of this procedure should result in recursive relations between coupling constants of the Hamiltonian at successive RG steps -- those are the RG flow equations formalising the relationship between effective theories at different length scales.

Consider a generic system with real-space degrees of freedom $\X$ described by the Hamiltonian $\Ham[\X]$  and a canonical partition function:
	\begin{equation}
		Z = \sum_\X e^{-\beta\Ham[\X]} \equiv \sum_\X e^{-\K[\X]}
	\end{equation}
	with the inverse temperature $\beta=1/k_\mathrm{B}T$ and the reduced Hamiltonian $\K:=-\beta\Ham$. Equivalently, the  system is specified by  a probability measure:
	\begin{equation}
	P(\X) = \frac{1}{Z}e^{\K[\X]}.
	\end{equation}
The coarse-graining transformation $\X\to\X'$  between the set of the original degrees of freedom and a (smaller) set of new degrees of freedom  is given by a conditional probability distribution $P_\Lambda(\X'|\X)$, where $\Lambda$ is a set of parameters completely specifying the rule (note, that the rule can be totally deterministic, in which case $P_{\Lambda}$ is a delta-function). The probability measure of the coarse-grained system is then:
\begin{equation}
P(\X') = \sum_\X P_\Lambda(\X'|\X)P(\X).
\label{eq:RG_IT}
\end{equation}
If $P(\X')$ is (or at least can be approximated by) a Gibbs measure, then the requirement to correctly reproduce thermodynamics enforces $Z'=Z$ and a renormalized Hamiltonian $\Ham'[\X']$ in the new variables $\X'$ can be defined implicitly via:
		\begin{equation}
		e^{\K'[\X']} = \sum_\X P_\Lambda(\X'|\X)e^{\K[\X]}.
		\label{eq:RG_IT_Ham}
		\end{equation}
The procedure is often implemented in the form of block RG~\cite{Efrati2014, Niemeyer1974}. This corresponds to a factorization of the conditional probability distribution into independent contributions from equivalent (assuming translation invariance) blocks $\V\subset\X$:
	\begin{equation}
		P(\X'|\X) = \prod_{j=1}^n P_\Lambda(\H_j|\V_j),
		\label{eq:rg_rule}
	\end{equation}
where $\{\V_j\}_{j=1}^n$ and $\{\H_j\}_{j=1}^n$ are partitions of $\X$ and $\X'$, respectively, and $P_{\Lambda}$ now defines the coarse-graining of a single block (and therefore  $\Lambda$ contains substantially fewer parameters). Concrete examples of such $P_{\Lambda}$ include the standard ``decimation" or ``majority-rule" transformations [see Eqs.(\ref{eq:decimation_rule},\ref{eq:majority_rule})].
	
Not every choice of $P_{\Lambda}$ is \emph{physically} meaningful. It should at least be consistent with the
symmetries of the system under consideration, for instance. This is, however,  not sufficient in practice. While it may be difficult to formulate a concise criterion for the choice of the coarse-graining transformation it is clear that in order to derive the recursive RG equations the effective Hamiltonian cannot proliferate new couplings at each step. If there is to be a chance of analytical control over the procedure, the interactions in the effective Hamiltonian should  be tractable (short-ranged, for instance). That is to say, if one chooses the ``correct" degrees of freedom to describe the system, the resulting theory should be ``simple". Numerous examples of failure to achieve this can be found in the literature \cite{vanEnter1993,Kennedy1997}, and include cases as simple as decimation of the Ising model in 2D. Implicit in this is the notion that there does not exist a single RG transformation which does the job, but rather the transformation should be designed for the problem at hand~\cite{Fisher:1982yc}.

	\begin{figure}
		\centering
		\includegraphics{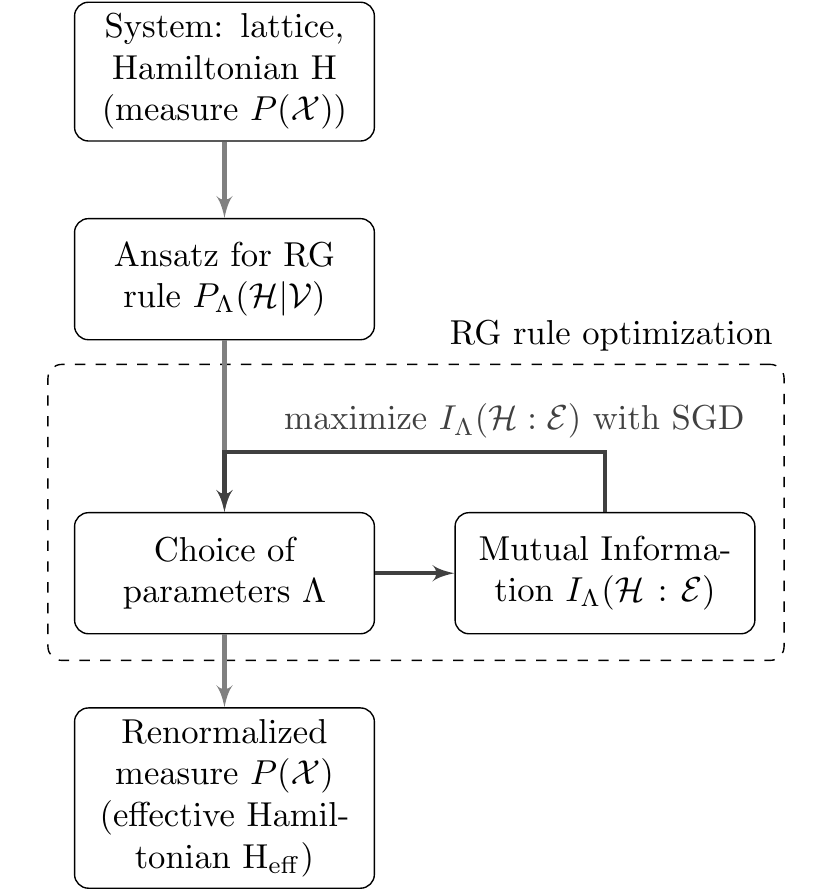}
		\caption{
			Flow-diagram of the RSMI algorithm~\cite{Koch-Janusz2018}.
			Given a lattice and Hamiltonian $\Ham$ (or, in practice, given Monte Carlo samples) an RBM-ansatz for the RG rule is optimized 
			by maximizing the mutual information between the new degrees of freedom $\H$ and the environment $\E$  of the original ones using Stochastic Gradient Descent (SGD).
			The trained $P_\Lambda(\H|\V)$  is used to define a new effective measure and Hamiltonian $\Ham_\text{eff}$.
		}
		\label{fig:rsmimalg}
	\end{figure}

Recently, some of us proposed the maximization of the real-space mutual information (introduced below) as a criterion
for a \emph{physically meaningful} RG transformation~\cite{Koch-Janusz2018}. The idea behind it is that the effective block degrees of freedom, 
in whose terms the long-wavelength theory is simple, are those which retain the most of the information (already present in the block) about long-wavelength properties of the system.
This informally introduced ``information" can be formalized by the following construction. Consider a single block $\V$ at a time and divide the system into four regions $\X=\V\cup\B\cup\E\cup\O$: the visibles (i.e. the block) $\V$, the buffer $\B$, the environment $\E$ and the remaining outer part of the system $\O$ (which is only introduced for algorithmic reasons, conceptually the environment $\E$ could also contain this part). Fig.(\ref{fig:system1Dising}) depicts this decomposition in the case of a 1D spin model, but it trivially generalizes to any dimension. The real-space mutual information between the new (coarse-grained) degrees of freedom $\H$ and the environment $\E$ of the original ones (i.e. of the block) is then defined as:
	\begin{equation}
		I_\Lambda(\H:\E)
		=\sum_{\H,\E}P_\Lambda(\E,\H)\log\left(\frac{P_\Lambda(\E,\H)}{P_\Lambda(\H)P(\E)}\right)
		\label{eq:MI_def}
	\end{equation}
where $P_\Lambda(\E,\H)$ and $P_\Lambda(\H)$ are marginal distributions of $P_{\Lambda}(\H,\X)=P_\Lambda(\H|\V)P(\X)$. Thus $I_\Lambda(\H:\E)$  is the standard mutual information between the random variables $\H$ and $\E$. Exclusion of the buffer $\B$ (in contrast to other adaptive schemes, see for instance \cite{Brandt2001}), generally of linear extent comparable to $\V$, is of fundamental importance: it filters out short-range correlations, leaving only the long-range contributions to $I_\Lambda(\H:\E)$.

The RSMI satisfies the following bounds (see also Appendix \ref{appendix:MI}):
	\begin{IEEEeqnarray}{rCl}
		0 \leq I_\Lambda(\H:\E) &\leq& H(\H),
		\label{eq:MIbound_entropy}\\
		I_\Lambda(\H:\E) &\leq& I(\V:\E),\label{eq:MIbound_visibles}
	\end{IEEEeqnarray}
where $H(\H)$ denotes the information entropy of $\H$ and $I(\V:\E)$ is the mutual information of the visibles with the environment. The optimization algorithm starts with a set of samples drawn from $P(\X)$ and a differentiable ansatz for $P_{\Lambda}(\H|\V)$, which in Ref.~\cite{Koch-Janusz2018} takes the form of a Restricted Boltzmann Machine (RBM), parametrized by $\Lambda$ (see Appendix~\ref{appendix:RBMansatz}), and updates the parameters using a (stochastic) gradient descent procedure. The cost function to be maximized is precisely $I_\Lambda(\H:\E)$, which in the course of the training is increased towards the value of $I(\V:\E)$. The iterative procedure is shown in Fig.~\ref{fig:rsmimalg}. Using the trained $P_\Lambda(\H |\V)$ the original set of samples drawn from $P(\X)$ can be coarse-grained and the full procedure re-computed for a subsequent RG step.

		\begin{figure}
			\centering
			\includegraphics{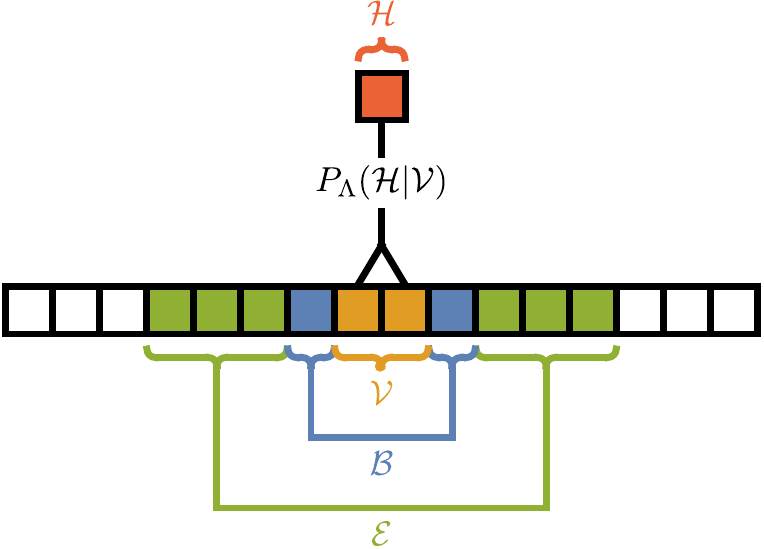}
			\caption{
				Schematic decomposition of the system for the purpose of defining the mutual
				information $I_\Lambda(\H:\E)$ (in 1D, for concreteness). The full system is partitioned into blocks of visibles $\V$ (yellow)
				embedded into a buffer $\B$ (blue) and surrounded by the environment $\E$ (green). The remaining part of the system is denoted by $\O$ in the main text. The conditional probability distribution $P_\Lambda(\H|\V)$ couples $\V$  to the hiddens $\H$ (red).
			}
			\label{fig:system1Dising}
		\end{figure}


\section{Optimality: the measure and the effective Hamiltonian}\label{sec:effHam}

In what sense is the RSMI coarse-graining \emph{optimal}? By construction, the scheme preserves as much information about long-range properties of the system as possible, and thus, when viewed as a compression of the \emph {relevant} information in $\V$ into $\H$, it is information-theoretically optimal \cite{bottleneck-misc}. We will show that this well-defined but abstract notion implies \emph{physical} ``simplicity" of the renormalized Hamiltonians. The latter, though intuitively clear and operationally useful, may be difficult to define unambigously. We will, therefore, examine natural measures of Hamiltonian complexity, and show they \emph{all} decay with increased MI, also for disordered systems. It will also prove useful to approach this problem at the level of properties of the probability measure (which is the fundamental object the RSMI algorithm works with).

Consider first the following setup: given  a 1D system with a short-ranged Hamiltonian introduce a coarse-graining $\{\V_j \}$, with a block size chosen so that the Hamiltonian is nearest-neighbour with respect to the blocks.  Let us choose an arbitrary block $\V_0$, denote its immediate neighbours $\V_{\pm 1}$ as the buffer $\B$, and all the remaining blocks $\{\V_{j<-1}  \}$ and $\{\V_{j>1}  \}$ as the environment $\E_0$, or in more detail, as left- and right-environment $\E_{L/R}(\V_0) $, respectively. 
Assume now that $\H_0$, the coarse-grained variable for $\V_0$, is constructed so that $I(\H_0 : \E_0) = I(\V_0 : \E_0)$, i.e.~the coarse-grained variable retains \emph{all of the} information which the original block $\V_0$ contained about the environment, and thus about any long-wavelength physics. The following then holds true (proof in Appendix \ref{appendix:proofRange}): 
\begin{proposition}
Let $I(\H_0 : \E_0) = I(\V_0 : \E_0)$.  Then the probability measure on the coarse-grained variables $P( \{\H_j \})$ obeys the factorization property: 
\begin{equation}\label{eq:factorization1}
P(\H_{j \leq -2},\H_{j \geq2} | \H_0) = P(\H_{j\leq -2} | \H_0)\cdot P(\H_{j\geq 2} | \H_0),
\end{equation}
where in the conditional probabilties the buffer variables (i.e.~the neighbours $\H_{\pm 1} $ of $\H_0$) have been integrated out. In other words, for a fixed $\H_0$ the probabilities of its environments $\E_{L/R}(\H_0)$ are independent of each other. 
\end{proposition}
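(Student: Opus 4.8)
The plan is to read the hypothesis $I(\H_0:\E_0)=I(\V_0:\E_0)$ as a \emph{sufficiency} statement and combine it with the one-dimensional Markov structure of the underlying measure. Because the coarse-graining acts locally—$\H_0$ is drawn from $P_\Lambda(\H_0|\V_0)$ and hence sees the environment only through $\V_0$—the triple $\E_0 - \V_0 - \H_0$ is a Markov chain. The chain rule gives $I(\V_0:\E_0)=I(\H_0:\E_0)+I(\V_0:\E_0\,|\,\H_0)$, so the assumed equality forces $I(\V_0:\E_0\,|\,\H_0)=0$; equivalently the \emph{reversed} chain $\E_0 - \H_0 - \V_0$ is Markov, i.e.\ $P(\E_0\,|\,\V_0,\H_0)=P(\E_0\,|\,\H_0)$. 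This is the single genuinely information-theoretic input: conditioned on $\H_0$, the block $\V_0$ and its entire environment $\E_0$ decouple. Everything downstream is bookkeeping with conditional independences.

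Next I would bring in the geometry. Since the block Hamiltonian is nearest-neighbour, the original reduced Hamiltonian is $\K[\X]=\sum_j K(\V_j,\V_{j+1})$ and the measure factorizes along the chain, so $\V_0$ \emph{separates} the lattice: conditioned on $\V_0$, the left degrees of freedom $\{\V_{j\le -1}\}$ are independent of the right ones $\{\V_{j\ge 1}\}$. As each $\H_j$ is produced from $\V_j$ alone, the images inherit this, $\H_{j\le-2}\perp\H_{j\ge2}\,|\,\V_0$, and by the same locality $(\H_{j\le-2},\H_{j\ge2})\perp\H_0\,|\,\V_0$. Feeding these into the sufficiency identity $P(\H_{j\le-2},\H_{j\ge2}\,|\,\V_0,\H_0)=P(\H_{j\le-2},\H_{j\ge2}\,|\,\H_0)$ yields, for every $\V_0$ in the support of $P(\cdot\,|\,\H_0)$,
\begin{equation}
P(\H_{j\le-2},\H_{j\ge2}\,|\,\H_0)=P(\H_{j\le-2}\,|\,\V_0)\,P(\H_{j\ge2}\,|\,\V_0).
\end{equation}

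Finally I would collapse the residual $\V_0$-dependence using sufficiency a second time on each factor: since $\H_{j\le-2}$ is a function of $\E_0$, the locality relation $P(\H_{j\le-2}\,|\,\V_0,\H_0)=P(\H_{j\le-2}\,|\,\V_0)$ and sufficiency relation $P(\H_{j\le-2}\,|\,\V_0,\H_0)=P(\H_{j\le-2}\,|\,\H_0)$ together give $P(\H_{j\le-2}\,|\,\V_0)=P(\H_{j\le-2}\,|\,\H_0)$ on the relevant support, and likewise on the right. Substituting delivers the claimed factorization. I expect the main difficulty to be largely technical rather than conceptual: rigorously justifying the equality case of the data-processing inequality and tracking which conditional independences are simultaneously valid requires care about the supports of the conditioning variables, so that each $P(\cdot\,|\,\V_0,\H_0)$ manipulation is well defined. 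The physical picture—that perfect RSMI retention promotes $\H_0$ to a screening variable inheriting the separating role played by $\V_0$ in the fine-grained chain—emerges once these pieces are assembled.
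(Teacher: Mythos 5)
Your argument is correct and follows essentially the same route as the paper's: the key input is the equality case of data processing, $I(\V_0:\E_0\,|\,\H_0)=0$, combined with the nearest-neighbour separation property $\E_L\perp\E_R\,|\,\V_0$, which is precisely the content of the paper's Lemma in Appendix B. The only difference is presentational -- you chain the conditional independences directly on the coarse-grained environment variables, whereas the paper first establishes the factorization $P(\E_L,\E_R|\H_0)=P(\E_L|\H_0)P(\E_R|\H_0)$ for the fine-grained environments and then pushes it through the block-local coarse-graining by explicit summation, handling the same support caveat you flag.
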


An immediate consequence of the above is:
\begin{corollary}
	The effective Hamiltonian does not contain terms directly coupling $\E_{L}(\H_0)$  and $\E_{R}(\H_0)$.
\end{corollary}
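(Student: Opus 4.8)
The plan is to exploit the elementary dictionary between the effective measure and the effective Hamiltonian. From the definition of the renormalized measure $P(\X')\propto e^{\K'[\X']}$ (cf.\ Eq.~\eqref{eq:RG_IT_Ham}), the reduced effective Hamiltonian is simply the logarithm of the coarse-grained probability, $\K'[\{\H_j\}]=\log P(\{\H_j\})+\mathrm{const}$. A ``term directly coupling'' $\E_L(\H_0)$ and $\E_R(\H_0)$ is, by definition, a summand in $\K'$ that depends \emph{simultaneously} on at least one variable of $\E_L=\{\H_{j\leq-2}\}$ and one of $\E_R=\{\H_{j\geq2}\}$. The Corollary is therefore equivalent to the statement that $\log P$ splits into additive pieces, none of which contains both an $\E_L$- and an $\E_R$-variable, and this is what I would aim to establish.

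First I would feed in the Proposition. Rewriting the factorization~\eqref{eq:factorization1} in terms of joint distributions, $P(\E_L,\E_R,\H_0)=P(\E_L,\H_0)\,P(\E_R,\H_0)/P(\H_0)$, and taking the logarithm immediately yields the additive decomposition
\begin{equation}
\K'[\E_L,\H_0,\E_R]=f(\E_L,\H_0)+g(\E_R,\H_0)+h(\H_0)+\mathrm{const},
\end{equation}
with $f=\log P(\E_L,\H_0)$, $g=\log P(\E_R,\H_0)$, and $h=-\log P(\H_0)$. No single summand involves both $\E_L$ and $\E_R$, which is precisely the claim. Equivalently, one may phrase this graph-theoretically: by the Hammersley--Clifford correspondence the interaction terms of a positive measure are the cliques of its dependency graph, and the conditional independence $\E_L\perp\E_R\mid\H_0$ forces $\H_0$ to separate the two environments, so that no clique, and hence no interaction term, can contain vertices from both sides.

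The one point requiring care, and the step I expect to be the main obstacle, is that the Proposition is stated for the measure with the buffer variables $\H_{\pm1}$ \emph{integrated out}, whereas the physical effective Hamiltonian is defined on the full set $\{\H_j\}$; since marginalization can in principle generate couplings, I must verify that the absence of an $\E_L$--$\E_R$ term in the marginalized Hamiltonian is equivalent to its absence in the full one. Here I would invoke the geometry of the nearest-neighbour block partition: each buffer variable $\H_{-1}$ (respectively $\H_{+1}$) is adjacent only to $\H_0$ and to $\E_L$ (respectively $\E_R$), so the sum over $\H_{\pm1}$ factorizes into a purely left-dependent factor $e^{\phi_L(\E_L,\H_0)}$ and a purely right-dependent factor $e^{\phi_R(\E_R,\H_0)}$. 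Integrating out the buffer can therefore only renormalize the $\E_L$--$\H_0$ and $\E_R$--$\H_0$ interactions and can neither create nor mask a genuine $\E_L$--$\E_R$ coupling, so the additive structure established above lifts verbatim to the full effective Hamiltonian, completing the argument.
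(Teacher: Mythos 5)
Your core argument is exactly the paper's: the factorization of Proposition~1 is rewritten as $P(\E_L,\E_R,\H_0)=P(\E_L,\H_0)P(\E_R,\H_0)/P(\H_0)$ and the logarithm gives the additive split $E(\E_L,\H_0)+E(\E_R,\H_0)+E(\H_0)$, which is precisely Eq.~(\ref{eq:factorization2}); the Hammersley--Clifford rephrasing is a nice equivalent gloss but adds nothing essential. The one place you diverge is the final ``lifting'' step, and there your argument does not go through as written: you justify that integrating out $\H_{\pm1}$ cannot create or mask an $\E_L$--$\E_R$ coupling by asserting that $\H_{-1}$ is adjacent only to $\H_0$ and $\E_L$ \emph{in the effective model}, but that adjacency structure of the renormalized Hamiltonian is exactly what is not yet known --- only the original Hamiltonian on $\X$ is guaranteed nearest-neighbour in the blocks, so the step is circular. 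The paper sidesteps this: the Corollary is stated (and proved) for the buffer-marginalized measure, and the connection to the full effective Hamiltonian is made only afterwards, via the contrapositive observation that any longer-range coupling involving $\H_{\pm1}$ would \emph{generically} spoil the factorization after marginalization, with the fine-tuned-cancellation loophole explicitly acknowledged in Appendix~\ref{appendix:proofRange}. So your worry about marginalization generating couplings is legitimate and well placed, but your proposed resolution should either be replaced by the paper's ``generic'' argument or be demoted to the same caveated status.
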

This is because the factorization Eq. (\ref{eq:factorization1}) implies:
\begin{gather}\label{eq:factorization2}
E(\H_{j \leq -2}, \H_0,\H_{j \geq2}) \propto \log\left[ P(\H_{j \leq -2}, \H_0,\H_{j \geq2}) \right] \\
 =  E(\E_{L}, \H_0) + E(\E_{R}, \H_0) + E(\H_{0}) \nonumber
\end{gather}

Since the variables $\E_{L}(\H_{0})\coloneqq\{\H\}_{j<-1}$ and $\E_{R}(\H_{0}) \coloneqq \{\H \}_{j>1}$ are decoupled \emph{after} integrating out the buffer $\H_{\pm 1}$ there generically would not have been any longer-range interaction (in particular: next-nearest neighbour) involving $\H_{\pm 1}$ in the renormalized Hamiltonian, or the measure would not factorize. Since the choice of $\V_0$, $\E_0$ was arbitrary in the first place, we have:
\begin{corollary} For a finite range Hamiltonian, if $I(\H_0 : \E_0) = I(\V_0 : \E_0)$, the RSMI coarse-graining is guaranteed not to increase the range of interactions.
\end{corollary}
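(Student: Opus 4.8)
The plan is to derive the final Corollary from the Proposition and its first Corollary by exploiting the fact that the block $\V_0$ (hence the center $\H_0$, the buffer $\H_{\pm1}$, and the split $\E_0=\E_{L}\cup\E_{R}$) was chosen \emph{arbitrarily}. Concretely, I would fix the renormalized Hamiltonian $\K'[\{\H_j\}]$ defined implicitly by Eq.~(\ref{eq:RG_IT_Ham}), let $R$ denote its interaction range, and prove $R\leq1$ (the range of the original block Hamiltonian) by contradiction: I assume a coupling of range $d\geq2$ is present and show it is incompatible with the factorization Eq.~(\ref{eq:factorization1}) for a suitably recentered choice of visibles.

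The first step is to make precise the mechanism turning $\K'$ into the buffer-integrated effective energy $E$ of Eq.~(\ref{eq:factorization2}). Marginalizing over the buffer $\H_{\pm1}$ transforms the interaction graph of $\K'$ into that of $E$, and the relevant graphical-model fact is that a new coupling between two retained sites is generated exactly when they are joined by an interaction path that passes \emph{only} through the eliminated buffer sites. In a strictly nearest-neighbour renormalized Hamiltonian the single path linking $\E_{L}$ to $\E_{R}$ through the buffer is $\E_{L}\!-\!\H_{-1}\!-\!\H_0\!-\!\H_1\!-\!\E_{R}$, which necessarily traverses the center $\H_0$; eliminating only $\H_{\pm1}$ therefore leaves $\E_{L}$ and $\E_{R}$ coupled \emph{solely} through $\H_0$, in agreement with the factorization. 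The point I would stress is that any extra interaction of range $\geq2$ touching a buffer site supplies an \emph{alternative} path bypassing $\H_0$, so that integrating out the buffer would generate a direct $\E_{L}$--$\E_{R}$ term in $E$, contradicting Corollary~1.

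To close the argument I would use the arbitrariness of the center. Given a putative coupling between $\H_a$ and $\H_b$ with $b-a=d\geq2$, recenter on $c=a+1$ so that $\H_a=\H_{c-1}$ always sits in the buffer: for $d=2$ the partner $\H_b=\H_{c+1}$ is the other buffer site (a buffer--buffer bridge), while for $d\geq3$ one has $\H_b\in\E_{R}$ (a buffer--environment bridge). In either case eliminating the buffer links $\E_{L}$ to $\E_{R}$, which Corollary~1 forbids. Since $c$ ranges over \emph{every} block, no term of range $\geq2$ can survive, giving $R\leq1$. For the longer-range cases $d\geq4$ one additionally has the cleaner, auxiliary-coupling-free option of centering strictly between $a$ and $b$ (i.e.\ $a+2\leq c\leq b-2$), so that the term connects $\E_{L}$ and $\E_{R}$ \emph{directly} and is forbidden outright.

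The main obstacle is the graphical-model bookkeeping of the marginalization: I must show rigorously that eliminating the buffer produces an $\E_{L}$--$\E_{R}$ coupling precisely when a bypassing path exists, and control the ``generic'' caveat flagged in the main text, since the bridge for $d\in\{2,3\}$ relies on the auxiliary nearest-neighbour couplings $\H_{c-2}\!-\!\H_{c-1}$ (and $\H_{c+1}\!-\!\H_{c+2}$) being present and not conspiring to cancel. I would handle this either by arguing at the level of the \emph{maximal}-range coupling, for which such cancellations are excluded, or by phrasing the conclusion as a statement about the support of the couplings generated by Eq.~(\ref{eq:RG_IT_Ham}). A secondary issue is the claimed validity in any dimension: the same reasoning should go through by surrounding the central block with a one-block-thick buffer shell and partitioning $\E$ into the disjoint directional environments separated by $\V_0$, the factorization then forbidding any renormalized coupling reaching across $\V_0$; verifying that this shell geometry isolates all beyond-range terms is where the higher-dimensional case requires the most care.
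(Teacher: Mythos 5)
Your proposal is correct and follows essentially the same route as the paper: it uses the factorization of Proposition 1 (equivalently Corollary 1) to argue that any coupling of range $\geq 2$ would, after integrating out the buffer $\H_{\pm1}$, produce a direct $\E_L$--$\E_R$ coupling bypassing $\H_0$, and then invokes the arbitrariness of the block center to exclude all such terms. You merely spell out more explicitly the case analysis ($d=2,3$ via buffer bridges, $d\geq4$ directly) and the ``generic'' no-cancellation caveat, both of which the paper handles in a single terse sentence.
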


This generalizes, under very mild additional assumptions, to any dimension $D$. Taking a coarse-graining with blocks sufficiently large to make the short-ranged Hamiltonian nearest-neighbour, and under the assumption of full information capture, we repeat the above reasoning, conditioning on -- instead of a single arbitrary variable $\H_0$ -- a hyperplane of dimension $D-1$, separating the coarse grained variables $\{\H_j  \}$ into two disconnected sets, to show that no longer-ranged interactions across the hyperplane can exist. Since the choice of hyperplane is arbitrary, the effective Hamiltonian is nearest-neighbour, as the original one was (see Appendix \ref{appendix:proofRange}). A \emph{perfect} RSMI scheme does not, therefore, increase the range of a short-ranged Hamiltonian. i.e. its complexity.

While a strong property, the above results appear to have one serious shortcoming: for a generic physical system and coarse-graining scheme \emph{it may not be possible} to satisfy the assumption $I(\H_0 : \E_0) = I(\V_0 : \E_0)$, which is a strict upper bound on $I(\H_0 : \E_0)$, \emph{for any RG rule}.
This is due to the fact that the block size, as well as the number and character (Ising spin, Potts spin, ...) of coarse-grained variables are usually chosen \emph{a priori}, and given those constraints a solution $P_{\Lambda}(\mathcal{H}|\mathcal{V})$ satisfying $I(\H_0 : \E_0) = I(\V_0 : \E_0)$ is not mathematically guaranteed to exist (see Sec.~\ref{sec:toymodels} for examples). 
This, however, is only a superficial problem. First, Proposition 1 is a sufficient, and not a necessary condition. Much more importantly, the RSMI prescription is a variational principle. If the physics of the problem and constraints imposed preclude the existence of a ``perfect" solution, as is usually the case, the maximization of RSMI still yields \emph{the best possible one}, given the conditions. A mathematical proof of this statement requires establishing decay of some measures of the effective Hamiltonian complexity (such as range and the ones we consider below) as a function of the mutual information. In the absence of such rigorous result, in what follows we instead study, analytically and numerically, tractable models and verify that this decay indeed holds empirically i.e.~the more mutual information RG rule captures, the smaller complexity of effective Hamiltonian. Furthemore, we show this also holds in the presence of disorder (see Sec.~\ref{sec:disorder}).

We now investigate a realistic setup, in which the RSMI is maximized under the constraint of number and type of coarse-grained degrees of freedom. Additionally, since the RG rule is optimized iteratively, we study the approach to the optimal solution via the properties of the renormalized Hamiltonian defined by the RG rule at any stage of the procedure. We briefly review how the effective Hamiltonian can be expressed by appropriate cumulant expansion~\cite{Niemeyer1974} (though the RSMI algorithm deals with probability measure as the basic object, and at no point computes the Hamiltonian, the Hamiltonian is more interpretable physically) and we apply this machinery to the Ising chain with and without disorder.


\section{Soluble example: Arbitrary RG transformations of the clean 1D Ising Model}\label{sec:Ising}
	
To investigate the relationship between the renormalized Hamiltonian and the real-space mutual information for practical coarse-graining procedures, we consider the example of the one-dimensional Ising model with nearest-neighbor interactions and periodic boundary conditions. We deliberately use this simple model, since it allows to analytically derive properties not only of the optimal RG procedure (which we do first), but also those of \emph{arbitrary} coarse-grainings: both the effective Hamiltonian and the amount of RSMI captured can be calculated explicitly and without any arbitrary truncations to establish the relation beween them.
The Ising Hamiltonian reads:
	\begin{equation}
	\K[\X] = K\sum_{i = 1}^N x_ix_{i+1},
	\label{eq:Ham_Ising}
	\end{equation}
with $x_i=\pm 1$ collectively denoted by  $\X=\{x_i\}_{i=1}^N$ and with $K:=-\beta J$. The sizes of the block, buffer and environment regions, introduced in Sec.\ \ref{sec:RSMI}
are given by $L_\V$, $L_\B$ and $L_\E$.  Accordingly, there are $n=N/L_\V$ blocks.

To best illustrate the results we now specialize to the (typical) case of blocks of	two visible spins $\V=\{v_1,v_2\}$, coarse-grained into a single hidden spin $h$ (computations for general $L_\V$ are analogous).
The RG rule is parametrized by an RBM ansatz:
	\begin{equation}
		P_\Lambda(\H|\V) = \frac{1}{1+e^{-2h\sum_{i}\lambda_i v_i}},
		\label{eq:RG_rule_RBM_ansatz}
	\end{equation}
	with $\Lambda=(\lambda_1,\lambda_2)$ describing the quadratic coupling of visible to hidden spins
	(see \aref{RBMansatz} for discussion of the ansatz). In Fig.(\ref{fig:system1Dising})  the decomposition of the system and the RG rule
	are schematically shown.
		
The standard decimation and the majority rule coarse-graining schemes are given in our language by:

	\begin{equation}
	P_{\textrm{dec}}(h|\{v_1,v_2\}) = \piecewise{1,&h=v_1\\0,&h\neq v_1}, \mbox{\ \ \ \ \ \ \ }
	\label{eq:decimation_rule}
	\end{equation}
	and by:
	\begin{equation}
	P_{\textrm{maj}}(h|\{v_1,v_2\}) = \piecewise{
		1,&v_1=v_2=h\\0,&v_1=v_2\neq h\\\frac{1}{2},&v_1\neq v_2},
	\label{eq:majority_rule}
	\end{equation}
	respectively. They correspond to the choice of $\Lambda_{\textrm{dec}}=(\lambda,0)$ and $\Lambda_{\textrm{maj}}=(\lambda,\lambda)$ in the limit $\lambda \rightarrow\infty$.
	
	For the case of decimation an exact calculation using the transfer matrix approach yields an effective Hamiltonian of the same nearest-neighbour form, albeit with a renormalized coupling constant~\cite{Kramers1941,Onsager1944}:
		\begin{equation}
		K' = \frac{1}{2}\log(\cosh(2K)).
		\label{eq:ren_coupling_dec}
		\end{equation}
	
	For the majority rule, and any other choice of parameters $\Lambda$, the renormalized Hamiltonian cannot be obtained in a closed form, but can still be derived analytically. To this end we split it into two parts~\cite{Niemeyer1974}:
	\begin{equation}
	\K[\X] = \K_\intra[\X] + \K_\inter[\X],
	\label{eq:Ham_decomp_intra_inter}
	\end{equation}
	where $\K_\intra$ contains \emph{intra}-block and $\K_\inter$ \emph{inter}-block terms. Using the cumulant expansion the new Hamiltonian is given perturbatively:
	\begin{equation}
	\K'[\X'] = \log(Z_\intra P_\restintra(\X')) + \sum_{k=0}^\infty\frac{1}{k!}C_k[\X'],
	\label{eq:ren_Ham}
	\end{equation}
	where the cumulants $C_k$ can be expressed in terms of averages of the form
	$\avg{\K_1[\X]^k}_\restintra$, which factorize into averages of operators
	from a single block (see Appendix \ref{sec:app_cumulantexp} for details). The renormalized coupling constants are not	apparent in Eq.(\ref{eq:ren_Ham}).
	In order to identify them we introduce the following canonical form of the Hamiltonian:
	\begin{equation}
	\K'[\X'] = K_0' + \sum_{\{\alpha_\ell\}_{\ell=1}^n}
	K'_{\alpha_1,\alpha_2,\dotsc,\alpha_n}
	\left(\sum_{j=1}^n\prod_{\ell=1}^n(x_{j+\ell}')^{\alpha_\ell}\right),
	\label{eq:Ham_canonical}
	\end{equation}
	with $\alpha_1= 1$ and $\alpha_\ell\in\{0,1\}$ for all $\ell>1$.
	Here, addition of the indices is to be understood modulo $n$ (i.e. with periodic boundary conditions).
	Note that arbitrary orders $k$ of the cumulant expansion $C_k$ contribute to each coupling constant $K'_{\alpha_1,\alpha_2,\dotsc,\alpha_n}$.
	
		\begin{figure}
			\centering
			\includegraphics{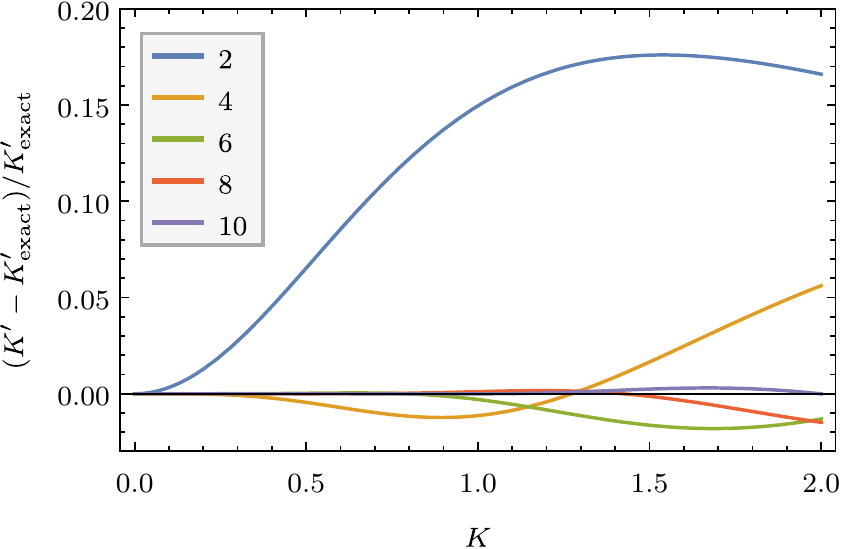}
			\caption{The relative difference between the renormalized NN coupling obtained from the cumulant expansion of the RSMI-favoured solution (decimation), and the nonperturbative result Eq.(\ref{eq:ren_coupling_dec}). The convergence improves with increasing order of cumulant expansion $M$  and lower $K$.}
			\label{fig:compexactdec}
		\end{figure}	
		
				\begin{figure*}[th]
					\centering
					\includegraphics{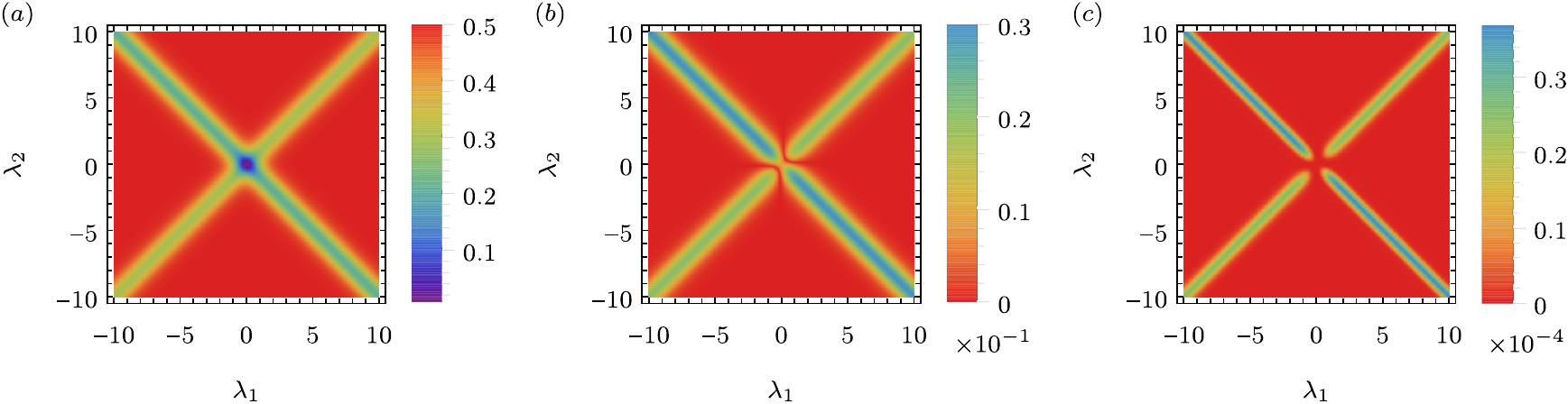}
					\caption{Density plots of (a) the mutual information of the hidden with the environment
						scaled to the mutual information of the visibles with the environment $I_\Lambda(\H:\E)/I(\V:\E)$,
						(b) the ratio of the next-nearest-neighbour (NNN) to the nearest-neighbour (NN) coupling constants $|K_ 2'(2)/K_2'(1)|$ and
						(c) the ratio of the NN  four-point to two-point coupling constants $|K_ 4'(1,1,1)/K_2'(1)|$.
						All three quantities are shown as a function of the parameters of the RG rule $\Lambda=(\lambda_1,\lambda_2)$. Note the inverted color scale in (b) and (c)!
						For large enough $||\Lambda||^2$ a maximum of mutual information corresponds to a minimum of ``rangeness" and ``m-bodyness", and vice versa. See the main text and Appendix \ref{appendix:lv2plots} for details.
					}
					\label{fig:densityplotlv2}
				\end{figure*}	

   In the example of the Ising model the only non-vanishing averages contributing to the cumulants are:
	\begin{IEEEeqnarray}{rCl}
		\IEEEyesnumber\label{eq:effbparams}
		\avg{v_1}_\restblock[h] &=:& a_1h,\IEEEyessubnumber\\
		\avg{v_{2}}_\restblock[h] &=:& a_2h,\IEEEyessubnumber\\
		\avg{v_1v_{2}}_\restblock[h] &=:& b,\IEEEyessubnumber
	\end{IEEEeqnarray}
	with the effective block-parameters $a_1$, $a_2$, $b$ independent of the coarse-grained variable $h$ and functions
	of $\Lambda$ and $K$ only, whose closed form expressions can easily be found (see Appendix \ref{app:Ising}).
	Consequently, the averages  $\avg{\K_\inter^k}_\restintra$, and thus also the Hamiltonian $\K'$, are polynomials in the new degrees of freedom $\X'$, the reduced temperature $K$ and the block parameters,  which gives rise to Eq.(\ref{eq:Ham_canonical}).
	In practice the cumulant expansion is terminated at a finite order $M$, which results in an expansion of $\K'$ and thus of each coupling constant 
	$K'_{\alpha_1,\alpha_2,\dotsc,\alpha_n}$ up to that order in $K$.
	All the information about the RG rule (except for the size of $\H$, which is fixed at the outset) is contained in the dependence of the effective block-parameters on $\Lambda$ (and on $N$, $K$).

	Expressing the moments $\avg{\K_1^k}_\restblock$ appearing in the cumulant
	expansion in terms of the new variables $\X'$ is a combinatorial problem.
	Each term in $\K_1$ couples spins from neighboring blocks $j$ and
	$j+1$, so that:
	\begin{equation}
	\K_1[\X]^k = K^k\sum_{j_1,\dotsc,j_k=1}^n\prod_{\ell=1}^k x_{2j_\ell}x_{2j_\ell +1}.
	\label{eq:K1k}
	\end{equation}
	The average of each summand factorizes into contributions from each block, whose value [see Eq.(\ref{eq:effbparams})] is determined by the arrangement of $j_1,\dotsc,j_k$. Thus, the calculation is reduced to finding and grouping all equivalent (under the fact that for Ising variables $x_j^2=1$) configurations $(j_1,\dotsc,j_k)$.
	Bringing the resulting polynomial in canonical form (\ref{eq:Ham_canonical}) is an inverse problem and is solved
	by recursively eliminating non-canonical terms. For a given $M$ we can thus finally arrive at expression of coupling constants $K'_{\alpha_1,\alpha_2,\dotsc,\alpha_n}$ as functions of $\Lambda$ and $K$ (see Appendix \ref{app:Ising} for details).

	We are now in a position to examine the effective Hamiltonian obtained by applying the RSMI-maximization procedure  Fig.(\ref{fig:rsmimalg}) to the model Eq.(\ref{eq:Ham_Ising}). Anticipating the results in Fig.(\ref{fig:densityplotlv2}), in Fig.(\ref{fig:compexactdec}) we compare, for varying $K$ and order of cumulant expansion $M$,  the renormalized nearest-neighbour (NN) coupling obtained in the RSMI-favoured solution with the exact, nonperturbative one Eq.(\ref{eq:ren_coupling_dec}) [which we refer to as ``exact decimation'']. The two results converge with increasing $M$, and the convergence is faster for weak coupling/higher temperatures, which is unsurprising since the cumulant expansion is in powers of $K$. We emphasize again that the RSMI algorithm itself works on the level of the probability measure, and at no point does it compute the effective Hamiltonian. It is only when we want to examine the renormalized Hamiltonian which the converged -- in the sense of saturating the mutual information during optimization of the $\Lambda$ parameters -- RSMI solution corresponds to, that we are performing the cumulant expansion.
	
		\begin{figure*}
			\centering
			\includegraphics{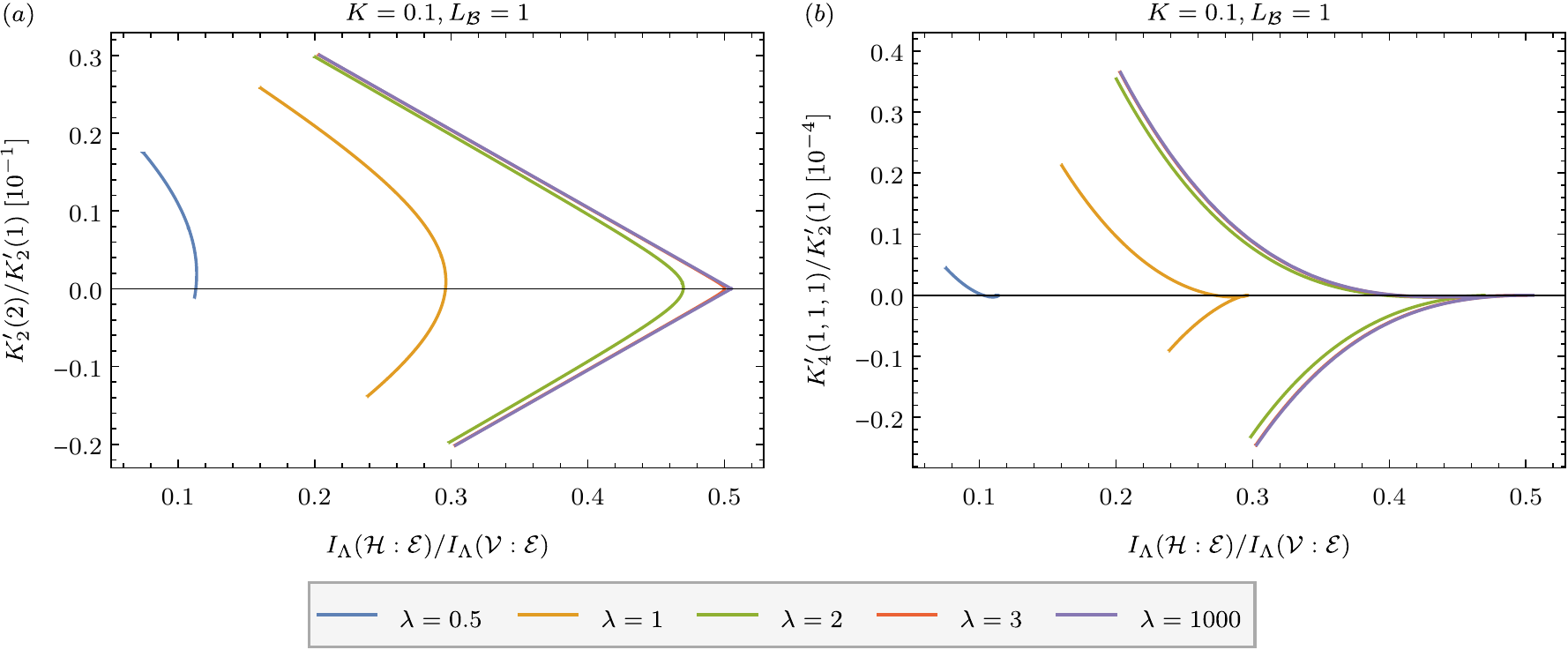}
			\caption{
				The two proxy measures of complexity of the renormalized Hamiltonian discussed in the text are shown against mutual information retained: the ``rangeness" i.e. the ratio of the NNN to the NN coupling constants, and the "m-bodyness" i.e. the ratio of the NN four-point to two-point coupling constants. The mutual information is scaled to the total mutual information the block $\V$ shares with the environment. The curves are obtained by parametrizing the RG rule as
				$\lambda(\cos(\theta),\sin(\theta))$ and varying $\theta\in[0,\pi]$ for different magnitude of $\lambda$. In the physically relevant limit of large $\lambda$ the maximum of mutual information corresponds to a minimum of ``rangeness" and ``m-bodyness". The plots are discussed in more detail in Appendix \ref{appendix:lv2plots}.
			}
			\label{fig:rangenesslv2}
		\end{figure*}
	
	Since ``exact decimation" leads to a strictly NN effective Hamiltonian in the 1D Ising case, and since perturbatively the RSMI-favoured solution converges to the decimation value for the NN coupling, it is instructive to inspect the behaviour of the $m$-body couplings in the effective Hamiltonian for larger order $m$.
	Denoting the $m$-spin coupling with distances $\ell_1,\ell_2,\dotsc,\ell_m$ between the spins by $K_m(\ell_1,\ell_2,\dotsc,\ell_m)$, with $K_m(\ell)$ short for
	$K_m(\ell,\ell,\dotsc,\ell)$, we observe that, in the limit of weak coupling (small $K$), both $\ell\mapsto K_2(\ell)$, i.e. arbitrary range two-body interactions, as well as $m\mapsto \abs{K_m(1)}$, i.e. arbitrary order NN-interactions, decay exponentially. This is shown in Figs.(\ref{fig:appendix:k2decay}) and (\ref{fig:appendix:kmdecay}) in Appendix \ref{app:Ising}.
	The decay length is characterized by $K_2(2)/K_2(1)$ and $K_m(1)/K_2(1)$, respectively. Thus, the RSMI approach indeed converges to the ``exact decimation" in this case, which is known to be the optimal choice. 
	
	To further strengthen the link between the amount of RSMI retained and the resulting properties of the effective Hamiltonian we now consider a \emph{generic} coarse-graining, suboptimal from the RSMI perspective (i.e. away from the maximum the RSMI algorithm strives for). To this end we compute the mutual information  $I_\Lambda(\H:\E)$ captured for the Ising model by a general coarse-graining rule Eq.(\ref{eq:RG_rule_RBM_ansatz}) with parameters $\Lambda = (\lambda_1,\lambda_2)$. This calculation can be performed exactly using the transfer matrix method (see \aref{MI_Ising}) and yields a closed form expression Eq.(\ref{eq:d33}).

	Equipped with these results, for an arbitrary coarse-graining defined by a choice of $\Lambda$, we can now compute both the amount of mutual information 
	with the environment retained (RSMI), as well as the effective Hamiltonian generated. In Fig.(\ref{fig:densityplotlv2}a) the amount of information captured is shown as a function of $(\lambda_1,\lambda_2)$, in units of $I(\V:\E)$ (for concreteness, all plots are for $K=0.1$ and a single site buffer: $L_\B=1$). A few observations can be made: the choices of $\Lambda$ retaining more RSMI are not symmetric in $|\lambda_1|$ and $|\lambda_2|$, but instead tend to $(\pm\lambda,0)$ and $(0,\pm\lambda)$ for large enough $|\lambda|$, i.e. they resemble decimation Eq.(\ref{eq:decimation_rule}) [the four plateaux in Fig.(\ref{fig:densityplotlv2}) are not exactly flat, as also examined in Fig.(\ref{fig:rangenesslv2})], as opposed to majority rule Eq.(\ref{eq:majority_rule}) which, in fact, captures the least information. The symmetries of the plot are due to global $\mathbb{Z}_2$ Ising symmetry as well as an additional $\mathbb{Z}_2$ symmetry of the mutual information: correlation and anti-correlation for random variables is equivalent from the point of view of information. Furthermore, the lack of information retained for small $||\Lambda||^2$ is due to the fact that in this case the coarse-graining Eq.(\ref{eq:RG_rule_RBM_ansatz}) only weakly depends on the visible spins and is essentially randomly assigning the value of the hidden spin (i.e. it is dominated by random noise). In other words, it only makes sense to think of Eq.(\ref{eq:RG_rule_RBM_ansatz}) as a  coarse-graining if it strongly depends on the original spins, i.e. for large $||\Lambda||^2$.

	The properties of the corresponding effective Hamiltonians can be understood with the help of Figs.(\ref{fig:densityplotlv2}b) and (\ref{fig:densityplotlv2}c), where the ratio of next-nearest-neighbour (NNN) to NN terms as well as the ratio of NN 4-body to 2-body terms in the effective Hamiltonian are plotted as a function of $\Lambda$ (note the inverted  color scale!). It is apparent that decimation-like choices, which maximize RSMI, result also in vanishing NNN and 4-body terms (and more generally long-range or high-order terms, as discussed previously and shown in Figs.(\ref{fig:appendix:k2decay}) and (\ref{fig:appendix:kmdecay}) in Appendix \ref{app:Ising}). This is examined in more detail in Fig.\ref{fig:rangenesslv2}: trajectories in the parameter space $\Lambda$ are chosen according to $\lambda(\cos(\theta),\sin(\theta))$ with $\theta\in[0,\pi]$, for different magnitudes $|\lambda|$. The ratios in Figs.(\ref{fig:densityplotlv2}b) and (\ref{fig:densityplotlv2}c), which we dubbed ``rangeness" and ``m-bodyness" for brevity, are plotted against the mutual information along the trajectories. The mutual information is maximized for $\theta = 0$ and $\theta=\pi$ and the maximum increases with $\lambda$ (though it saturates: there is little difference between $\lambda=3$ and $\lambda=1000$). Simultaneously, for large enough $|\lambda|$ both ratios in Figs.(\ref{fig:rangenesslv2}b,c) vanish, rendering the effective Hamiltonian two-body and nearest-neighbour. It is now clear how the RSMI maximization results in a decimation coarse-graining for the 1D Ising model. A more detailed discussion of Figs.(\ref{fig:densityplotlv2},\ref{fig:rangenesslv2}) [including asymmetries in Fig.(\ref{fig:rangenesslv2}a) and accidental vanishings in Fig.(\ref{fig:rangenesslv2}b)] can be found in the Appendix \ref{app:Ising}, but it does not change the general picture: maximizing RSMI results in decay of longer-ranged and higher-order terms in the Hamiltonian.
	
				\begin{figure*}[th]
					\centering
					\includegraphics{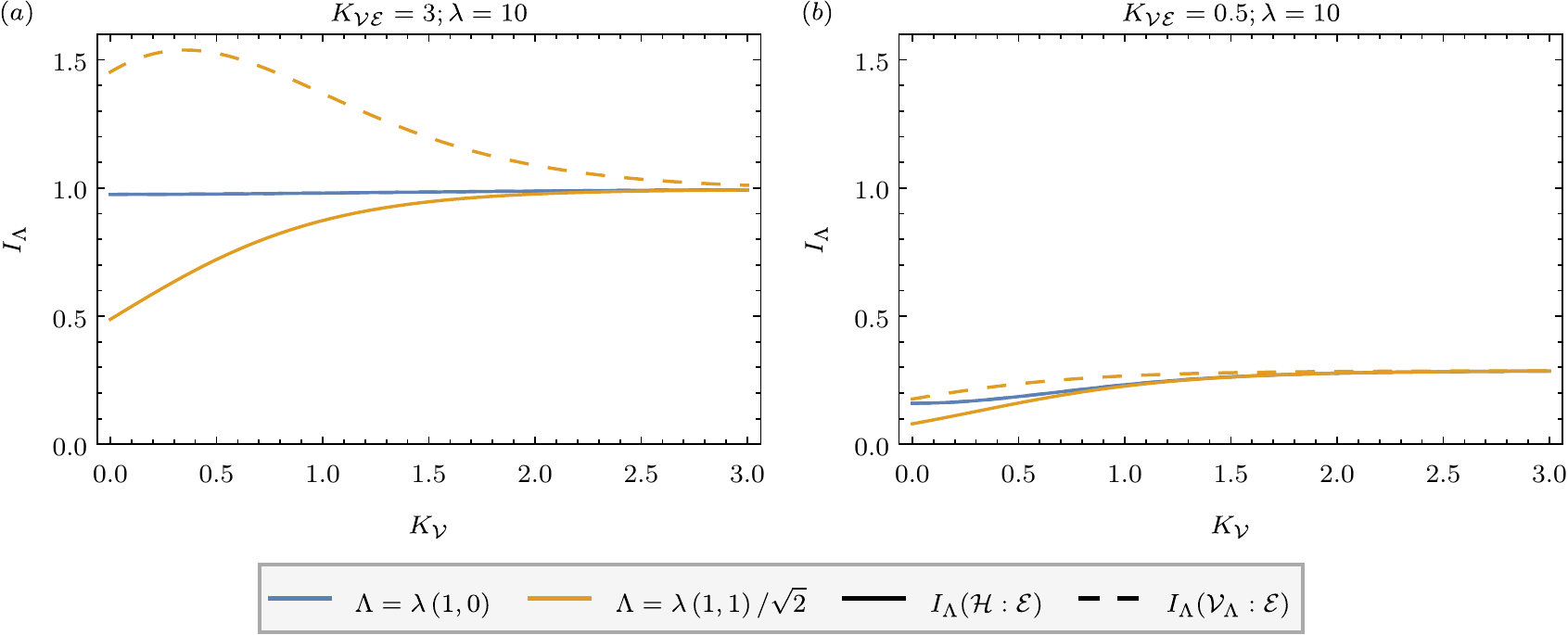}
					\caption{
						The mutual information $I_{\Lambda}(\H:\E)$ and $I(\V_\Lambda :\E)$ for decimation (blue) and majority rule (yellow) procedures in the 1D toy model Eq.(\ref{eq:toymodel_1D_Ham}).
						Two parameter regimes are shown:
						(a) Strong coupling to the environment/ low temperature $K_{\V\E}$ (recall that the coupling constants contain a factor of  $\beta = 1/k_BT$)
						(b) Weak coupling $K_{\V\E}$; note that the absolute value of all mutual informations are lower in this limit.
						The solid lines differ from the dashed lines of the same colour by the mismatch $I(\V_{\Lambda}:\E | \H)$ (see main text).
						In both parameter regimes the dashed blue line exactly coincides with the solid blue line: for decimation procedure the information $I(\V_\Lambda :\E)$ is perfectly encoded into $\H$. Majority rule $I_{\Lambda}(\H:\E)$ is inferior to decimation, even though $I(\V_\Lambda :\E)$ is significantly larger: all that information is lost in encoding. The distinction between the two rules vanishes in the $K_\V\rightarrow\infty$ limit when both visible spins are effectively bound into a single binary variable.
					}
					\label{fig:toymodel1d}
				\end{figure*}
	
	The superiority of decimation over majority rule in our example can be understood intuitively from a physical perspective by considering fluctuations of the original (visible) spins for a fixed (clamped) configuration of the new variables $\X'$.
	In 1D decimation fixes every other spin in $\X$, which prevents all but isolated fluctuations of the remaining degrees of freedom, which are being integrated out in the clamped averages of Eqs.(\ref{eq:app_RGeq_for_cumulants},\ref{eq:app_cumulant_expansion}).
	Consequently, only nearest neighbors in $\X'$ are coupled in the effective Hamiltonian.
	In contrast, the majority-rule fixes a linear combination of the visibles (the average), thereby allowing fluctuations of orthogonal linear combinations.
	These fluctuations can span multiple blocks and thus generate higher order coupling terms. In the following section an alternative, information based intuition is offered, which also explains the difference between the optimal coarse-graining procedures in 1 and 2D.
	
	Finally, we note that the results described above from a static perspective, i.e. considering properties of arbitrary coarse-graining, for a fixed, potentially suboptimal, choice of $\Lambda$, can also be interpeted dynamically. In this sense they would characterize the convergence of the RSMI algorithm of Ref.~\cite{Koch-Janusz2018}  as the $\Lambda$ parameters are iteratively optimized during the training [see Fig.(\ref{fig:rsmimalg})].

	
	\section{The ``shape" of  the coarse-grained variables}\label{sec:toymodels}

	So far we motivated on physical grounds (the properties of the effective Hamiltonian) why maximizing RSMI generally provides a guiding principle for constructing a real-space RG  procedure. We then investigated on the example of the 1D Ising system the properites of such a scheme in a typical situation, when the RSMI maximization problem is additionally constrained by the number and type of degrees of freedom the system is coarse-grained into. In particular, we gave physical intuitions which justify the solution RSMI converges to in the 1D case, i.e. decimation. This is to be contrasted with the situation in 2D, when the decimation procedure is known to immediately generate long-range and many-spin interactions and can be shown not to posses a nontrivial fixed-point at all \cite{vanEnter1993}. For the square-lattice Ising model in two dimensions the majority rule transformation is preferable: numerical evidence, at least, points to the existence of a fixed point~\cite{PhysRevB.30.3866}. Remarkably, the RSMI solution in 2D converges (numerically) towards a majority-rule block transformation (for 2-by-2 blocks) \cite{Koch-Janusz2018}. In this section we provide an information-theory based explanation of these observations. In doing so we also elucidate and quantify the non-trivial influence on the RG scheme of the constraints imposed by the properties (type and number) of the new coarse-grained variables, for the general case. Finally, we exemplify our findings using simple and intuitive toy models.

	To this end let us revisit the inequality Eq.(\ref{eq:MIbound_visibles}).  We refine it by explicitly introducing the random variables $\V_{\Lambda}$, which the hidden degrees of freedom $h_i \in \H$ couple to in a RG scheme parametrized by $\Lambda = \{\lambda_{ij}\}$. For instance, in the RBM-parametrization discussed previously, while generically $\H$ depends on the full $\V$, the coarse-graining defined by the conditional probability $P_{\Lambda}(\H|\V)$ only makes each $h_i\in\H$ dependent on the combination:
	\begin{equation}\label{eq:vlambda}
	\V_{\Lambda_i} = \frac{1}{||\Lambda_i||}\sum_j \lambda_{ij}v_j.
	\end{equation}
	Note that the overall normalization in the definition is not important, but only the relative stregths of $\lambda_{ij}$ which define the linear combination of degrees of freedom in the block. The following now holds:
	\begin{equation}\label{eq:infeqsnew1}
	I_{\Lambda}(\H : \E) \leq I(\V_{\Lambda} : \E) \leq I(\V : \E),
	\end{equation}
	that is: the information about the environment carried by the particular chosen variables $\V_{\Lambda}$ is potentially smaller that the overall information about the environment contained in the block $I(\V : \E)$. Still less of the information may ultimetely be encoded in the degrees of freedom $\H$. 
	
	Where do the inequalities Eq.(\ref{eq:infeqsnew1}) originate from? Formally this is because we have a Markov chain:
	\begin{equation}\label{eq:markov1}
	\E \rightarrow \V \rightarrow \V_\Lambda \rightarrow \H,
	\end{equation}
	but the more pertinent question is what can make those inequalities sharp. The second one is rather trivial: if we only decide to keep a few (one, as is often the case) variables $\V_{\Lambda_i}$, then their entropy may be simply too small to even store the full information $I(\V : \E)$. Still, for the same entropy, there may be choices of $\Lambda$ which result in bigger or smaller $I(\V_{\Lambda_i} : \E)$. Crucially though, $I(\V_{\Lambda_i} : \E)$ \emph{does not} depend on the nature of $h_i\in\H$ (i.e. on whether $h_i$ is a binary variable or not, for instance). It only characterizes how good the particular set of physical degrees of freedom $\V_\Lambda$ is at describing fluctuations in the environment $\E$. 
	
	Whether this information can be efficiently encoded in $\H$ is a different question entirely. The answer, and the origin of the first inequality Eq.(\ref{eq:infeqsnew1}), is revealed by:
	\begin{equation}\label{eq:chainruleMILambda}
	I_{\Lambda}(\H:\E) = I(\V_\Lambda :\E) - I(\V_{\Lambda}:\E | \H),
	\end{equation}
	where $I(\V_{\Lambda}:\E | \H)$ is the conditional mutual information and we have used the chain rule and the Markov property Eq.(\ref{eq:markov1}). Since $I(\V_{\Lambda} : \E)$ is independent of $\H$ in the sense described above, $I(\V_{\Lambda}:\E | \H)$ quantifies the failure of the encoding into $\H$ due to the properties of the $\H$ itself (conditional mutual information being always non-negative). We have thus managed to identify the contributions to $I_{\Lambda}(\H:\E)$ resulting from coupling to a certain choice of physical modes in $\V$,  and to isolate them from the losses incurred due to impossibility of encoding this information perfectly in a particular type of $\H$.
	
	The conditional probabilty distribution $I(\V_{\Lambda}:\E | \H)$  can be thought of as describing the mismatch of the probability spaces of the random variables $\H$ and $\V_\Lambda$, it tells us how much information is \emph{still} shared between $\V_\Lambda$ and $\E$ after $\V_\Lambda$ has been restricted to only values compatible with a given outcome of $\H$. For example, in the 1D Ising case we examined previously, the majority rule defines $\V_\Lambda = v_1 + v_2$, for which the set of possible outcomes is equivalent to $\{-1,0,1\}$. The entropy of $\V_\Lambda$ is bounded by and possibly equal to $\log_2(3)$. Since the system is $\mathbb{Z}_2$ symmetric, then unless $\mbox{Prob}[\V_\Lambda=0] = 0$, this cannot be faithfully encoded into \emph{any} probability distribution of a single binary variable $\H$. 
	Below we construct simple toy models to provide  more examples and intuitions for the somewhat abstract notions we introduced here.
	
	First, let us stress though, that the RSMI prescription maximizes $I_{\Lambda}(\H:\E)$ as a whole, and that, for a type of $\H$ fixed at the outset, the procedure \emph{cannot} be split into maximization of $I(\V_\Lambda :\E)$ followed by a linear coupling of $\H$ to the $\V_\Lambda$ found. Such a naive greedy approach does not necessarily lead to an optimal solution -- the toy models below provide an explicit counterexample. The RSMI-based solution of $P_{\Lambda}(\H|\V)$ thus converges to the optimal trade-off between finding the best modes in $\V$ to describe $\E$, and finding those, whose description can be faithfully written in $\H$ of a given type.
	
		 	\begin{figure}
		 		\centering
		 		\includegraphics{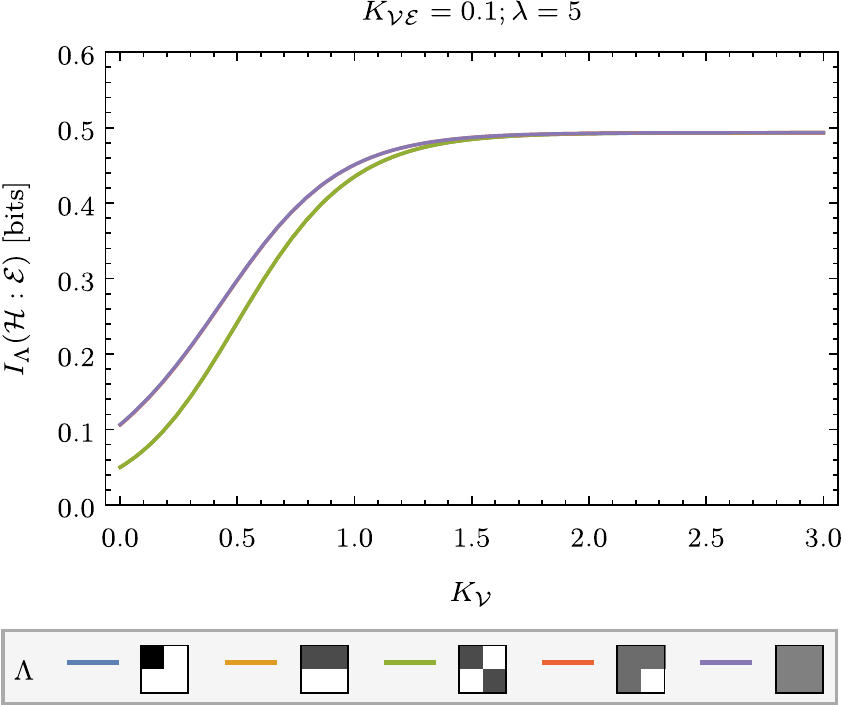}
		 		\caption{
		 			Mutual information $I_{\Lambda}(\H:\E)$ in the toy model of a 2D system Eq.(\ref{eq:toymodel_2D_Ham}), as a function
		 			of the coupling $K_\V$ between the visibles. The coupling pattern to the visibles in different RG rules $\Lambda$ is shown schematically in the (physically relevant) large $||\Lambda||^2$ limit. The majority rule (and interestingly, also coupling to three spins, depicted by a red line, coinciding with the purple one) consistently retains more information than decimation, or any coupling to two spins (blue and yellow, coinciding with green). Again, the distinction vanishes for large $K_\V$ when all visible spins are bound into a single one.
		 		}
		 		\label{fig:toymodel2d2}
		 	\end{figure}	
	
	To illustrate the above considerations we construct minimal toy models. In 1D this consists of four coupled Ising spins: $v_1,v_2$ in the block $\V$, and $e_1,e_2$ representing the left- and right- environment (in 1D the environment is not simply connected), with the Hamiltonian:
	\begin{equation}
	\K = K_{\V\E} (e_1v_1+v_2e_2) + K_\V v_1v_2,
	\label{eq:toymodel_1D_Ham}
	\end{equation}
	where, as before the coupling constants contain a factor of $\beta = 1/k_BT$.
	 The two spins in $\V$ are coupled to a single hidden spin $\H$ using an RBM-ansatz Eq.(\ref{eq:RG_rule_RBM_ansatz}) and the random variable $\V_\Lambda$ is defined as in Eq.(\ref{eq:vlambda}). In Fig.(\ref{fig:toymodel1d}) the results of the calculation of the mutual informations $I_{\Lambda}(\H:\E)$ and $I(\V_\Lambda :\E)$ for decimation and  the majority rule are shown. In the regime of strong coupling to the environment $K_{\V\E}$ [see Fig.(\ref{fig:toymodel1d}a)], for small $K_\V$ both visible spins are nearly independent and almost copy the state of the left- and right-environments, respectively. Consequently, $\V_\Lambda$ for the majority rule carries almost $\log_2(3)$ bits of information about the environment while  $\V_\Lambda$ for decimation, being a binary variable, at most one bit. However, when $I_{\Lambda}(\H:\E)$ is examined it becomes aparent that for decimation it is exactly equal to $I(\V_\Lambda :\E)$, while for majority rule it is significantly lower, so much so, that overall decimation is better across the whole parameter regime! The difference between the solid and dashed curves in Fig.(\ref{fig:toymodel1d}a) is precisely the mismatch of Eq.(\ref{eq:chainruleMILambda}), and the above provides a counterexample to a greedy maximization of $I(\V_\Lambda :\E)$ instead of $I_{\Lambda}(\H:\E)$, which was mentioned previously. In the large $K_\V$ limit both spins in $\V$ become bound into an effective single binary variable and the distinction between the two rules vanishes. In Fig.(\ref{fig:toymodel1d}b) we show the same in the regime when the spins in $\V$ are only weakly coupled to the environment (or the temperature is high). Again, decimation perfectly encodes information $I(\V_\Lambda :\E)$ into $\H$ and is overall better.
	 
	 Let us contrast this with the situation in higher (in particular: two) dimensions, when the environment is simply connected. Based on the discussion above, we may anticipate that the optimal solution could be different, and that majority rule may instead be preferable. This is because, on the one hand, for the same coupling strength to the environment and the same linear dimensions $L_{\V}=2$ of the block, the ratio of $I(\V_\Lambda :\E)$ for the majority rule to the one for decimation increases with increasing dimension (consequence of all visible spins interacting with \emph{the same} environment). On the other hand the mismatch $I(\V_{\Lambda}:\E | \H)$ for majority rule decreases, compared to 1D, since the probability of $\V_\Lambda = \sum_i v_i$ being zero is smaller. This fact is due both to dimensional considerations, as well as (again) the environment being simply connected, the importance of which, even in 1D, we illustrate in Appendix \ref{appendix:toymodel}.
	 
	 We verify those expectations using a simple toy model of the 2D setting: the environment is represented by a single random variable $E$ with a large number of states, to which all the spins in $\V$ couple. These states should be thought of intuitively as fluctuations of some large environment at wave-lenghts longer than the size of the coarse-graining cell. The Hamiltonian is:
	 	\begin{IEEEeqnarray}{rCl}
	 		\K &=& K_{\V\E} E (v_1+v_2+v_3+v_4)\IEEEnonumber\\
	 		&&\quad+\,K_\V (v_1v_2+v_1v_3+v_2v_4+v_3v_4).
	 		\label{eq:toymodel_2D_Ham}
	 	\end{IEEEeqnarray}
	 	As before, the spins in block $\V$ are coupled to a single hidden spin $\H$ with an RBM-ansatz parametrized by $\Lambda$.

	In Fig.(\ref{fig:toymodel2d2}) the mutual information $I_{\Lambda}(\H:\E)$ is computed for the model Eq.(\ref{eq:toymodel_2D_Ham}) for different course-graining rules given by $\Lambda$.  Indeed, the decimation is now inferior to the majority rule across the full parameter range. 
	This is also consistent with the known properties of decimation and majority rule for the 2D Ising model, and suggests their information-theoretic origin.

	
	\section{Disordered systems}\label{sec:disorder}
			\begin{figure}[t]
				\centering
				\includegraphics[width = 0.49\textwidth]{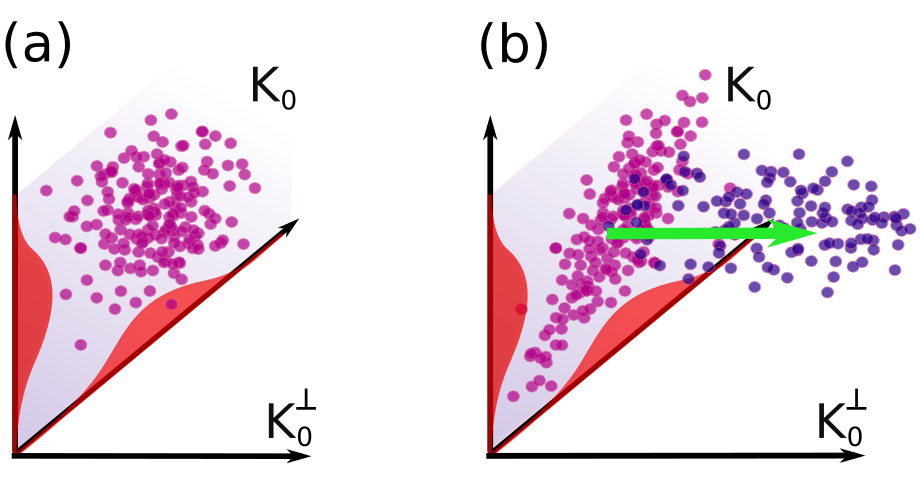}
				\caption{Generic behaviour of disorder probability under RG. Every point is a disorder realization (Hamiltonian). The shaded plane $K_0$ denotes the subspace of 2-body nearest-neighbour couplings; $K_0^\perp$ is the complement space of all other couplings. (a) Initially a \emph{factorized} distribution is usually assumed, schematically shown as a product of independent Gaussians. (b) After the RG step the distribution can develop correlations, depicted for simplicity in the $K_0$ plane, and additional couplings can be generated, resulting in probability mass leaking out of $K_0$. The former can be quantified by KL-divergence or distance correlation, the latter by the shift of the centre-of-mass of the distribution, depicted with a green arrow (see also Fig.~\ref{fig:disorderCorCM})
				}
				\label{fig:densityplot_disorder1}
			\end{figure}
	
	While investigations of clean higher-dimensional models (to which RSMI can be applied without any restriction), such as e.g.~the 3D Ising model, are still relevant, of much more interest are disordered systems. We show that RSMI naturally generalizes to this case, and that the information theoretic approach provides important insights, particularly concerning disorder correlations.
	
	The proper object of study in the disordered setting is not the individual Hamiltonian ${\Ham}$, but rather the disordered Hamiltonian \emph{distribution} $\Prob(\Ham)$ \cite{PhysRevLett.36.415,PhysRevLett.36.1508,doi:10.1063/1.324982}, which equivalently can be thought of as a distribution over the vector space spanned by all the possible coupling constants $\{K_{i_1,i_2,\ldots,i_M}\}$. Denoting the (potentially infinite dimensional) vector of couplings by $\bf{K}$, the RG transformation induces a mapping:
	\begin{equation}\label{eq:distr1}
	\Prob(\bf{K}) \rightarrow \Prob'(\bf{K}),
	\end{equation}
	generating RG-flows of $\Prob$ with fixed point distributions $\Prob^*$. The formalism subsumes the clean case: any fixed Hamiltonian is a trivial delta-like distribution with all probability mass concentrated in one point.	

				\begin{figure*}[th]
					\centering
					\includegraphics[width = 0.99\textwidth]{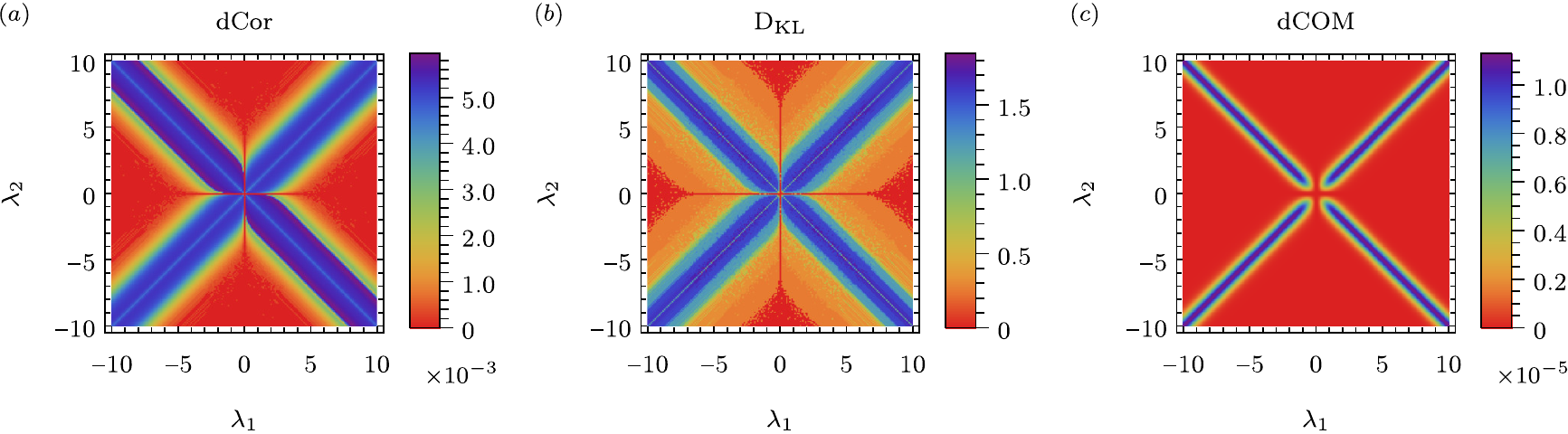}
					\caption{Properties of the renormalized disorder distribution as a function of the coarse-graining rule for the case of random dilute Ising chain. The RG rules are parametrized by $\lambda_1$, $\lambda_2$, as before. (a) The distance correlation (see the main text) $\rm{dCor}$ between renormalized distributions of two neighbouring NN couplings. The couplings are uncorrelated if and only if $\rm{dCor}$ vanishes -- compare with Fig.~\ref{fig:densityplotlv2}a. (b) An alternative measure of correlations is the Kullback-Leibler divergence $\rm{D}_{\rm{KL}}$, computed between the product of marginal distributions of neighbouring couplings and their joint distribution
						(c) The $K_0^\perp$ centre of mass of the disorder distribution $\rm{dCOM}$ (see the main text and Fig.~\ref{fig:densityplot_disorder1}). Note that all those quantities vanish when MI is maximized.
					}
					\label{fig:disorderCorCM}
				\end{figure*}
	
	Let us examine the mapping Eq.~(\ref{eq:distr1}). The probabilistic framework of Sec.~\ref{sec:RSMI} can also be used in this case.
	For any fixed disorder realization $\bf{K}$ the RG transformation (conditional probability distribution) is applied to the Gibbsian probability measure defined by the Hamiltonian $\Ham(\bf{K})$, and the new effective Hamiltonian $\Ham(\bf{K'})$ is implicitly defined exactly as in Eq. (\ref{eq:RG_IT_Ham}). The new coupling constants are in this way the functions of the old ones: $\bf{K'} = \bf{K'}(\bf{K})$, and can be recovered by solving the inverse problem. 
	Their distribution is obtained by integrating over $\Prob(\bf{K})$:
		\begin{equation}\label{eq:distr2}
		\Prob'(\bf{K'}) = \int \delta(\bf{K'}-\bf{K'}(\bf{K}))\Prob(\mathbf{K}) d\mathbf{K}
		\end{equation}
	Equation (\ref{eq:distr2}) appears trivial, but of course all the complexity of the problem is concealed in the functional dependence of $\bf{K'}$ on $\bf{K}$. The distribution $\Prob$ is usually assumed to be factorized into a product of independent distributions, over, say, bond strengths \cite{PhysRevLett.36.415,PhysRevLett.36.1508,doi:10.1063/1.324982,PhysRevB.23.3421,PhysRevE.64.066107,Angelini3328}. The flow of the distribution is then analyzed either analytically, or numerically, in terms of a variable characterizing the strength of disorder, i.e. the variance of the individual distribution factor in $\Prob$ \cite{Angelini3328}, by forcing a factorized parametrization at each stage. It is clear, however, that even if this (often unrealistic, since one can expect disorder in nearby areas to be correlated \cite{crystallographyDisorder}) assumption holds initially, the renormalized distribution need not necessarily necessarily obey it, except in special cases. Generically, coarse-graining the system introduces correlations in $\Prob$.  Additionally, as in the clean case, higher order and longer range couplings are generated, in effect shifting the disorder distribution away from the hyperplane defined by only nearest-neighbour couplings. Both effects, depicted in Fig.~\ref{fig:densityplot_disorder1}, increase the complexity of distribution and render the problem of computing and analyzing RG flows for disordered systems very challenging.

	The real-space RG transformations applied to disordered systems are either similar to those used in the clean case, i.e.~various decimation/Migdal-Kadanoff prescriptions, or based on the strong disorder RG \cite{PhysRevLett.43.1434,PhysRevB.22.1305}. We focus on block transformations, which have the advantage of maintaining a regular topology in higher dimensions \cite{sdrg-misc} (though the arguments below apply also when coarse-graining cells are chosen in a sequential, greedy fashion). The very same questions as in the translation-invariant setting need to be answered: is there a more fundamental reason - beyond a simple algebraic coincidence - why certain RG transformations should work better in particular cases? Is there a \emph{constructive} way to find the best such transformation within a certain class, for a given physical system?
	
	Our results suggest, that the answer to both questions is affirmative: beyond controlling the range of the interactions, RSMI maximization also suppresses generation of correlations in the renormalized distribution $\Prob$. As in Secs. \ref{sec:effHam} and \ref{sec:Ising}, we first prove that factorizability properties of $\Prob$ under the full-information capture assumption are stable to local changes in disorder, at least in (quasi)-1D systems, and \emph{imply suppression of disorder correlations}.
   Subsequently, we study the effect of arbitrary RG rules on the renormalized disorder distribution using a model system where the optimal solution is known, and the distribution can be computed (perturbatively) for arbitrary transformation.  
   The following counterpart to the Proposition 1 holds true (proof in Appendix \ref{appendix:proofRange}):
		\begin{proposition} Consider a disordered 1D system, with a factorizable (product) disorder distribution over, without loss of generality, nearest neighbour couplings. The choice $\Lambda^*$ of the optimal coarse graining of a block $\X_0$, satisfying $I(\X_0':\E_0) = I(\X_0:\E_0)$, and thus the factorization property of Proposition 1, are stable to local changes in disorder, provided those do not affect directly the block or the buffer, i.e.~are fully confined to the environment.
		\end{proposition}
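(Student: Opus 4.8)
The plan is to reduce the claim to a statement about sufficient statistics and then exploit the Markov (screening) property of the one-dimensional chain. First I would recall that, because the coarse-graining acts only on the block, one has the Markov chain $\E_0 \to \X_0 \to \X_0'$, so that the data-processing inequality gives $I(\X_0':\E_0) \le I(\X_0:\E_0)$. The optimality hypothesis $I(\X_0':\E_0) = I(\X_0:\E_0)$ is precisely the condition for equality, which holds if and only if $\X_0'$ is a \emph{sufficient statistic} of $\X_0$ for $\E_0$, i.e.\ $\E_0 \to \X_0' \to \X_0$ is also a Markov chain. The optimal rule $\Lambda^*$ is thus characterized, independently of any physical input, as the rule whose output is a sufficient statistic for the conditional family $\{P(\X_0 \mid \E_0 = e)\}_e$. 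Stability of $\Lambda^*$ therefore amounts to showing that this conditional family, up to the equivalence that defines sufficiency, does not change when the disorder is altered only inside $\E_0$.

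Second, I would make the screening explicit by integrating out the buffer. Since the Hamiltonian is nearest-neighbour with respect to the blocks, summing over the buffer spins produces a joint weight of the schematic form
\begin{equation}
P(\X_0,\E_0) \propto e^{\K_{\X_0}(\X_0)}\,\phi_L(\partial_L\X_0,\partial_L\E_0)\,\phi_R(\partial_R\X_0,\partial_R\E_0)\,w(\E_0),
\end{equation}
where $\partial_{L/R}\X_0$ are the boundary spins of the block, $\partial_{L/R}\E_0$ the environment spins adjacent to the buffer, $w(\E_0)$ collects the purely environmental weight, and each $\phi$ is assembled \emph{only} from the intra-block, block--buffer, buffer, and buffer--environment couplings. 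The point of the hypothesis ``changes confined to the environment, not touching block or buffer'' is exactly that it leaves $\K_{\X_0}$ and both $\phi_{L/R}$ untouched; it can only modify $w(\E_0)$ and the statistics of $\partial_{L/R}\E_0$.

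Third, I would read off the sufficient statistic. For fixed $\E_0$ the conditional $P(\X_0\mid\E_0)$ depends on $\X_0$ through $\K_{\X_0}(\X_0)$ and through $\phi_{L/R}$, and the \emph{dependence on} $\E_0$ enters only as multiplicative factors of the boundary spins; by the Fisher--Neyman criterion the minimal sufficient statistic is therefore $(\partial_L\X_0,\partial_R\X_0)$, a function of $\X_0$ whose very definition is fixed by $\phi_{L/R}$ and $\K_{\X_0}$. Because these objects are frozen under the allowed disorder change, the sufficient-statistic structure, and hence the optimal extraction rule $\Lambda^*$, is unchanged. Finally, since $\Lambda^*$ continues to realize $I(\X_0':\E_0)=I(\X_0:\E_0)$ for the modified system, Proposition 1 applies verbatim and the factorization property is preserved.

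The main obstacle I anticipate is conceptual rather than computational: one must cleanly separate \emph{which} combination of block spins carries the environmental information (the sufficient statistic, which is stable) from \emph{how much} information it carries and how it is distributed (which genuinely does change with the environment disorder). Concretely, the argument requires that the boundary configurations $(\partial_L\E_0,\partial_R\E_0)$ retain full support after the disorder change---guaranteed for finite couplings---so that the realized exponential family, and thus the sufficiency equivalence class, does not degenerate; and it requires care for general block size and for the quasi-1D case, where $\partial_{L/R}\X_0$ comprises several spins and one must check that the type of $\H$ fixed at the outset can still encode the same statistic. Handling these points, together with verifying that a $\Lambda^*$ achieving sufficiency exists within the imposed ansatz, is where the real work lies.
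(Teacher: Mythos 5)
Your argument is correct and reaches the paper's conclusion, but by a genuinely different route. The paper proceeds by direct manipulation of the Gibbs weight: it writes $\tilde{P}(\E_L,\E_R,\X_0)=\tfrac{Z}{\tilde{Z}}\,e^{\tilde{K}(\X_D)-K(\X_D)}\,P(\E_L,\E_R,\X_0)$, observes that the correction factor depends only on $\X_D\subset\E_L$ (up to a constant), applies the separability Lemma to the \emph{original} distribution, and absorbs the correction into a redefined $\tilde{P}(\E_L|\X'_0)$, arriving at $\tilde{P}(\E_L,\E_R|\X'_0)=\tilde{P}(\E_L|\X'_0)\,P(\E_R|\X'_0)$ with the right-hand factor \emph{literally unchanged}. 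You instead characterize optimality as sufficiency (equality in the data-processing inequality) and argue via the Fisher--Neyman factorization of the buffer-integrated weight that the conditional family $\{P(\X_0\mid\E_0=e)\}_e$ is frozen under environment-only disorder changes, so the same $\Lambda^*$ remains sufficient and Proposition 1 applies verbatim. Both proofs rest on the identical core fact---the disorder change multiplies the joint measure by a function supported in $\E_L$ alone, leaving $P(\X_0\mid\E_0)$ exactly invariant---but your version makes the information-theoretic content (why the \emph{same} rule stays optimal, not merely that some factorization survives) more transparent, whereas the paper's version explicitly exhibits that $P(\E_R|\X'_0)$ is unaffected, which is precisely the ingredient used afterwards to conclude that no disorder correlations are generated across $\X'_0$; if you want your proof to support that corollary you should add the one-line observation that $\tilde{P}(\E_R|\X_0)=P(\E_R|\X_0)$ by screening, hence $\tilde{P}(\E_R|\X'_0)=P(\E_R|\X'_0)$. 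Finally, your detour through \emph{minimal} sufficiency is heavier than needed and is what forces the non-degeneracy caveat you flag: since the conditional family itself is unchanged, any statistic sufficient for it before the change is sufficient after, and minimality never has to be invoked.
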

		
		Proposition 2 has two important consequences: (i) as seen from Eq.~\ref{eq:facto_appendix_th_tilde} of the proof, in the explicit factorization of the conditional probability distribution \emph{of the coarse-grained degrees of freedom} in the left and right environments (cf. Proposition 1), changes to the disorder realization in one of the coarse-grained environments do not affect the distribution of degrees of freedom in the other. This implies:
		\begin{corollary}
		The probability distribution of degrees of freedom in $\E_R(\X'_0)$ being completely insensitive to the choice of \emph{disorder realization} in $\E_L(\X'_0)$, there cannot exist any correlations \emph{in the renormalized disorder distribution} $\mathcal{P}$ between the regions $\E_L(\X'_0)$ and $\E_R(\X'_0)$ (i.e. no such correlations across $\X'_0$ are generated by the optimal coarse-graining). 
		\end{corollary}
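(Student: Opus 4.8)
The plan is to reduce the claimed absence of induced correlations to the elementary fact that deterministic functions of independent random variables are themselves independent, using Proposition~2 to control how the renormalized couplings inherit their dependence on the original disorder. First I would fix notation, writing the original coupling vector as $\mathbf{K} = (\mathbf{K}_0,\mathbf{K}_L,\mathbf{K}_R)$, where $\mathbf{K}_0$ collects the bonds internal to the block $\X_0$ and the buffer (held fixed, since by the hypothesis of Proposition~2 the local disorder changes are confined to the environment), $\mathbf{K}_L$ the bonds defining the left environment $\E_L(\X_0)$, and $\mathbf{K}_R$ those of the right environment $\E_R(\X_0)$. Because the disorder distribution is assumed factorized, $\Prob(\mathbf{K}) = \Prob_L(\mathbf{K}_L)\,\Prob_R(\mathbf{K}_R)$ up to the delta-like factor on $\mathbf{K}_0$, the random variables $\mathbf{K}_L$ and $\mathbf{K}_R$ are statistically independent to begin with; the entire content of the corollary is that the coarse-graining does not spoil this.

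The central step is to show that the renormalized couplings localized in the coarse-grained right environment, $\mathbf{K}'_R$, are a deterministic function $F_R(\mathbf{K}_R)$ of the original right disorder alone, and symmetrically $\mathbf{K}'_L = F_L(\mathbf{K}_L)$. This is exactly where I would invoke Proposition~2 and its consequence~(i): because the optimal rule $\Lambda^*$ maintains $I(\X'_0:\E_0) = I(\X_0:\E_0)$ for every environmental disorder realization, the factorization of Proposition~1, Eq.~(\ref{eq:factorization1}), holds realization-by-realization, and in its explicit form Eq.~(\ref{eq:facto_appendix_th_tilde}) the left conditional factor $P(\H_{j\leq-2}|\H_0)$ carries disorder dependence only through $\mathbf{K}_L$ while the right factor $P(\H_{j\geq2}|\H_0)$ depends only on $\mathbf{K}_R$. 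Since the renormalized couplings in $\E_R(\X'_0)$ are read off [via the logarithm of the renormalized measure, Eq.~(\ref{eq:RG_IT_Ham})] from the conditional distribution of the right coarse-grained degrees of freedom, and that distribution is insensitive to the left disorder realization, the right couplings cannot depend on $\mathbf{K}_L$; hence $\mathbf{K}'_R = F_R(\mathbf{K}_R)$, and symmetrically for the left. The same factorization, through Corollary~1, guarantees in addition that no renormalized term directly couples a left variable to a right one, so that the renormalized coupling vector genuinely splits into a purely-left and a purely-right block with no bridging component.

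Finally I would conclude by composing these facts: with $\mathbf{K}'_L = F_L(\mathbf{K}_L)$ and $\mathbf{K}'_R = F_R(\mathbf{K}_R)$ deterministic functions of the \emph{independent} variables $\mathbf{K}_L \perp \mathbf{K}_R$, their images are independent, so the renormalized disorder distribution factorizes across $\X'_0$, $\Prob'(\mathbf{K}'_L,\mathbf{K}'_R) = \Prob'_L(\mathbf{K}'_L)\,\Prob'_R(\mathbf{K}'_R)$, which is precisely the statement that the optimal coarse-graining generates no correlations in $\Prob$ between $\E_L(\X'_0)$ and $\E_R(\X'_0)$.

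The hard part will be the second step: rigorously upgrading ``the conditional distribution of the right degrees of freedom is insensitive to the left disorder'' into ``the right renormalized couplings are a function of the right disorder only.'' This requires being careful about the junction variable $\H_0$ and the buffer, which appear in both conditional factors, and arguing that after the across-block terms are killed by the factorization the surviving purely-left and purely-right couplings each inherit disorder dependence from exactly one side. There is also a minor measure-theoretic point — that insensitivity of the conditional law for \emph{every} fixed realization, together with the fixed rule $\Lambda^*$, really does define single-sided functions $F_L,F_R$ rather than merely a conditional-independence statement — but this follows once the per-realization factorization of Proposition~2 is in hand.
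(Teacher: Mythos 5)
Your proposal is correct and follows essentially the same route as the paper: the paper's own proof is just the observation that Eq.~(\ref{eq:facto_appendix_th_tilde}) of Proposition~2 leaves $P(\X'_{j\geq2}|\X'_0)$ literally unchanged under arbitrary disorder changes confined to $\E_L$, from which the absence of induced cross-correlations in $\mathcal{P}$ is immediately concluded. Your extra steps---reading the right-environment couplings off that conditional law so that $\mathbf{K}'_R=F_R(\mathbf{K}_R)$, and composing with the prior independence of $\mathbf{K}_L$ and $\mathbf{K}_R$---only make explicit (including the generic measure-to-couplings identification) what the paper leaves implicit.
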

		Since this holds for every block, we conclude that with the above assumptions disorder correlations remain suppressed under coarse-graining. Note that this can be generalized to higher dimensions similarly to Sec.~\ref{sec:effHam}. (ii) Proposition~2 also implies that for the purpose of finding the optimal course-graining of a block, which in general should depend on the disorder realization (as is also the case in strong-disorder RG), only the disorder configuration in the local neighbourhood of the block can be considered. Though this is strictly true under the full information capture assumption, it provides motivation for constructing adaptive coarse-grainings in more complicated systems, with the RG rule optimized for the \emph{local} disorder realization (or, more practically, the equivalence classes thereof).

		We turn to a solvable model system to empirically demonstrate decay of disorder correlations as a function of MI, especially when the stringent requirements of Proposition 2 are not satisfied. In the random Ising chain \cite{PhysRevLett.36.1508} the nearest-neighbour couplings are independent random variables $K_{i,i+1}$ distributed according to a probability $\Prob(K)$. For a generic $\Prob(K)$ the recursive RG Eq.~\ref{eq:distr2} is intractable, but for the special case when it is Bernoulli distributed:
		\begin{equation}\label{eq:Bernoulli}
		\Prob(K) = p\delta(K-k_0) + (1-p)\delta(K-k_1),
		\end{equation}
		the decimation transformation allows to solve Eq.~\ref{eq:distr2} analytically, since the factorizability is then preserved exactly along the flow. The model exhibists much richer phenomenology than the clean case: for $k_0=-k_1$, in addition to the the usual (unstable) ferromagnetic and (stable) paramagnetic fixed points, the spin-glass fixed point is reached for any $0<p<1$ if starting exactly at $T=0$. For $k_1=0$,  i.e. in the random \emph{dilute} Ising chain, additional Griffiths singularities  appear in the limit $k_0\rightarrow \infty$ and $h_0\rightarrow 0$, where $h_0\rightarrow 0$ are the on-site (uniform) magnetic fields~\cite{PhysRevB.10.4665,PhysRevLett.36.1508}. This is associated with existence of rare, but arbitrarily large coupled clusters of spins~\cite{PhysRevLett.23.17}.		
		
		 We focus on the random dilute Ising chain, but allow instead any arbitrary RG transformation (without loss of generality, for a block of two sites) parametrized by $\Lambda = (\lambda_1,\lambda_2)$, as in Sec.~\ref{sec:Ising}. For a finite periodic system the renormalized couplings can be computed perturbatively, using the cumulant expansion, for any quenched disorder realization and any $\Lambda$. For illustration, we consider a system of 16 spins and \emph{all} possible disorder realizations. For each realization we compute the Hamiltonian after the RG step, for arbitrary $\Lambda$, by summing up to ninth order in the cumulants, obtaining the full renormalized disorder distribution $\Prob^1_\Lambda(\bf{K})$, where $\bf{K}$ is a vector of all possible couplings between the block spins.

		To quantify the generated disorder correlations in $\Prob^1_\Lambda(\bf{K})$, we examine the joint probability distribution of two neighbouring NN couplings $\Prob^1_\Lambda(K_{i,i+1},K_{i+1,i+2})$, obtained by marginalization, as a function of $\Lambda$ (it was chosen since those correlations develop the fastest). We use two statistical measures of dependence for this distribution: the \emph{distance correlation} $\rm{dCor}$ \cite{10.2307/25464608} and the information-theoretic \emph{Kullback-Leibler divergence} $\rm{D}_{\rm{KL}}$ \cite{kullback1951}. Both are sensitive also to nonlinear correlations, and share the essential property that two random variables are statistically independent if and only if $\rm{dCor}=\rm{D}_{\rm{KL}}=0$, though distance correlation is generally better suited for continuous variables.
		In Fig.~\ref{fig:disorderCorCM}a we plot $\rm{dCor}(K_{i,i+1},K_{i+1,i+2})$ as a function of $\Lambda$, while in Fig.~\ref{fig:disorderCorCM}b $\rm{D}_{\rm{KL}}(\Prob^1_\Lambda(K_{i,i+1},K_{i+1,i+2}) || \Prob^1_\Lambda(K_{i,i+1})\Prob^1_\Lambda(K_{i+1,i+2}))$. Both measures coincide: the disorder distributions at neighbouring bonds are the more independent, the more RSMI is retained by the coarse-graining rule, as seen by comparing with Fig.~\ref{fig:densityplotlv2}a (which is valid, up to rescaling, for every quenched disorder realization in the model). The couplings are statistically independent, rendering the renormalized disorder distribution factorizable, precisely where RSMI is maximized, i.e.~for decimation. This empirically establishes decay of correlations.
		
		We also investigate another measure of complexity i.e.~how non-nearest-neighbour terms are generated as a function of $\Lambda$. Denote by $\rm{K}_0$ the subspace of 2-body NN couplings, and by $\rm{K}_0^\perp$ the orthogonal space of all other couplings (see Fig.~\ref{fig:densityplot_disorder1}). For any (renormalized) disorder realization $\bf{K}$ let $\bf{K}^\perp$ be its restriction to $K^\perp$, obtained by truncation of all couplings in $\rm{K}_0$. One measure of the shift of the renormalized disorder distribution away from $K_0$ is the $K_0^\perp$-center-of-mass $\rm{dCOM}$:
		\begin{equation}\label{eq:COM}
		\rm{dCOM} = \lVert \sum_{\bf{K}} \Prob(\bf{K})\bf{K}^\perp \rVert,
		\end{equation}
		where $\lVert \cdot \rVert$ is the Euclidean norm. It is shown in Fig.~\ref{fig:disorderCorCM}c, and exhibits the same qualitative behaviour, as a function of $\Lambda$, as the correlation measures, i.e.~it vanishes as a function of increasing RSMI.
		
		We thus observe that empirically RSMI maximization suppresses generation of both spurious correlations and of higher order and long-range couplings in the renormalized disorder distribution, which is also supported by Proposition 2 (under the appropriate assumptions).

		
	\section{Conclusions and Outlook}\label{sec:conclusions}
	
	We investigated information-theoretic properties of real-space RG procedures, and particularly of one
	based on variational maximization of real-space mutual information (RSMI) \cite{Koch-Janusz2018}, both for clean and disordered systems. We demonstrated suppression of longer range interactions in the renormalized Hamiltonian as a function of RSMI retained: formally, proving this statement under explicitly stated assumptions, and empirically, by considering arbitrary coarse-graininings in the solvable example of the Ising chain. For the case of disordered systems, again using formal proofs and the example of dilute random Ising chain, we showed that in addition to longer-range/higher-order terms, also correlations in the renormalized disorder distribution are suppressed. We also examined the effect of constraints on the type of coarse grained variables on the RG procedure.

	Our results provide a formal underpinning for the \emph{physical} intuition behind the RSMI maximization: the effective long-wavelength description of the system is simple in terms of degrees of freedom which carry the most information about its large scale behaviour. While the notion of ``simplicity" may be ambiguous -- despite the clear practical consequences of its absence  -- as there exist multiple measures of Hamiltonian complexity, the long-range information and its retention can be defined rigourously, similarly to Information Bottleneck approach of compression theory \cite{infbottle1}.
	Different measures of complexity we computed for both clean (range, amount of n-body interactions) and disordered systems (correlations in the renormalized disorder distribution measured using KL-divergance and distance correlation) are \emph{all} suppressed as more RSMI is preserved by the coarse-graining. This strongly indicates that the model-independent RSMI coarse-graining, optimal by construction from the point of view of compression theory, is also optimal physically, resulting in operationally desirable properties (a tractable Hamiltonian). We thus established direct and quantifiable connections between the information theoretic properties of RG transformation and the actual physical properties of the renormalized Hamiltonian and the disorder distribution.

Beyond conceptual significance, the results can be useful practically, inspiring new numerical approaches to RG for disordered/complex systems, as briefly discussed in Sec.~\ref{sec:disorder}. This is especially interesting given progress in machine learning, and numerical techniques for MI estimation \cite{pmlr-v80-belghazi18a,Hernandez:2019qgp}, and the inverse problem \cite{Lokhove1700791}.

	 A number of distinct further research directions are possible. On the formal part of the spectrum, a mathematically rigourous investigation of the probability measure defined by the RSMI coarse-graining, in the spirit of Refs.~\cite{vanEnter1993,Kennedy1997,Kennedy2010}, is desirable. Conceptually, an interesting question is whether the type and number of coarse-grained variables can also be variationally optimized, as opposed to being chosen at the outset, as is usually the case. This would have the interpretation of ``discovering" whether the best variables to describe a system, originally given in terms of, say, Ising spins, are the same, or rather some emergent degrees of freedom are preferable (see also Refs. \cite{Ronhovde2011,Ronhovde2012,doi:10.1021/acs.jpclett.8b00733}). More practically, the results invite the application of the RSMI method to the study of disordered systems, both using synthetic, as well as experimental data.
	Finally, a quantum version of the procedure is an open question. The Information Bottleneck has recently been extended to the case of compression of quantum data \cite{2017arXiv170402903S}: in this setting the conditional probability of classical systems is replaced by a quantum channel. It would be interesting to explore how the physics of the system manifests itself in properties of these optimal channels, and to compare it with energy based approaches \cite{PhysRevD.54.4131,PhysRevB.65.104508,PhysRevLett.90.147204,PhysRevLett.93.187205} and the recently introduced Gilt-TNR method \cite{PhysRevB.97.045111}.

	\section*{Acknowledgements}
	We thank Prof. Gianni Blatter for his insightful comments. S.D.H. and M.K-J. gratefully acknowledge the support of Swiss National Science Foundation (SNSF).

	%

	\appendix
	
	
	\section{Mutual information} \label{appendix:MI}
	The mutual information  Eq.(\ref{eq:MI_def}) can equivalently be defined by:
	\begin{equation}
	I_\Lambda(\H:\E) = H(\H)-H(\H|\E),
	\label{eq:appendix:MI_chain_rule}
	\end{equation}
	where:
	\begin{IEEEeqnarray}{rCl}
		H(\H) &=& -\sum_\H P_\Lambda(\H)\log(P_\Lambda(\H)),\\
		H(\H|\E) &=& -\sum_{\H,\E}P_\Lambda(\H,\E)
		\log\left(\frac{P_\Lambda(\H,\E)}{P(\E)}\right),
	\end{IEEEeqnarray}
	are the Shannon entropy and conditional entropy, respectively. It is a symmetric quantity. Positivity of mutual information and of the conditional entropy, together with
	the bound on entropy, immediately imply the following inequalities:
	\begin{equation}
	0 \leq I_\Lambda(\H:\E) \leq H(\H),
	\label{eq:appendix:MIbound_entropy}
	\end{equation}
	where $H(\H)$ is the entropy of $\H$. The mutual information $I_\Lambda(\H:\E)$ is also bounded by the mutual information of the visibles and the environment:
	\begin{equation}
	I_\Lambda(\H:\E) \leq I(\V:\E),
	\end{equation}
	which is obvious, since the hidden degrees of freedom only couple to the environment via the visibles.
	
	Throughout the text we also use the notion of conditional mutual information, which, for any random variables $\E$,$\H$,$\V$, generically can be defined via the so-called chain rule:
	\begin{equation}
	I(\E:\H,\V) = I(\E:\H) + I(\E:\V|\H).
	\label{eq:appendix:MI_chain_rule_cond}
	\end{equation}

		
	\section{RSMI does not increase the range of interactions and maintains factorizability of distributions}\label{appendix:proofRange}
	Here we give the details of the argument in Sec. \ref{sec:effHam}. We work directly in $D$ dimensions and we spell out explicitly the additional (reasonable) assumptions required, compared to the 1D case. 
	
	Consider a generic finite-ranged Hamiltonian with degrees of freedom $\X$ in $D$ dimensions. For concreteness let us assume a hypercubic lattice. We partition $\X$ into hypercubic coarse-graining blocks $\V_j$ large enough, so that only nearest-neighbour blocks interact. For the purpose of this argument we arrange the blocks into parallel $(D-1)$-dimensional hyperplanes index by $l$ so that $\X=\cup_\ell \X_\ell$ with $\X_\ell=\cup_{j\in J_\ell}\V_j$. Thus, in terms of the hyperplanes we end up with a quasi-one-dimensional structure. Let us choose an arbitrary hyperplane $\X_0$, denote its immediate neighbours $\X_{\pm1}$ as the buffer $\B$, and the union of the remaining hyperplanes $\X_{\ell<-1}$ and $\X_{\ell>1}$ as the environment $\E_0(\X_0)$, or, in more detail, as left- and right-environment $\E_{L/R}(\X_0)$, respectively.
	
	Assume now that the coarse-grained variables $\H_j$ for the blocks $\V_j$ in $\X_0$ are constructed in such a way that $I(\X_0':\E_0) = I(\X_0:\E_0)$, where $\X_0'=\cup_{j\in J_0}\H_j$.
	This is the full information capture condition for the hyperplane, generalizing the condition for the single block in 1D (note though, that we still optimize variables $\H_j$ for each block, and not some new collective hidden variables for the  entire hyperplanes). Strictly speaking, this requires an additional assumption (compared to 1D) that it is equivalent to assuming $I(\H_j:\E_0) = I(\V_j:\E_0)$ separately for each individual block in the hyperplane $\X_0$. This seems reasonable for a short-ranged Hamiltonian, at least in the isotropic case. Under those assumptions we show the probability measure on the coarse-grained variables $P(\X')$ obeys a $D$ dimenional analogue of factorization property of Proposition 1 in Sec. \ref{sec:effHam}: 
	\begin{equation}\label{eq:facto_appendix}
		P(\X'_{j \leq-2},\X'_{j \geq2}| \X'_0) = P(\X'_{j\leq-2} |\X'_0)\cdot P(\X'_{j\geq2}| \X'_0).
	\end{equation}
	To prove this we begin with a crucial separability lemma, which is a technical condition enabling Proposition 1:
	\begin{lemma} Let $\E_L$, $\E_R$ be the left/right environments of $\X_0$ and let $I(\X_0':\E_0) = I(\X_0:\E_0)$. Then the following factorization property with respect to the coarse-grained variable $\X'_0$ holds:
		$P(\E_L,\E_R|\X'_0) = P(\E_L|\X'_0)P(\E_R|\X'_0)$
	\end{lemma}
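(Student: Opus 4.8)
The plan is to recast the asserted factorization $P(\E_L,\E_R|\X'_0)=P(\E_L|\X'_0)P(\E_R|\X'_0)$ as the conditional independence $\E_L\perp\E_R\,|\,\X'_0$, and to assemble it from three elementary facts about the joint law of $(\E_L,\E_R,\X_0,\X'_0)$. First I would record the coarse-graining Markov property: since $P_\Lambda(\X'_0|\X_0)$ couples the hiddens only to their own block, one has $\X'_0\perp\E_0\,|\,\X_0$, equivalently the Markov chain $\E_0\to\X_0\to\X'_0$. Second, from the full-information-capture hypothesis $I(\X'_0:\E_0)=I(\X_0:\E_0)$ I would derive the reversed relation $\X_0\perp\E_0\,|\,\X'_0$: expanding $I(\E_0:\X_0,\X'_0)$ two ways by the chain rule~(\ref{eq:appendix:MI_chain_rule_cond}) and using $I(\E_0:\X'_0|\X_0)=0$ from the Markov chain gives $I(\E_0:\X_0|\X'_0)=I(\X_0:\E_0)-I(\X'_0:\E_0)=0$, so that $\X'_0$ is a sufficient statistic. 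Third, from the nearest-neighbour (in hyperplanes) structure of $\Ham$, the untouched measure $P(\X)$ is a one-dimensional Markov chain in the hyperplane index, so conditioning on $\X_0$ separates $\{\X_{\ell<0}\}$ from $\{\X_{\ell>0}\}$; marginalising out the buffer $\X_{\pm1}$, whose two halves sit on opposite sides of $\X_0$, preserves the split and yields $\E_L\perp\E_R\,|\,\X_0$.

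I would then combine these by writing $P(\E_L,\E_R,\X_0|\X'_0)$ in two ways. Factoring out the environment and using the coarse-graining Markov property to drop $\X'_0$ from $P(\E_L,\E_R|\X_0,\X'_0)=P(\E_L,\E_R|\X_0)$, followed by the nearest-neighbour split, gives $P(\X_0|\X'_0)\,P(\E_L|\X_0)\,P(\E_R|\X_0)$. Factoring out the block instead and invoking sufficiency $P(\X_0|\E_L,\E_R,\X'_0)=P(\X_0|\X'_0)$ gives $P(\E_L,\E_R|\X'_0)\,P(\X_0|\X'_0)$. Cancelling the common factor $P(\X_0|\X'_0)$ on the support of $P(\cdot|\X'_0)$ yields the pointwise identity $P(\E_L|\X_0)P(\E_R|\X_0)=P(\E_L,\E_R|\X'_0)$ for every admissible $\X_0$. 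Summing this over $\E_R$ (and over $\E_L$) collapses one factor and forces $P(\E_L|\X_0)=P(\E_L|\X'_0)$ and $P(\E_R|\X_0)=P(\E_R|\X'_0)$ throughout the support; substituting back produces exactly $P(\E_L,\E_R|\X'_0)=P(\E_L|\X'_0)P(\E_R|\X'_0)$.

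I expect the main obstacle to be conceptual rather than computational: conditional independence given the \emph{original} block $\X_0$ does not automatically transfer to conditional independence given the \emph{coarse-grained} variable $\X'_0$, and it is precisely the sufficiency relation $\X_0\perp\E_0\,|\,\X'_0$ -- the information-theoretic content of perfect RSMI capture -- that bridges the two. Care is also needed where positivity enters: the cancellation of $P(\X_0|\X'_0)$ is valid only where it is nonzero, which is harmless since configurations outside the support carry no probability. Finally, in $D>1$ the reduction to a one-dimensional Markov chain in the hyperplane index rests on the stated assumptions that blocks are large enough that only adjacent hyperplanes interact and that full capture for $\X'_0$ is equivalent to per-block capture $I(\H_j:\E_0)=I(\V_j:\E_0)$; I would flag this explicitly, as it is the only ingredient genuinely beyond the one-dimensional argument.
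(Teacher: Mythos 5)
Your proposal is correct and follows essentially the same route as the paper: both derive $I(\E_0:\X_0|\X'_0)=0$ from the chain rule plus the Markov property of the coarse-graining, combine it with the locality-induced factorization $P(\E_L,\E_R|\X_0)=P(\E_L|\X_0)P(\E_R|\X_0)$ by writing $P(\E_L,\E_R,\X_0|\X'_0)$ in two ways, and cancel $P(\X_0|\X'_0)$ on its support. Your final step of marginalizing over $\E_R$ and $\E_L$ to identify $P(\E_{L/R}|\X_0)=P(\E_{L/R}|\X'_0)$ is a slightly more explicit way of reaching the same conclusion the paper draws by observing that the product form implies $I(\E_L:\E_R|\X'_0)=0$.
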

	\begin{proof}
	To show that, first note that from the full information capture assumption it follows that:
	\begin{IEEEeqnarray}{rCl}
		I(\E_0:\X_0|\X'_0) &=& I(\E_0:\X_0,\X'_0)-I(\E_0:\X'_0)\IEEEnonumber\\
		&=& I(\E_0:\X_0)-I(\E_0:\X'_0) = 0,
	\end{IEEEeqnarray}
	where the first equality is the chain rule for mutual information and the second is due to the fact that the coarse-grained variables $\X'_0$ are a function of $\X_0$ only. Vanishing of this mutual information is equivalent to the (conditional) probability distribution factorizing and therefore:
	\begin{equation}\label{eq:appendix:env_factorization0}
		P(\E_0,\X_0|\X'_0) = P(\E_0|\X'_0)P(\X_0|\X'_0).
	\end{equation}
	Furthermore, the locality of the interactions assumption implies that $I(\E_L:\E_R|\X_0) = 0$ and thus:
	\begin{IEEEeqnarray}{rCl}
		P(\E_L,\E_R,\X_0|\X'_0) &=& P(\E_L,\E_R|\X_0,\X'_0)P(\X_0|\X'_0)
		\IEEEnonumber\\
		&=& P(\E_L,\E_R|\X_0)P(\X_0|\X'_0)\IEEEnonumber\\
		&=& P(\E_L|\X_0)P(\E_R|\X_0)P(\X_0|\X'_0).\IEEEnonumber\\
		\label{eq:appendix:env_factorization}
	\end{IEEEeqnarray}
	Comparing Eqs. (\ref{eq:appendix:env_factorization0}) and (\ref{eq:appendix:env_factorization}) we find that:
	\begin{equation}
		P(\E_0|\X'_0)P(\X_0|\X'_0) = P(\E_L|\X_0)P(\E_R|\X_0)P(\X_0|\X'_0).
	\end{equation}
	For a given $\X'_0$ let us denote the set of $\X_0$ such that $P(\X_0|\X'_0)\neq 0$ by $\{\X_0(\X'_0)\}$.
	For all such ``compatible" $\X_0\in\{\X_0(\H_0)\}$ we can divide by $P(\X_0|\X'_0)$ and obtain:
	\begin{equation}
		P(\E_0|\X'_0) = P(\E_L|\X_0)P(\E_R|\X_0).
		\label{eq:appendix:env_fact_indep}
	\end{equation}
	Crucially, the left hand side does not depend on $\X_0$, and so long as $\X_0\in\{\X_0(\X'_0)\}$ the equality holds and the conditional probability factorizes independently of particular $\X_0$. In fact the factorization holds generally and the case $P(\X_0|\X'_0) = 0$ is not a problem:
	\begin{IEEEeqnarray}{rCl}
		P(\E_0|\X'_0) &=& \sum_{\X_0} P(\E_L,\E_R,\X_0|\X'_0)\IEEEnonumber\\
		&=& \sum_{\X_0} P(\E_L|\X_0)P(\E_R|\X_0)P(\X_0|\X'_0)\IEEEnonumber\\
		&=& \sum_{\X_0\in\{\X_0(\X'_0)\}} P(\E_L|\X_0)P(\E_R|\X_0)P(\X_0|\X'_0)\IEEEnonumber\\
		&=& P(\E_L|\X_0(\X'_0))P(\E_R|\X_0(\X'_0)),
		\label{eq:appendix:env_fact_indep2}
	\end{IEEEeqnarray}
	where we used Eq. (\ref{eq:appendix:env_factorization}) in the second equality, explicitly removed vanishing (by virtue of $P(\X_0|\X'_0) = 0$) terms in the sum in the third, and used Eq. (\ref{eq:appendix:env_fact_indep}) to take the $\X_0$-independent product from under the restricted summation in the third.  We thus constructed an explicit factorization of $P(\E_L,\E_R|\X'_0)$ in Eq. (\ref{eq:appendix:env_fact_indep2}), which implies:
	\begin{equation}
		I(\E_L:\E_R|\X'_0) = 0
	\end{equation}
	and hence we can simply write:
	\begin{eqnarray}
		P(\E_L,\E_R|\X'_0) = P(\E_L|\X'_0)P(\E_R|\X'_0).
		\label{eq:appendix:env_fact_hidden}
	\end{eqnarray}
	
	\end{proof}
	The Lemma has a very nice physical interpretation, which provides a useful intuition of more general validity. It states that with a short range Hamiltonian an area of finite width has to mediate \emph{all} correlations between its neighbourhoods, and if the information that area has about them is accurately retained in a new variable, no correlation can exist between the neighbourhoods which hasn't got the new variable as an intermediary. Note we have not relied on translation invariance at all. This will be useful in deriving corresponding statement for disordered systems, which explains the results in Sec. \ref{sec:disorder}. More immediately, it is the key element in showing Eq. (\ref{eq:facto_appendix}), giving the $D$ dimensional version of Proposition 1:
	
	\addtocounter{proposition}{-2}
	\begin{proposition}
	Let $I(\X_0':\E_0) = I(\X_0:\E_0)$.  Then the probability measure on the coarse-grained variables $P(\X')$ obeys the factorization property: 
	\begin{equation}\label{eq:facto_appendix_th}
	P(\X'_{j \leq-2},\X'_{j \geq2}| \X'_0) = P(\X'_{j\leq-2} |\X'_0)\cdot P(\X'_{j\geq2}| \X'_0).
	\end{equation}
	where in the conditional probabilties the buffer (i.e. the neighbours $\X'_{\pm 1} $ of $\X'_0$) has been integrated out. In other words, for fixed $\X'_0$ the probabilities of its left and right environments $\E_{L/R}(\X'_0)$ are independent of each other. 
	\end{proposition}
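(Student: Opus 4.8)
The plan is to obtain the factorization of the \emph{coarse-grained} environments, Eq.~(\ref{eq:facto_appendix_th}), directly from the already-established separability Lemma, which provides the analogous factorization $P(\E_L,\E_R|\X'_0)=P(\E_L|\X'_0)P(\E_R|\X'_0)$ for the \emph{original} environment variables [Eq.~(\ref{eq:appendix:env_fact_hidden})]. The only additional ingredient needed is the block structure of the coarse-graining rule Eq.~(\ref{eq:rg_rule}): because each hidden variable $\H_j$ is produced from its own block $\V_j$ alone, the coarse-grained left environment $\X'_{j\leq-2}$ is a (stochastic) function of the original left environment $\E_L$ only, and $\X'_{j\geq2}$ depends only on $\E_R$. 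Identifying $\X'_{j\leq-2}$ and $\X'_{j\geq2}$ as the block-wise coarse-grainings of $\E_L$ and $\E_R$ is the conceptual starting point.

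First I would insert a marginalization over the original environment variables (the buffer coarse-grained variables $\X'_{\pm1}$ being simply summed out from the outset, requiring no special treatment):
\begin{equation}
\begin{split}
P(\X'_{j\leq-2},\X'_{j\geq2}|\X'_0) = \sum_{\E_L,\E_R} & P(\X'_{j\leq-2},\X'_{j\geq2}|\E_L,\E_R,\X'_0)\\
&\times P(\E_L,\E_R|\X'_0).
\end{split}
\end{equation}
Next I would argue that, given the original environment configuration, the two coarse-grainings decouple and are insensitive to $\X'_0$, so that the first factor splits as $P(\X'_{j\leq-2}|\E_L)\,P(\X'_{j\geq2}|\E_R)$. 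This rests on two observations: the product form of $P_\Lambda$ over blocks in Eq.~(\ref{eq:rg_rule}) makes the left- and right-environment coarse-grainings conditionally independent, and $\X'_0$, being a function of $\X_0$ alone, carries no extra information about the environment coarse-grainings once $\E_L,\E_R$ are fixed.

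Applying the Lemma to the second factor then renders the summand a product of a purely-left and a purely-right term, so the double sum factors into two independent single sums. Each collapses by marginalization, e.g.\ $\sum_{\E_L}P(\X'_{j\leq-2}|\E_L)\,P(\E_L|\X'_0)=P(\X'_{j\leq-2}|\X'_0)$, once more using that $\X'_{j\leq-2}$ depends on $\E_L$ only, and symmetrically for the right. This reproduces Eq.~(\ref{eq:facto_appendix_th}) and completes the proof.

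The main obstacle is the careful justification of the conditional-independence identity $P(\X'_{j\leq-2},\X'_{j\geq2}|\E_L,\E_R,\X'_0)=P(\X'_{j\leq-2}|\E_L)P(\X'_{j\geq2}|\E_R)$: one must be explicit that the cross-block products in Eq.~(\ref{eq:rg_rule}) genuinely decouple the two sides and that conditioning on $\X'_0$ is redundant given the original environment variables. Once this Markov-type structure is made precise, the remainder is routine marginalization, and the Lemma does the essential work of furnishing the conditional independence of $\E_L$ and $\E_R$ that the full-information-capture hypothesis $I(\X_0':\E_0)=I(\X_0:\E_0)$ guarantees.
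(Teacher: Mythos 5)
Your proposal is correct and follows essentially the same route as the paper's proof: both arguments hinge on the separability Lemma $P(\E_L,\E_R|\X'_0)=P(\E_L|\X'_0)P(\E_R|\X'_0)$ together with the block-product structure of the coarse-graining rule, and then collapse the marginalization over the original environment variables into the two conditionals $P(\X'_{j\leq-2}|\X'_0)$ and $P(\X'_{j\geq2}|\X'_0)$. The paper organizes this as a chain of identities on the joint $P(\X'_{j\leq-2},\X'_0,\X'_{j\geq2})$ and divides by $P(\X'_0)$ at the end, whereas you condition from the outset, but the content is identical.
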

	\begin{proof}
	Consider the coarse-grained probability measure defined by Eqs.(\ref{eq:RG_IT}) and (\ref{eq:rg_rule}):
	\begin{equation}
		P(\X') = \sum_\X P(\X)\prod_j P(\H_j|\V_j).
	\end{equation}
	Denoting the product of the block conditional probability distributions in the hyperplanes by $\prod_\ell P(\X'_\ell|\X_\ell)$ and integrating out $\X'_{\pm 1}$ we have:
	\begin{equation}\label{eq:appendix:splitprob0}
		P(\X'_{j \leq -2},\X'_0,\X'_{j \geq2})
		= \sum_{\X_{\abs{\ell}\neq1}} P(\{\X_\ell\}_{\abs{\ell}\neq 1})
		\prod_{\abs{\ell}\neq 1} P(\X'_\ell|\X_\ell).
	\end{equation}
	Using the definition of conditional probability and the fact that $\X'_0$ only directly depends on $\X_0$ we have:
	\begin{IEEEeqnarray}{rCl}
		P(\{\X_\ell\}_{\abs{\ell}\neq 1})P(\X'_0|\X_0)
		&\equiv& P(\E_0,\X_0,\X'_0)\IEEEnonumber\\
		&=& P(\E_0,\X_0|\X'_0)P(\X'_0),
	\end{IEEEeqnarray}
	which allows us to write:
	\begin{widetext}
	\begin{IEEEeqnarray}{rCl}\label{eq:appendix:fact_calculation1}
	P(\X'_{j \leq -2},\X'_0,\X'_{j \geq2}) &=& \sum_{\X_{\abs{\ell}\neq 1}} P(\E_0,\X_0|\X'_0)P(\X'_0) \prod_{\abs{\ell}\neq 0,1} P(\X'_\ell|\X_\ell) =  \\ \label{eq:appendix:fact_calculation2}
	&=&\sum_{\X_{\abs{\ell}\neq 1}} P(\E_L,\E_R|\X'_0)P(\X_0|\X'_0)P(\X'_0)\prod_{\abs{\ell}\neq 0,1} P(\X'_\ell|\X_\ell) =  \\ \label{eq:appendix:fact_calculation3}
	&=&\sum_{\X_{\abs{\ell}\neq 1}} P(\E_L|\X'_0)P(\E_R|\X'_0)P(\X_0|\X'_0)P(\X'_0)\prod_{\abs{\ell}\neq 0,1} P(\X'_\ell|\X_\ell) =  \\ \label{eq:appendix:fact_calculation4}
	&=&\sum_{\X_{\abs{\ell}\neq 0,1}} P(\E_L|\X'_0)P(\E_R|\X'_0)P(\X'_0)\prod_{\abs{\ell}\neq 0,1} P(\X'_\ell|\X_\ell) = \\ \label{eq:appendix:fact_calculation5}
	&=& P(\X'_0)\left(\sum_{\X_{\ell\leq-2}} P(\E_L|\X'_0)\prod_{\ell\leq-2} P(\X'_\ell|\X_\ell) \right) \left(\sum_{\X_{\ell\geq2}} P(\E_R|\X'_0)\prod_{\ell\geq2} P(\X'_\ell|\X_\ell) \right) = \\ \label{eq:appendix:fact_calculation6}
	&=& P(\X'_0)\cdot P(\X'_{j\leq-2} |\X'_0)\cdot P(\X'_{j\geq2}| \X'_0),
	\end{IEEEeqnarray}
	\end{widetext}
	where to obtain Eq.~(\ref{eq:appendix:fact_calculation2}) we conditioned on $\X_0$ and used the full information capture assumption to write $P(\E_L,\E_R|\X_0,\X'_0) = P(\E_L,\E_R|\X'_0)$, to obtain Eq.~(\ref{eq:appendix:fact_calculation3}) we used the factorization Eq.~(\ref{eq:appendix:env_fact_hidden}) proved in the Lemma, to obtain Eq.~(\ref{eq:appendix:fact_calculation4}) we performed the summation over $\X_0$, to obtain Eq.~(\ref{eq:appendix:fact_calculation5}) we rearranged the sums taking expressions independent of summation variables out of them, and in the last line we used Bayes' law. Dividing both sides  by $P(\X'_0)$ we obtain Eq.~(\ref{eq:facto_appendix_th})
	\end{proof}
	Proposition 1 shows that for a fixed $\X'_0$ the probability $P(\X'_{j \leq -2},\X'_0,\X'_{j \geq2})$ factorizes into a product over left and right environments. As described in the main text, together with the arbitrariness of the choice of the hyperplane this implies (barring a pathological fine-tuned scenario in which integration over $\X_{\pm1}$ \emph{exactly} cancels all pre-existing NNN couplings) that the effective Hamiltonian in terms of new variables is still nearest-neighbour (in all directions).
	
	Furthermore, under the same assumptions of finite-ranged Hamiltonian, we can also derive an important result about the properties of the renormalized disorder distribution. Assume without loss of generality that the blocks are chosen sufficiently large to render interactions nearest-neighbour with respect to the blocks. Then:
	\addtocounter{proposition}{0}
			\begin{proposition} Consider a disordered 1D system, with a factorizable (product) disorder distribution over, without loss of generality, nearest neighbour couplings. The choice $\Lambda^*$ of the optimal coarse graining of a block $\X_0$, satisfying $I(\X_0':\E_0) = I(\X_0:\E_0)$, and thus the factorization property of Proposition 1, are stable to local changes in disorder, provided those do not affect directly the block or the buffer, i.e.~are fully confined to the environment.
			\end{proposition}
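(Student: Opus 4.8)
The plan is to recast ``optimality'' as a sufficient-statistic condition, then use the one-dimensional Markov structure to isolate precisely which features of the joint distribution the optimal rule responds to, and finally to argue that those features are untouched by disorder confined to the environment. As in the Lemma, I would first note that $I(\X_0':\E_0)=I(\X_0:\E_0)$ is equivalent to $I(\X_0:\E_0\,|\,\X_0')=0$, i.e. to the sufficiency factorization
\[
P(\X_0,\E_0|\X_0')=P(\X_0|\X_0')\,P(\E_0|\X_0').
\]
Thus $\Lambda^*$ is optimal if and only if $\X_0'$ is a \emph{sufficient statistic} of $\X_0$ for $\E_0$, and the claim reduces to: if $\X_0'$ is sufficient for a disorder realization $D$, it remains sufficient for any $\tilde D$ differing from $D$ only inside $\E_0$.

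Next I would exploit locality. Since the Hamiltonian is nearest-neighbour in the blocks and the buffer $\B=\X_{\pm1}$ is retained (not absorbed into $\E_0$), the block couples to $\E_L$ and $\E_R$ only through its two boundary spins $x^0_L,x^0_R$. Writing the Boltzmann weight as a product along the chain and summing out the buffer gives a factorized conditional
\[
P(\X_0|\E_0)\ \propto\ \Phi_L(x^0_L;\E_L)\,e^{\K_\intra[\X_0]}\,\Phi_R(x^0_R;\E_R),
\]
where $\K_\intra[\X_0]$ is the intra-block reduced Hamiltonian of $\X_0$ and the block--buffer couplings are disorder-independent, while the \emph{entire} influence of the environment is carried by the two scalar factors $\Phi_{L/R}$. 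Because $x^0_{L/R}\in\{\pm1\}$, each $\Phi$ is equivalent to an effective boundary field, so the environment acts on the block exactly like a pair of random fields $(h_L,h_R)$ conjugate to the fixed boundary spins through the fixed intra-block weight. This is the perturbed-disorder analogue of the factorization used in the Lemma.

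The punchline is a Fisher--Neyman observation: in the exponential family
\[
P(\X_0|h_L,h_R)\ \propto\ e^{\,h_L x^0_L+\K_\intra[\X_0]+h_R x^0_R},
\]
the sufficient statistic for the fields $(h_L,h_R)$ is the boundary-spin pair $(x^0_L,x^0_R)$, a characterization fixed by $\K_\intra$ alone. Hence $\X_0'$ captures all information about $\E_0$ precisely when it retains $(x^0_L,x^0_R)$ --- a condition with no reference whatsoever to the environment disorder. A change confined to, say, $\E_L$ modifies $\Phi_L$ (hence the channel $P(\E_L|x^0_L)$) but leaves $\Phi_R$, $\K_\intra$, and the block--buffer couplings intact, so the same $\Lambda^*$ stays optimal. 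I would then close the loop exactly as in the Lemma: with the perturbed disorder one still has $P(\E_L,\E_R|\X_0)=\tilde P(\E_L|x^0_L)\,P(\E_R|x^0_R)$, and since the same $\X_0'$ still encodes $(x^0_L,x^0_R)$ one recovers $P(\E_0|\X_0')=\tilde P(\E_L|\X_0')\,P(\E_R|\X_0')$, giving $I(\E_L:\E_R|\X_0')=0$ and the factorization of Proposition~1 for $\tilde D$.

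The main obstacle I anticipate is genericity. Establishing that retaining $(x^0_L,x^0_R)$ is \emph{necessary}, and not merely sufficient, for full capture requires the field distribution induced by the disorder to be rich enough to resolve all boundary-spin configurations; a fine-tuned realization in which some effective field is frozen to a single value would render a coarser $\X_0'$ optimal, so the perturbation must be assumed to remain within this generic regime (paralleling the exclusion of fine-tuned cancellations in Proposition~1). Carefully delimiting these non-generic realizations, and confirming that the ``compatible-$\X_0$'' argument of the Lemma survives verbatim for $\tilde D$, is the step I would spell out most carefully.
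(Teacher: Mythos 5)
Your route is genuinely different from the paper's. You characterize the optimal rule structurally --- via the boundary-field representation $P(\X_0|\E_0)\propto\Phi_L(x^0_L;\E_L)\,e^{\K_\intra[\X_0]}\,\Phi_R(x^0_R;\E_R)$ and a Fisher--Neyman argument identifying the boundary pair $(x^0_L,x^0_R)$ as the sufficient statistic for the effective environment fields --- and then argue that this characterization makes no reference to the disorder inside $\E_0$, so the same $\Lambda^*$ stays optimal for the perturbed realization and the Lemma can be rerun. The paper never characterizes $\Lambda^*$ at all: it writes the modified Boltzmann weight as the original one times $(Z/\tilde Z)\,e^{\tilde{K}(\X_D)}/e^{K(\X_D)}$, observes that this correction is a function of the $\E_L$ degrees of freedom alone, and absorbs it into the left-environment factor of the factorization already established by the Lemma for the \emph{original} disorder, obtaining $\tilde P(\E_L,\E_R|\X'_0)=\tilde P(\E_L|\X'_0)\,P(\E_R|\X'_0)$ with the right factor literally unchanged, in a few lines of algebra. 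Your approach buys physical insight (it explains \emph{why} the optimal rule is blind to the environment disorder: the environment acts on the block only through a pair of random boundary fields); the paper's buys directness and unconditional validity.

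The obstacle you flag at the end is a genuine gap relative to the statement as given, and it is exactly what the paper's argument avoids. Your chain is: optimal for $D$ $\Rightarrow$ $\X'_0$ retains $(x^0_L,x^0_R)$ $\Rightarrow$ optimal for $\tilde D$. The first arrow (necessity) holds only if the channels $P(\E_{L}|x^0_{L})$ and $P(\E_{R}|x^0_{R})$ are non-degenerate, i.e.\ actually distinguish the two boundary-spin values; for a degenerate realization a rule discarding a boundary spin can saturate $I(\X_0:\E_0)$, and your argument is then silent about $\tilde D$. The proposition carries no such hypothesis, and the paper's proof needs none: stability of the factorization (which is what the downstream corollary on disorder correlations actually uses) follows for \emph{any} $\Lambda^*$ satisfying the full-capture assumption, purely from locality of the weight ratio. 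If you keep your route you must either add the non-degeneracy assumption explicitly, or replace the necessity step by the direct weight-ratio manipulation. Two smaller points: the block--buffer couplings are not ``disorder-independent'' but merely unchanged between $D$ and $\tilde D$ (they are part of the quenched realization); and your final step should note that $\tilde P(\E_R|x^0_R)=P(\E_R|x^0_R)$ because, conditional on $\X_0$, the right half of the chain has an unchanged Hamiltonian --- this is what makes the right-environment factor, and hence the absence of generated left-right disorder correlations, survive the perturbation.
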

		\begin{proof}
For a fixed quenched disorder realization, denote the probability distribution of the degrees of freedom under this Hamiltonian by $P(\X)$. Let $P_{\Lambda^*}(\X'_0|\X_0)$ be the optimal coarse-graining for the block $\X_0$, determined by $\Lambda^*$, saturating mutual information, and consequently ensuring the factorization property of the Lemma is obeyed. Consider now a localized change to the disorder realization, affecting only terms acting entirely within an area $\X_D \subset \E_L$, resulting in a modified probability distribution $\tilde{P}(\X)$. One can then show that the factorization property still holds, \emph{with the very same choice of} $P_{\Lambda^*}(\X'_0|\X_0)$.

To this end denote by $\tilde{K}(\X_D)$ the local terms in the reduced Hamiltonian affected by the disorder change, and by $K(\X_D)$ the original ones (the change to the Hamiltonian $K$ is also localized since it is NN in the blocks, and the change to the disorder is confined to $\X_D$).
Then:
\begin{widetext}
	\begin{IEEEeqnarray}{rCl}\nonumber 
		\tilde{P}(\E_L,\E_R,\X_0)P_{\Lambda^*}(\X'_0|\X_0) &=& \frac{e^{\tilde{K}(\E_L,\E_R,\X_0)}}{\tilde{Z}}  P_{\Lambda^*}(\X'_0|\X_0)=  
		\frac{e^{K(\E_L,\E_R,\X_0)}}{Z}\frac{Z}{\tilde{Z}} \frac{e^{\tilde{K}(\X_D)}}{e^{K(\X_D)}}   P_{\Lambda^*}(\X'_0|\X_0)= \\ \nonumber
		&=& \frac{Z}{\tilde{Z}} \frac{e^{\tilde{K}(\X_D)}}{e^{K(\X_D)}} P(\E_L,\E_R,\X_0)P_{\Lambda^*}(\X'_0|\X_0) = \frac{Z}{\tilde{Z}} \frac{e^{\tilde{K}(\X_D)}}{e^{K(\X_D)}} P(\E_L,\E_R,\X_0,\X'_0) = \\ \nonumber
		&=& \frac{Z}{\tilde{Z}} \frac{e^{\tilde{K}(\X_D)}}{e^{K(\X_D)}} P(\E_L,\E_R,\X_0|\X'_0) P(\X'_0) = \\ \nonumber
		&=& \frac{Z}{\tilde{Z}} \frac{e^{\tilde{K}(\X_D)}}{e^{K(\X_D)}} P(\E_L|\X'_0) P(\E_R|\X'_0)P(\X_0|\X'_0)P(\X'_0) = \\ \label{eq:appendix:propo2_1}
		&=& \tilde{P}(\E_L|\X'_0) P(\E_R|\X'_0)P(\X_0|\X'_0)P(\X'_0)
	\end{IEEEeqnarray}
\end{widetext}	
with $Z$ and $\tilde{Z}$ are the original and modified partition functions. In the penultimate line we used the Lemma for the initial distribution $P(\E_L,\E_R,\X_0)$, and in the last we absorbed all additional factors, which are local, into the definition of $\tilde{P}(\E_L | \X'_0)$. Dividing both sides by  $P(\X'_0)$ and marginalizing over $\X_0$ we arrive at:
\begin{equation}\label{eq:stable_fac}
\tilde{P}(\E_L,\E_R|X'_0) = \tilde{P}(\E_L|\X'_0) P(\E_R|\X'_0),
\end{equation}
where the right factor is \emph{as for the original distribution}.

Since the change to the disorder (other than being confined to $\E_L$) was completely arbitrary, equation \ref{eq:stable_fac} shows any such localized changes do not affect the choice of optimal coarse graining of the block. They thus do not break the factorization property and, in particular, do not affect the other environment: note that in Eq. \ref{eq:stable_fac} we still have the \emph{original} $P(\E_R | \X'_0)$! Consequently the Proposition~1 immediately holds for both the original and modified disorder realization, with the same coarse-graining $P_{\Lambda^*}(\X'_0|\X_0)$, and the same probability distribution of the renormalized right environement $\E_R(\X'_0)$:
	\begin{equation}\label{eq:facto_appendix_th_tilde}
	\tilde{P}(\X'_{j \leq-2},\X'_{j \geq2}| \X'_0) = \tilde{P}(\X'_{j\leq-2} |\X'_0)\cdot P(\X'_{j\geq2}| \X'_0).
	\end{equation}
	Hence, the coarse-graining is stable.
			\end{proof}
			
Proposition 2 has an important consequence for the renormalized \emph{disorder distribution} $\mathcal{P}$:
 the probability distribution of degrees of freedom in $\E_R(\X'_0)$ being completely insensitive to the choice of \emph{disorder realization} in $\E_L(\X'_0)$, we conclude that there cannot exist any correlations \emph{in the disorder distribution} between in the regions $\E_L(\X'_0)$ and $\E_R(\X'_0)$ (i.e. no such correlations across $\X'_0$ are generated by the optimal coarse-graining).

	
	\section{The effective Hamiltonian}	
	
	\subsection{The cumulant expansion}\label{sec:app_cumulantexp}
	
	Consider a generic Hamiltonian $\K[\X]$. We split it into two parts~\cite{Niemeyer1974}:
	\begin{equation}
	\K[\X] = \K_\intra[\X] + \K_\inter[\X],
	\label{eq:app_Ham_decomp_intra_inter}
	\end{equation}
	where $\K_\intra$, contains \emph{intra}-block terms, i.e. those which only couple  spins within a single block, and $\K_\inter$ contains \emph{inter}-block terms,
	i.e. those that couple spins from different blocks.	 Such a decomposition simplifies the calculations significantly.
	For translationally invariant systems the intra-block terms are all of the same form:
	\begin{equation}
	\K_\intra[\X] = \sum_{j=1}^n\K_\block[\V_j].
	\label{eq:app_Kintra}
	\end{equation}
	
	Using the decomposition Eqs.(\ref{eq:app_Ham_decomp_intra_inter}) and (\ref{eq:app_Kintra}) the definition of the renormalized Hamiltonian in Eq.(\ref{eq:RG_IT_Ham}) can be rewritten as an intra-block average of the inter-block part of the Hamiltonian:
	\begin{IEEEeqnarray}{rCl}
		e^{\K'[\X']}
		&=& Z_\intra\sum_\X e^{\K_\inter[\X]}
		\prod_{j=1}^n\underbrace{\frac{e^{\K_\block[\V_j]}}{Z_\block}P_\Lambda(\H_j|\V_j)}_{=:P_{\restblock}(\H_j,\V_j)}\IEEEnonumber\\
		&=& Z_\intra\sum_\X e^{\K_\inter[\X]}
		\prod_{j=1}^nP_{\restblock}(\V_j|\H_j)P_{\restblock}(\H_j)\IEEEnonumber\\
		&=& Z_\intra\underbrace{\prod_{j=1}^nP_{\restblock}(\H_j)}_{=:P_\restintra(\X')}
		\sum_\X e^{\K_\inter[\X]}\underbrace{\prod_{j=1}^nP_{\restblock}(\V_j|\H_j)}_{=:P_{\restintra}(\X|\X')}\IEEEnonumber\\
		&=& Z_\intra P_\restintra(\X')\avg{e^{\K_\inter[\X]}}_{\restintra}[\X'],
		\label{eq:app_RGeq_for_cumulants}
	\end{IEEEeqnarray}
	where the average $\avg{\cdot}_\restintra$ is over $P_\restintra(\X|\X')$
	as a probability distribution in $\X$ and thus introduces a
	dependence on the new spin variables $\X'$.
	We indicate this dependence by square brackets $[.]$ after the average.
	
	Equation \eqref{eq:app_RGeq_for_cumulants} lends itself to a cumulant expansion:
	\begin{equation}
	\avg{e^{\K_\inter[\X]}}_\restintra[\X']
	=e^{\sum_{k=0}^\infty\frac{1}{k!}C_k[\X']}
	\label{eq:app_cumulant_expansion}
	\end{equation}
	with the standard expressions for the cumulants in terms of moments.  The first few of which are given by:
	\begin{IEEEeqnarray}{rCl}
		\IEEEyesnumber\label{eq:app_cumulants_def}
		C_1 &=& \avg{\K_\inter}_\restintra,\IEEEyessubnumber\\
		C_2 &=& \avg{\K_\inter^2}_\restintra - \avg{\K_\inter}_\restintra^2,
		\IEEEyessubnumber\\
		C_3 &=& \avg{\K_\inter^3}_\restintra - 3\avg{\K_\inter^2}_\restintra\avg{\K_\inter}_\restintra + 2\avg{\K_\inter}_\restintra^3,
		\IEEEyessubnumber
	\end{IEEEeqnarray}
	where for brevity we did not indicate the dependence on $\X'$.
	The powers of $\K_\inter$ inside the averages induce
	couplings between multiple blocks, and naturally lead to new coupling terms
	in the effective Hamiltonian.
	
	The cumulant expansion Eq.(\ref{eq:app_cumulant_expansion}) allows to determine the new Hamiltonian by taking the logarithm of Eq.(\ref{eq:app_RGeq_for_cumulants}):
	\begin{equation}
	\K'[\X'] = \log(Z_\intra P_\restintra(\X')) + \sum_{k=0}^\infty\frac{1}{k!}C_k[\X'].
	\label{eq:app_ren_Ham}
	\end{equation}
	The renormalized coupling constants are not	apparent in Eq.(\ref{eq:app_ren_Ham}).
	In order to identify them we introduce the following canonical form of the Hamiltonian:
	\begin{equation}
	\K'[\X'] = K_0' + \sum_{\{\alpha_\ell\}_{\ell=1}^n}
	K'_{\alpha_1,\alpha_2,\dotsc,\alpha_n}
	\left(\sum_{j=1}^n\prod_{\ell=1}^n(x_{j+\ell}')^{\alpha_\ell}\right),
	\label{eq:app_Ham_canonical}
	\end{equation}
	with $\alpha_1= 1$ and $\alpha_\ell\in\{0,1\}$ for all $\ell>1$.
	Here, addition of the indices is to be understood modulo $n$ (i.e. with periodic boundary conditions).
	Note that arbitrary orders $k$ of the cumulant expansion $C_k$ contribute to each coupling constant $K'_{\alpha_1,\alpha_2,\dotsc,\alpha_n}$.

	\subsection{Factorization of quenched averages}
	\label{appendix:avgs_factorize}
	Factorization of the conditional probability distribution results in the factorization of expectations $\avg{\mathcal{O}[\X]}_\restintra$ for any operator,
	which is a product of operators $o_j$ acting on separate blocks, i.e. $\mathcal{O}[\X]=\prod_{j=1}^no_j[\V_j]$:
	\begin{IEEEeqnarray}{rCl}
		\avg{\mathcal{O}[\X]}_\restintra[\X']
		&=& \prod_{j=1}^n
		\sum_{\V_j}o_j[\V_j]P_\restblock(\V_j|\H_j)\IEEEnonumber\\
		&=& \prod_{j=1}^n\avg{o_j[\V]}_\restblock[\H_j],
	\end{IEEEeqnarray}
	where the probability over which we average is:
	\begin{equation}
		P_\restblock(\V|\H)
		=\frac{e^{\K_\block[\V]}}{Z_\block P_\restblock(\H)}P_\Lambda(\H|\V).
		\label{eq:appendix:cumulant_dist}
	\end{equation}
	In particular, the factorization holds for the operators $\K_\inter^k$, which appear in
	the expressions for the cumulants.


	\subsection{Parametrization of the RG rule using RBM ansatz}
	\label{appendix:RBMansatz}
	
	The conditional probability distribution $P_\Lambda(\H|\V)$ is parametrized using a Restricted Boltzmann Machine (RBM)  ansatz~\cite{Ackley1985,Hinton1986,Salakhutdinov2012}.
	The RBMs belong to a family of energy-based models, whose main purpose is to efficiently approximate probability distributions, and, more generally, they are an example of a growing class of machine learning techniques recently employed in a statistical physics or condensed matter setting~\cite{Wang2016,Torlai2016,NIPS2016_6211,Carrasquilla2017,Nieuwenburg2017,Carleo602,Koch-Janusz2018,2018arXiv180202840L,PhysRevB.97.045153,PhysRevB.97.085104,2018arXiv180205267F}.
	
	In the RBM ansatz the joint probability of the visible and hidden degrees of freedom is approximated by a Boltzmann distribution:
	\begin{equation}\label{eq:botlzmann_param1}
	P(\V,\H) = \frac{1}{Z}e^{-E_\Lambda (\V,\H)},
	\end{equation}
	with a quadratic energy energy function:
	\begin{equation}
		E_\Lambda(\V,\H)
		= -\sum_{i,j} \lambda_{i}^{j}v_ih_j - \sum_i \alpha_i v_i
		- \sum_j \beta_j h_j,
	\end{equation}
	where $v_i\in\V$, $h_j\in\H$ and $\Lambda$ collectively denotes the set of parameters $\{\lambda_i^j\}_{i,j}$, $\{\alpha_i\}_i$ and $\{\beta_j\}_j$, which are to be variationally optimized
	so that $P_{\Lambda}(\V,\H)$ they define is as close as possible to the target distribution $P(\V,\H)$. Note that the energy function only couples the visible to the hidden degrees of freedom, and includes no couplings \emph{within} the visible or the hidden sets. This pecularity (which the word ``restricted" in RBM refers to) is crucial to the existence of fast algorithms~\cite{hinton2} for training and sampling from the trained distribution $P_{\Lambda}(\V,\H)$.
	
	The conditional probability is then given by:
	\begin{equation}
	P_\Lambda(\H|\V)
	= \frac{e^{-E_\Lambda(\V,\H)}}{\sum_{\H}e^{-E_\Lambda(\V,\H)}}.
	\label{eq:appendix:RG_rule_genRBM_ansatz0}
	\end{equation}
	It is easy to see that the parameters $\{\alpha_i\}_i$ drop out in $P_\Lambda(\H|\V)$. Additionally, because of the Ising $\mathbb{Z}_2$ symmetry the bias (magnetic field) term for $h_j$ is not allowed: $\beta_i=0$ for all $i$. Due to the absence of interactions between hiddens, the expression factorizes and the summation over $\H$ is trivial. In the case of a 1D system and a single hidden spin $\H=\{h\}$ the conditional probability is then given explicitly by:
	\begin{equation}
		P_\Lambda(\H|\V)
		= \frac{1}{1+e^{-2h\sum_{i=1}^{L_\V}\lambda_i v_i}},
		\label{eq:appendix:RG_rule_genRBM_ansatz}
	\end{equation}
	with $\Lambda=\{\lambda_i\}_i$. The choice of the parameters \emph{defines} the RG rule. It is intuitively clear that while one could, in principle, consider any choice of $\Lambda$, the physically meaningful choices would correspond to the limit $||\Lambda||^2 \rightarrow\infty$, i.e. when the value of $h$ actually strongly depends on ${v}$. In that limit Eq.(\ref{eq:appendix:RG_rule_genRBM_ansatz}) becomes a Heaviside function. This is also what happens in practice during the RSMI training (see Supplemental Materials in Ref.~\cite{Koch-Janusz2018}).
	
	Thus the virtue of the RBM ansatz is twofold: first, it provides an efficient tool from the algorithmic perspective of RSMI implementation, and second, it also provides a well-behaved, differentiable analytical ansatz, which we use to explicitly calculate the quantities of interest. We emphasize though, that conceptually the RBM ansatz is not essential to the RSMI approach. Any other parametrization of $P_{\Lambda}(\H,\V)$ can also be used, at the expense of having to devise efficient algorithms to fix parameters of this new ansatz.

	
	\section{The 1D Ising model}\label{app:Ising}
	For the 1D Ising model, Eq.(\ref{eq:Ham_Ising}) and a single hidden spin we define:
	\begin{IEEEeqnarray}{rCl}
		\IEEEyesnumber\label{eq:appendix:partitions}
		\V_j &=& \{x_{(j-1)L_\V+1},x_{(j-1)L_\V+2},\dotsc,x_{jL_\V}\},
		\IEEEyessubnumber\\
		\H_j &=& \{h_j\}.\IEEEyessubnumber
	\end{IEEEeqnarray}
	The Hamiltonian decomposition Eq.(\ref{eq:Ham_decomp_intra_inter}) gives:
	\begin{IEEEeqnarray}{rCl}
		\K_\block[\V] &=& K\sum_{i=1}^{L_\V-1}v_iv_{i+1},
		\label{eq:appendix:Ising_Kblock}\\
		\K_\inter[\X] &=& K\sum_{j=1}^{n}x_{jL_\V}x_{jL_\V+1}
		\label{eq:appendix:Ising_Kinter}
	\end{IEEEeqnarray}
	with the partition functions $Z_\intra = \prod_{j=1}^n Z_\block$, where
	$Z_\block = \sum_\V e^{\K_\block[\V]}$.
	
	The 1D Ising model with nearest neighbor interactions can be solved exactly using the method of transfer matrices.
	To this end define the transfer matrix $T$ with components:
	$\matrixel{x_1}{T}{x_2}:=e^{Kx_1x_2}$.
	The matrix elements of arbitrary integer powers of $T$ can be computed by diagonalization:
	\begin{equation}
	\matrixel{x_1}{T^m}{x_2}
	= \frac{1}{2}\left(2\cosh(K)\right)^m\left(1+x_1x_2\tanh(K)^m\right).
	\label{eq:appendix:transfermatrixelement}
	\end{equation}
	
	\subsection{Exact decimation}
	\label{appendix:decimation}
	For the purpose of numerical comparison with the RSMI solution we perform one step of the \emph{exact} decimation RG transformation Eq.(\ref{eq:decimation_rule}).
	Following Eq.(\ref{eq:RG_IT_Ham}):
	\begin{IEEEeqnarray}{rCl}
		e^{\K'[\X']}
		&=& \sum_\X \prod_{j=1}^n P_\Lambda(\H_j|\V_j)
		e^{K\sum_{i=1}^N x_ix_{i+1}}\IEEEnonumber\\
		&=& \sum_\X\prod_{j=1}^nP_\Lambda(x_j'|\{x_{2j-1},x_{2j}\})
		\IEEEnonumber\\
		&&\quad\times\matrixel{x_{2j-1}}{T}{x_{2j}}\matrixel{x_{2j}}{T}{x_{2(j+1)-1}}
		\IEEEnonumber\\
		&=& \prod_{j=1}^n\matrixel{x_j'}{T^2}{x_{j+1}'},
	\end{IEEEeqnarray}
	because for every block $j$ the delta-like conditional probability $P_\Lambda(x_i'|\{x_{2j-1},x_{2j}\})$ strictly
	enforces $x_j'=x_{2j-1}$ and does not involve $x_{2j}$. Thus, $x_{2j}$ can simply be integrated out.
	The above has, up to a multiplicative constant $e^{c'}$, the same form as $e^{\K[\X]}$ with a new coupling constant $K'$, such that we can set
	$e^{c'}T'=T^2$. From that we obtain:
	\begin{IEEEeqnarray}{rCl}
		c' &=& \frac{1}{2}\log(4\cosh(2K)),\IEEEyessubnumber\\
		K' &=& \frac{1}{2}\log(\cosh(2K)),\IEEEyessubnumber
	\end{IEEEeqnarray}
	such that the renormalized Hamiltonian is:
	\begin{equation}
	\K'[\X'] = \frac{n}{2}\log(4\cosh(2K)) + K'\sum_{i=1}^n x_i'x_{i+1}'.
	\end{equation}
	
	\subsection{The effective Hamiltonian}

	\label{appendix:Zb_proof}
	Here we compute the effective block parameters Eq.(\ref{eq:effbparams}) of the 1D Ising model for general block size $L_\V$.
	
	Using Eq.(\ref{eq:appendix:transfermatrixelement}), the partition function of intra-block contribution to Hamiltonian is given by:
	\begin{IEEEeqnarray}{rCl}
		Z_\block &=&
		\sum_\V\prod_{i=1}^{L_\V-1}\matrixel{v_i}{T}{v_{i+1}} = \sum_{v_1,v_{L_\V}}\matrixel{v_1}{T^{L_\V-1}}{v_{L_\V}}\IEEEnonumber\\
		&=& 2(2\cosh(K))^{L_\V-1}.
	\end{IEEEeqnarray}
The expectations of powers of inter-block couplings $\K_1[\X]^k$ appearing in the cumulant expansion can be written as a sum of
products of operators acting on single blocks (see Appendix \ref{appendix:avgs_factorize}). We have:
\begin{IEEEeqnarray}{rCl}
	\K_\inter[\X]^k &=& \left(K\sum_{j=1}^{n}x_{j\cdot L_\V}x_{j\cdot L_\V+1}\right)^k\\
	&=& K^k \sum_{\sum_{j=1}^nk_j=k}\frac{k!}{\prod_{j=1}^nk_j!}
	\underbrace{\prod_{j=1}^n\left(x_{j\cdot L_\V}x_{j\cdot L_\V+1}\right)^{k_j}}_{
		=:\mathcal{O}}.\IEEEnonumber
\end{IEEEeqnarray}
We now consider one term in the above sum and rearrange the factors
according to blocks:
\begin{IEEEeqnarray}{rCl}
	\mathcal{O}	&=& \prod_{j=1}^n
	\underbrace{x_{(j-1)L_\V+1}^{k_{j-1}}x_{jL_\V}^{k_j}}_{=:o_j}.
\end{IEEEeqnarray}
Depending on the values of $k_{j-1}$ and $k_j$, the block-operator $o_j$
is one of the following three operators: $x_{(j-1)L_\V+1}$, $x_{jL_\V}$ or $x_{(j-1)L_\V+1}x_{jL_\V}$.
Hence, the average $\avg{\mathcal{O}}_\restblock$ factorizes into:
\begin{IEEEeqnarray}{c}
	\avg{x_{(j-1)L_\V+1}}_\restblock[\H_j],\IEEEyessubnumber\\
	\avg{x_{jL_\V}}_\restblock[\H_j],\IEEEyessubnumber\\
	\avg{x_{(j-1)L_\V+1}x_{jL_\V}}_\restblock[\H_j].\IEEEyessubnumber
\end{IEEEeqnarray}
The $\mathbb{Z}_2$ symmetry of the 1D Ising model can be used to extract the dependence of $P_\restblock(h)$ and the above three quantities on the single hidden spin $h$:
	\begin{IEEEeqnarray}{rCl}
		P_\restblock(h) &=& \sum_\V\frac{e^{\K_\block[\V]}}{Z_\block}
		\underbrace{P_\Lambda(h|\V)}_{=P_\Lambda(-h|-\V)}\\
		&=& \sum_\V \frac{e^{\K_\block[\V]}}{Z_\block}P_\Lambda(-h|\V) = P_\restblock(-h), \IEEEnonumber
	\end{IEEEeqnarray}
	where we used the fact that for $\mathbb{Z}_2$-symmetric system the coarse-graining satisfies $P_\Lambda(\H|\V) = P_\Lambda(-\H|-\V)$. Since $P_\restblock(h)$ is normalized we have:
	\begin{eqnarray}
	P_\restblock(h) = \frac{1}{2}.
	\label{eq:PLambdah}
	\end{eqnarray}
	For any operator $\mathcal{O}_p[\V]$ with definite $\V$-parity $p=\pm1$ given by $\mathcal{O}_p[-\V]=p\mathcal{O}_p[\V]$, we find using similar arguments that:
	\begin{equation}
		\avg{\mathcal{O}_p[\V]}_\restblock[h]
		= p\avg{\mathcal{O}_p[\V]}_\restblock[-h],
	\end{equation}
	since $P_\restblock(-\V|-h) = P_\restblock(\V|h)$. Hence $\avg{\mathcal{O}_p[\V]}_\restblock[h]$ also has definite $h$-parity $p$.
	
	Since $h$ \emph{only} assumes values $\pm 1$,  then $p=+1$ implies that the average is actually independent of $h$, while $p=-1$ implies it is linear in $h$.
	Thus:
	\begin{IEEEeqnarray}{rCl}
		\avg{x_{(j-1)L_\V+1}}_\restblock[h] &=&
		\avg{v_1}_\restblock[1] \cdot h,\IEEEyessubnumber \quad \\
		\avg{x_{jL_\V}}_\restblock[h] &=&
		\avg{v_{L_\V}}_\restblock[1] \cdot h,\IEEEyessubnumber \quad \\
		\avg{x_{(j-1)L_\V+1}x_{jL_\V}}_\restblock[h] &=&
		\avg{v_1v_{L_\V}}_\restblock[1]. \quad \IEEEyessubnumber 
	\end{IEEEeqnarray}
	The last expression can actually be explicitly calculated,  independently of the choice of RG rule:
	\begin{IEEEeqnarray}{rCl}\label{eq:appendix:b_calc}
		\avg{v_1v_{L_\V}}_\restblock[1] &=& \left(2\cosh(K)\right)^{-(L_\V-1)}\\
		&&\quad\times
		\sum_{\V}v_1v_{L_\V}e^{\K_\block[\V]}P_\Lambda(1|\V).\IEEEnonumber 
	\end{IEEEeqnarray}
	Since  $v_i=\pm 1$ we also have:
	\begin{IEEEeqnarray}{rCl}
		e^{\K_\block[\V]} &=& \prod_{i=1}^{L_\V-1}e^{v_iv_{i+1}}\\
		&=&
		\cosh(K)^{L_\V-1}\prod_{i=1}^{L_\V-1}\left(1+v_iv_{i+1}\tanh(K)\right).
		\IEEEnonumber
	\end{IEEEeqnarray}
	Every term in the expanded expression is of the form
	$\mathcal{O}[\V]\tanh(K)^m$ for an operator $\mathcal{O}$, which is a
	product of several consecutive pairs $v_iv_{i+1}$. 
	If $\mathcal{O}$ has even $\V$-parity and $v_1v_{L_\V}\mathcal{O}[\V]$ is
	not independent of $\V$, then:
	\begin{IEEEeqnarray}{rCl}
		\sum_{\V}v_1v_{L_\V}\mathcal{O}[\V]P_\Lambda(1|\V) &=&
		\sum_{\V}v_1v_{L_\V}\mathcal{O}[\V]\underbrace{P_\Lambda(-1|-\V)}_{
			=1-P_\Lambda(1|-\V)}\IEEEnonumber\\
		&=& -\sum_{\V}v_1v_{L_\V}\mathcal{O}[\V]P_\Lambda(1|\V)\IEEEnonumber\\
		&=& 0.
	\end{IEEEeqnarray}
	Thus, only $\mathcal{O}$ of odd $\V$-parity and those for which $v_1v_{L_\V}\mathcal{O}[\V]$ is independent of $\V$ can contribute to Eq.(\ref{eq:appendix:b_calc}).
	However, $e^{\K_\block[\V]}$ contains only two such contributions:
	$1$ and $v_1v_{L_\V}\tanh(K)^{L_\V-1}$.
	It follows that:
	\begin{equation}
		\avg{v_1v_{L_\V}}_\restblock[1] = \tanh(K)^{L_\V-1} =: b,
	\end{equation}
	i.e. it is a $\Lambda$-independent constant. The remaining two averages depend on the choice of $\Lambda$, and closed expressions for them are given below for the case of block size $L_\V=2$.
		
	As discussed previously, the cumulants can be expressed in terms of the effective block-parameters  Eqs.(\ref{eq:effbparams}). The actual computations can be done by brute-force summation of all possible terms in Eq.(\ref{eq:K1k}). This, however, is rather impractical for obtaining higher order cumulants. We have instead implemented a simple algorithm based on the combinatorial considerations discussed in the main text. 
	
	\subsection{The case of $L_\V = 2$  blocks: discussion of the numerical results}\label{appendix:lv2plots}
	Specializing to blocks of two visible spins results in:
	\begin{equation}
	\K'[\X'] = \frac{N}{2}\log(2\cosh(K)) + \sum_{n=0}^\infty\frac{1}{n!}C_n(\X')
	\end{equation}
	and the effective block-parameters are found to be:
	\begin{IEEEeqnarray}{rCl}\label{eq:app_blockparams}
		a_1 &=&
		\frac{2\left(\cosh(\lambda_1)\sinh(\lambda_1)+\cosh(\lambda_2)\sinh(\lambda_2)\tanh(K)\right)}{\cosh(2\lambda_1)+\cosh(2\lambda_2)},
		\IEEEnonumber\\
		a_2 &=&
		\frac{2\left(\cosh(\lambda_2)\sinh(\lambda_2)+\cosh(\lambda_1)\sinh(\lambda_1)\tanh(K)\right)}{\cosh(2\lambda_1)+\cosh(2\lambda_2)},
		\IEEEnonumber\\
		b &=& \tanh(K).
	\end{IEEEeqnarray}

	\begin{figure}
		\centering
		\includegraphics{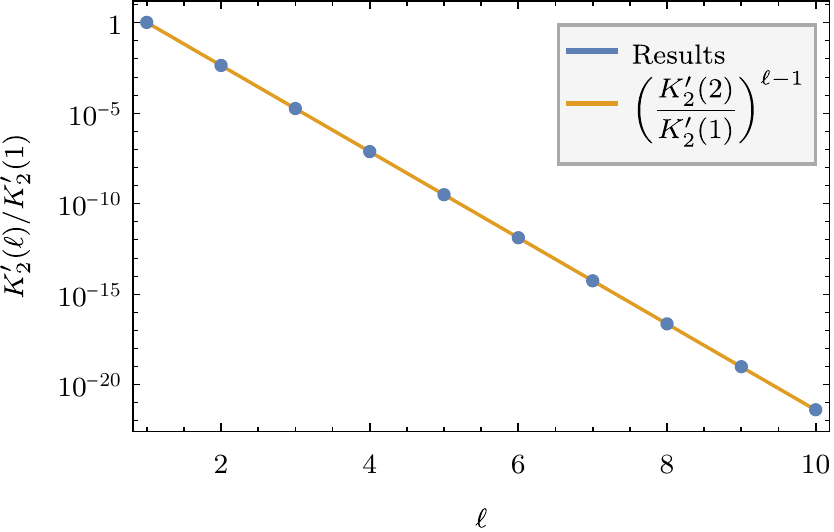}
		\caption{
			Logarithmic plot showing the exponential decay of the two-point
			correlator $K_2'(\ell)$ with distance $\ell$ at $K=0.1$.
			The blue data points represent the results obtained from the
			cumulant expansion of the RSMI-favoured solution up to tenth order, 
			while the yellow line shows the exponential decay with decay length obtained from the first
			two points. For small $K$, where the cumulant expansion is
			expected to be accurate, the two-point correlator decays
			exponentially.
		}
		\label{fig:appendix:k2decay}
	\end{figure}
	
	\begin{figure}
		\centering
		\includegraphics{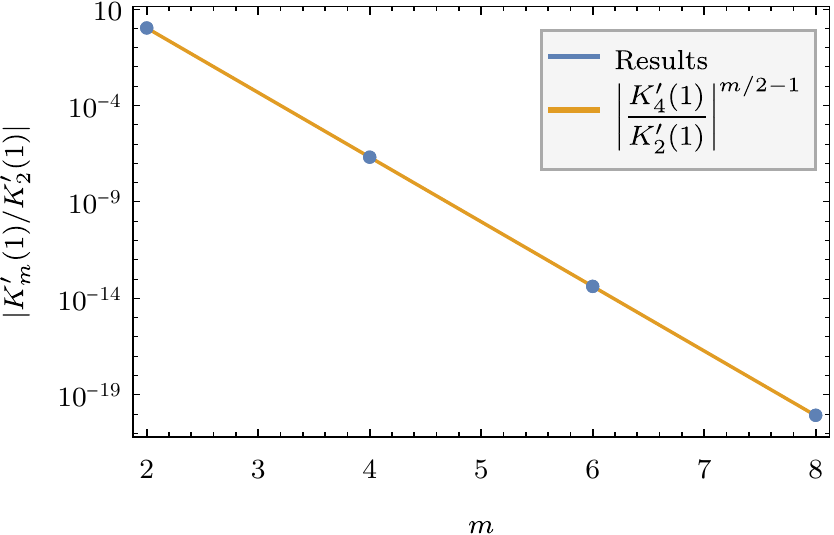}
		\caption{
			Logarithmic plot showing the exponential decay of the nearest neighbor $m$-point correlator $K_m'(1)$ with $m$ at $K=0.1$. The blue data points represent the results obtained from the
			cumulant expansion of the RSMI-favoured solution up to tenth order, while the yellow line shows the exponential decay with decay length obtained from the first two points.
			Only points for even $m$ are present, as $K_m(1)=0$ for odd $m$ due to reasons of symmetry. Again, at small $K$, the two-point correlator decays exponentially.
		}
		\label{fig:appendix:kmdecay}
	\end{figure}
	
	As discussed in the main text, both the two-point correlator as a function of distance between the spins [Fig.(\ref{fig:appendix:k2decay})] and the $m$-point correlator as a function of the number of consecutive spins $m$ [Fig.(\ref{fig:appendix:kmdecay})] decay exponentially for small $K$ for the RSMI-favoured solution (i.e. decimation). This solution, unsurprisingly, is decimation, which can be seen from Figs.(\ref{fig:densityplotlv2}) and (\ref{fig:rangenesslv2}). Additionally in Fig.(\ref{fig:convergence}) we show the convergence to large-$\lambda$ results shown in Fig.(\ref{fig:rangenesslv2}a) with increasing order of cumulant expansion.
	
	\begin{figure}
		\centering
		\includegraphics{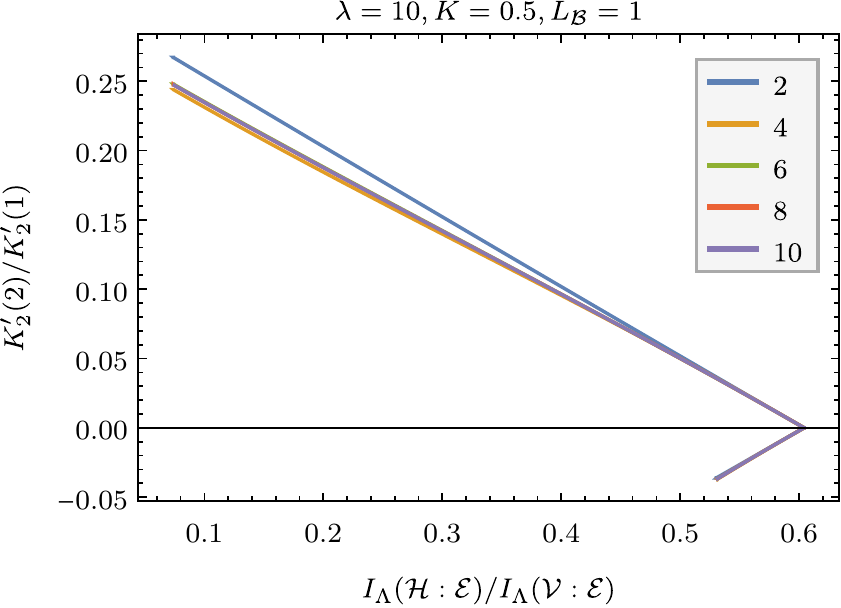}
		\caption{The ratio between next-nearest-neighbour and nearest-neighbour
			coupling constants is plotted versus the mutual information for
			different orders of the cumulant expansion (see inset legend).
			The curves are obtained by parametrizing the RG rule by
			$(\lambda_1,\lambda_2)=\lambda(\cos(\theta),\sin(\theta))$ and
			varying $\theta\in[0,\pi]$.
		}
		\label{fig:convergence}
	\end{figure}
	
	We also comment on the asymmetry (around 0) of the curves in Figs.(\ref{fig:rangenesslv2}a,b). The curves result from traversing the path $\lambda(\cos\theta,\sin\theta)$ in Fig.(\ref{fig:densityplotlv2}), which is \emph{not} fourfold symmetric (instead there are two reflection symmetries with respect to the diagonals). Starting from $\theta=0$ at the peak, the trajectory traces out the lower branch of the curves in Figs.(\ref{fig:rangenesslv2}a,b) reaching the lowest point at $\theta = \pi/4$, before turning around and exactly retracing the trajectory towards the peak at $\theta=\pi/2$. The trajectory then moves on the upper branch reaching the uppermost point at $\theta = 3\pi/4$ and retracing towards peak again at $\theta = \pi/2$. This exact retracing is due to two independent $\mathbb{Z}_2$ symmetries: that of the Ising model and that of the mutual information. Since $\mathbb{Z}_2\times\mathbb{Z}_2$ is not isomorphic to $\mathbb{Z}_4$ we do not have a fourfold symmetry in Fig.(\ref{fig:densityplotlv2}) and consequently we do not have a symmetry around 0 in Figs.(\ref{fig:rangenesslv2}a,b). Physically this is easily  understood: the mutual information in Fig.(\ref{fig:densityplotlv2}a) on the $\lambda_1 = -\lambda_2$ diagonal is lower than on the $\lambda_1 = \lambda_2$ one since for the ferromagnetic Ising model we simulated the neighbouring spins are more likely to be aligned than not. Then for the majority of the spin configurations we have: $\lambda_1v_1 - \lambda_2v_2 = 0$ on the $\lambda_1 = -\lambda_2$ diagonal and hence the coarse-graining rule decides the orientation of the effective spin at random, reducing the mutual information.
	
	We emphasized before that the physically relevant coarse-graining rules are in the limit of large $||\Lambda||^2$. For small values of $||\Lambda||^2$ the coarse-graining rule is essentially independent of the underlying variables $\V$ (or equivalently the rule can be thought of as having a large white noise component). This manifests itself in Fig.(\ref{fig:densityplotlv2}a) by low mutual information in the centre. Nevertheless Figs.(\ref{fig:densityplotlv2}b,c) seem to have some (differently looking) areas of vanishing ``rangeness" and ``m-bodyness" ratios in the centre. Those are entirely accidental and non-universal. It is important to understand that since the central area corresponds to entirely randomly deciding the coarse-grained spin, the effective Hamiltonian (which would therefore have hardly \emph{anything} to do with the physics of the underlying system) would not even contain nearest neighbour terms. The central areas in Figs.(\ref{fig:densityplotlv2}b,c) thus correspond to ratios of two vanishing quantities. Similarly in Fig.(\ref{fig:rangenesslv2}a) the position of the peak not being exactly at 0 for small $||\Lambda||^2$ is exactly due to the accidental features in the centre of Figs.(\ref{fig:densityplotlv2}b).
	
	A slightly more practical lesson can be taken from Fig.(\ref{fig:densityplotlv2}c), where even for larger $\lambda$ multiple crossing of the 0-axis can be observed (i.e. the ``m-bodyness" ratio vanishes also for some smaller value of mutual information, compared to the value at the peak, when the ``rangeness" ratio is still large). This is also accidental, but teaches us that the proper metric to observe  is the saturation of the mutual information (corresponding to the peak) and not the vanishing of some particular coefficient in the Hamiltonian (which may be accidental).

	\subsection{Mutual information}\label{appendix:MI_Ising}
	Here we explicitly calculate the information-theoretic quantities studied in the main text for the case of the NN Ising model in 1D given by Eq.(\ref{eq:Ham_Ising}), with 
	a visible region of size $L_\V$ is coupled to a single hidden spin $\H=\{h\}$. The system is split into four regions [see Fig.(\ref{fig:system1Dising})] with their
	respective sizes satisfying $N=L_\V+2L_\B+2L_\E+L_\O$. We denote the spin variables in the three inner regions of the system by:
	\begin{IEEEeqnarray}{rCl}
		\V &=& \{v_1,v_2,\dotsc, v_{L_\V}\},\IEEEyessubnumber\\
		\B &=&
		\{b_{-L_\B},b_{-L_\B+1},\dotsc,b_{-1},b_1,b_2,\dotsc,b_{L_\B}\},\quad
		\IEEEyessubnumber\\
		\E &=& \{e_{-L_\E}, e_{-L_\E+1},\dotsc,e_{-1},e_1,e_2,\dotsc,e_{L_\E}\}.
		\IEEEyessubnumber
	\end{IEEEeqnarray}
	
	\subsubsection{Mutual information between the hidden degree of freedom and the environment}
	\label{appendix:MI_HE_Ising}	
	The mutual information can be calculated from Eq.(\ref{eq:appendix:MI_chain_rule}). Since $\H$ is a binary variable, the two entropies appearing in Eq.(\ref{eq:appendix:MI_chain_rule}) can be rewritten in terms of the binary entropy
	$h_2(p) := -p\log(p)-(1-p)\log(1-p)$:
	
	\begin{IEEEeqnarray}{rCl}
		H(\H) &=& h_2\left(P_\Lambda(h=1)\right),\\
		H(\H|\E) &=& \avg{h_2\left(P_\Lambda(h=1|\E)\right)}_\E,
	\end{IEEEeqnarray}
	with the conditional probability distribution:
	\begin{equation}
		P_\Lambda(\H|\E) =
		\frac{\sum_{\X\setminus\E} P_\Lambda(\H|\V)P(\X)}{P(\E)}.
	\end{equation}
	Thus, the mutual information is given by:
	\begin{equation}
	I_\Lambda(\H:\E) =
	h_2\left(P_\Lambda(h=1)\right)
	-\avg{h_2\left(P_\Lambda(h=1|\E)\right)}_\E.
	\label{eq:appendix:MI_calc}
	\end{equation}
	The relevant probability distributions, $P_\Lambda(h)$ and $P_\Lambda(h|\E)$, can be computed using transfer matrices (the result is always given in the limit $L_\O\rightarrow \infty$). For the former, we observe that:
		\begin{IEEEeqnarray}{rCl}
			P(\V) &=& \sum_{\B,\E,\O} P(\X)
			= \frac{1}{Z}\sum_{\B,\E,\O}\sum_{i=1}^N\matrixel{x_i}{T}{x_i} = \frac{e^{\K_\block[\V]}}{Z_\block},
			\IEEEnonumber
		\end{IEEEeqnarray}
	which implies that $P_\Lambda(h) = \sum_{\V}P(h|\V)P(\V) = P_\restblock(h)$,
	in the thermodynamic limit. We have already found $P_\restblock(h)$ in Eq.\ \eqref{eq:PLambdah} to be $\frac{1}{2}$, such that the first term in equation
	\eqref{eq:appendix:MI_calc} gives $h_2(1/2) = \log(2)$. The other relevant probability distribution is:
	\begin{equation}
	P_\Lambda(h|\E) = \sum_{\V}P(h|\V)P(\V|\E), 
	\end{equation}
	where $P(h|\V)$ is given by the RBM-ansatz Eq.(\ref{eq:appendix:RG_rule_genRBM_ansatz}) and to obtain $P(\V|\E)$ the two distributions $P(\V,\E)$ and $P(\E)$ need to be computed. In the thermodynamic limit $L_\O\to\infty$, we obtain by Eq.(\ref{eq:appendix:transfermatrixelement}):
	\begin{IEEEeqnarray}{rCl}
		P(\V,\E) &=& \sum_{\B,\O} P(\X)
		= \frac{1}{Z}\sum_{\B,\O}\sum_{i=1}^N\matrixel{x_i}{T}{x_i}
		\IEEEnonumber\\
		&=&
		\frac{1}{4}
		\left(1+v_1e_{-1}G(L_\B+1)\right)\left(1+v_2e_1G(L_\B+1)\right)
		\IEEEnonumber\\
		&&\quad\times\frac{e^{K\sum_{\nn{e,e'}}ee'}}{(2\cosh(K))^{2(L_\E-1)}}
		\frac{e^{\K_\block[\V]}}{Z_\block},
		\label{eq:appendix:PVE}
	\end{IEEEeqnarray}
	since  $\tanh(K)^m\to 0$ for $m\to\infty$ and finite $K$, and
	$Z=(2\cosh(K))^N$ in the thermodynamic limit.
	Similarly:
	\begin{IEEEeqnarray}{rCl}
	P(\E) &=& \sum_{\V,\B,\O} P(\X)
	= \frac{1}{Z}\sum_{\V,\B,\O}\sum_{i=1}^N\matrixel{x_i}{T}{x_i} =
	\IEEEnonumber\\
	&=& \frac{1}{4}\left(1+e_{-1}e_1G(L_\V+2L_\B+1)\right) \times\IEEEnonumber\\
	&&\quad\times\frac{e^{K\sum_{\nn{e,e'}}ee'}}{(2\cosh(K))^{2(L_\E-1)}}.
	\label{eq:appendix:PE} \\
P(\V|\E) &=&
\frac{
	\left(1+e_{-1}v_1G(L_\B+1)\right)\left(1+v_{L_\V}e_1G(L_\B+1)\right)
}{
1+e_{-1}e_1G(L_\V+2L_\B+1)
}\IEEEnonumber\\
&&\quad\times\frac{e^{\K_\block[\V]}}{Z_\block}, \\
	P_\Lambda(h|\E) &=& \frac{1}{2}\sum_{\V}P_\restblock(\V|h)
	\frac{
		1+e_{-1}v_1G(L_\B+1)
	}{
		1+be_{-1}e_1G(2(L_\B+1))
	} \times
	\IEEEnonumber\\
	&&\quad\times \left(1+v_{L_\V}e_1G(L_\B+1)\right),
\end{IEEEeqnarray}
where we recognized $P_\restblock(\V|h)$ from Eq.(\ref{eq:appendix:cumulant_dist}) and used the fact that:
\begin{IEEEeqnarray}{rCl}
G(L_\V+2L_\B+1) &=& \tanh(K)^{L_\V-1}G(2(L_\B+1))\IEEEnonumber\\
&=& bG(2(L_\B+1)).
\end{IEEEeqnarray}
By expanding the numerator, we can rewrite the above in terms of averages
$\avg{\cdot}_\restblock$, and using Eq.\ \eqref{eq:effbparams} we obtain:
\begin{IEEEeqnarray}{rCl}
	P_\Lambda(h|\E) &=&
	\frac{
		1+h(a_1e_{-1}+a_2e_1)G(L_\B+1)
	}{
		2\left(1+be_{-1}e_1G(2(L_\B+1))\right)
	}\IEEEnonumber\\
	&&\quad\times
	\frac{
		be_{-1}e_1G(2L_\B+2)
	}{
		2\left(1+be_{-1}e_1G(2(L_\B+1))\right)
	}.
\end{IEEEeqnarray}

$P_\Lambda(h|\E)$ only depends on the environment through $\{e_{-1},e_1\}$, so the sum over the remaining environment spins
in the average over $P(\E)$ can be performed explicitly, and we are left with an average over the marginal distribution: $P(e_{-1},e_1) = \frac{1}{4}\left(1+e_{-1}e_1 G(2L_\B+3)\right)$.
Finally, we can gather the results and obtain:
\begin{widetext}
	\begin{IEEEeqnarray}{rCl}\label{eq:d33}
		I_\Lambda(\H:\E) &=& \log(2)-\sum_{e_{-1},e_1}P(e_{-1},e_{1})
		h_2\left(\frac{
			1+(a_1e_{-1}+a_2e_1)G(L_\B+1)+be_{-1}e_1G(2L_\B+2)
		}{
			2\left(1+be_{-1}e_{1}G(2L_\B+2)\right)
		}\right).
	\end{IEEEeqnarray}
\end{widetext}

All dependence on $\Lambda$ is in the block parameters $a_1$, $a_2$ ($b$ is $\Lambda$-independent), calculated in Eqs.(\ref{eq:app_blockparams}).

\subsubsection{Mutual information between the visibles and the environment}
\label{appendix:MI_VE_Ising}
Equation (\ref{eq:MIbound_visibles}) states that the mutual information between the hiddens and the environment, $I_\Lambda(\H:\E)$, is bounded from above by the mutual information between the visibles and the environment, $I(\V:\E)$. We now compute the latter explicitly. By definition:
\begin{equation}
I(\V:\E)
= \sum_{\V,\E}P(\V,\E)\log\left(\frac{P(\V,\E)}{P(\V)P(\E)}\right),
\end{equation}
where all the probability distributions involved are already known, see Eqs.~(\ref{eq:appendix:PVE}) and (\ref{eq:appendix:PE}).
Observe that the expression inside the logarithm only depends on the four spins $e_{-1}$, $v_1$, $v_{L_\V}$ and $e_1$, such that the sum over all other spins can be performed explicitly. We obtain:
\begin{widetext}
	\begin{IEEEeqnarray}{rCl}
		I(\V:\E) &=& \sum_{\substack{e_{-1}, v_1,\\v_{L_\V}, e_1}}
		P(e_{-1}, v_1, v_{L_\V}, e_1)
		\log\left(\frac{
			\left(1+e_{-1}v_1G(L_\B+1)\right)\left(1+v_{L_\V}e_1G(L_\B+1)\right)
		}{
			1+e_{-1}e_1G(L_\V+2L_\B+1)
		}\right),\quad\\
		P(e_{-1}, v_1, v_{L_\V}, e_1) &=& \frac{1}{16}\left(1+e_{-1}v_1G(L_\B+1)\right)\left(1+v_1v_{L_\V}G(L_\V-1)\right)\left(1+v_{L_\V}e_1G(L_\B+1)\right).
	\end{IEEEeqnarray}
\end{widetext}

\subsection{The case of larger blocks}
For the case of $L_\V>2$ additional subtleties are present. These can be attributed to differently broken symmetries in the mutual information and in the effective Hamiltonian. 

\begin{figure*}
	\centering
	\includegraphics{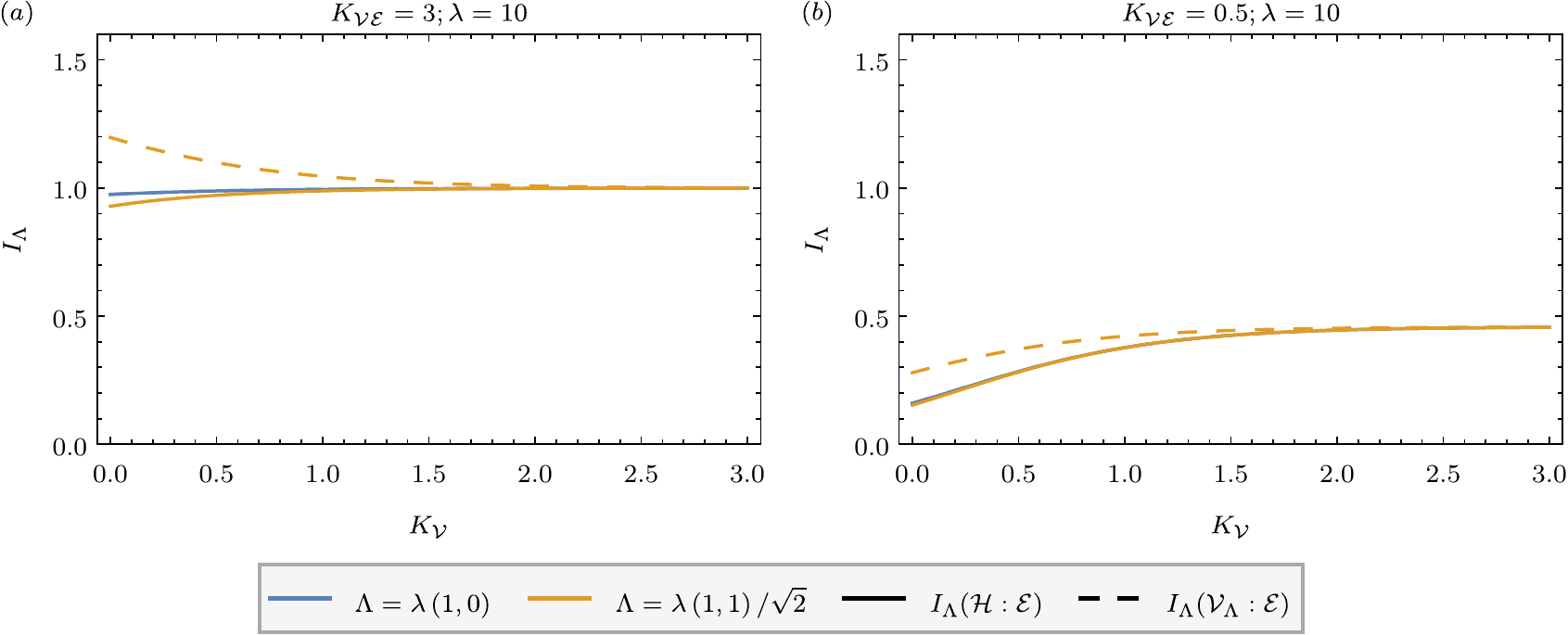}
	\caption{
		The mutual information $I_{\Lambda}(\H:\E)$ and $I(\V_\Lambda :\E)$ for decimation (blue) and majority rule (yellow) procedures in the 1D periodic toy model Eq.(\ref{eq:appendix:toymodel_1D_Ham}), obtained by coupling the environment spins $e_{1,2}$ in the model Eq.(\ref{eq:toymodel_1D_Ham}) with a coupling $K_\E = 1.5$ (the value shown). Compare with Fig.(\ref{fig:toymodel1d}), for which all other parameters are the same. Similarly, two parameter regimes are shown:
		(a) Strong coupling to the environment/ low temperature $K_{\V\E}$ (recall that the coupling constants contain a factor of  $\beta = 1/k_BT$)
		(b) Weak coupling $K_{\V\E}$.
		The solid lines differ from the dashed lines of the same colour by the mismatch $I(\V_{\Lambda}:\E | \H)$ (see the main text).
		Note that the introduction of $K_\E$ coupling, rendering the environment simply connected also in this 1D case, greatly reduces the difference between the two RG rules.	
	}
	\label{fig:appendix:toymodel1d}
\end{figure*}

On the level of interactions, the translation symmetry is explicitly broken by the Hamiltonian decomposition in Eq.(\ref{eq:Ham_decomp_intra_inter}) and subsequent cumulant expansion. This is not merely a feature of the method of evaluation, but rather a consequence of using a block-spin RG scheme: interactions of the spins in the same block are inherently treated differently from interactions of the spins from different blocks. However, the full translational symmetry may sometimes be effectively restored. This happens for instance in the case of a decimation, when for any block size $L_\V$ it does not matter which single spin exactly is chosen in the block -- the same effective Hamiltonian results.

When computing the mutual information, on the other hand, the full symmetry is not restored for $L_\V>2$.  The spins in the interior of the block are always coupled to the environment more weakly that the ones on the edges. Thus, we end up with two quantities, the renormalized Hamiltonian $\K'$ and the mutual information $I_\Lambda(\H:\E)$, which have different symmetry
properties. For example, for $L_\V=3$ in the 1D Ising case, from the point of view of mutual information we have two equivalent optimal solutions (coupling to left-most and right-most spins in the block), but it is intuitively clear that coupling to the center spin is equally good.

One important consequence is that the ``rangeness", for instance, is not necessarily a monotonic function of mutual information in the full parameter space (globally), but it is locally. Crucially though, any global maximum of mutual information corresponds to a global minimum of rangeness (but there could be additional equivalent solutions, just as the centre spin in the $L_\V=3$ decimation). The RSMI maximization is thus a \emph{sufficient} criterion for a good RG transformation, establishing it as a variational principle. Further investigation of these effects for larger coarse-graining blocks might prove useful (see also numerical results for the 2D Ising model case in the Supplementary Materials of Ref. \cite{Koch-Janusz2018}).


\section{Toy models}\label{appendix:toymodel}

\subsection{1D system}\label{appendix:toymodel1d}
To illustrate the influence of the environment $\E$ being simply connected or not, we modify the 1D toy model Eq.(\ref{fig:toymodel1d}) by introducing additional coupling $K_\E$ between the environment spins $e_{1,2}$, effectively making the system periodic (and thus the environment simply connected):

\begin{equation}
\K = K_{\V\E} (e_1v_1+v_2e_2) + K_\V v_1v_2 + K_\E e_1e_2.
\label{eq:appendix:toymodel_1D_Ham}
\end{equation}
This changes two things: on the one hand the visibles become more strongly coupled to each other. On the other, since the environment, for fixed $\V$, cannot now be thought of as being composed of two independent random variables $\E_1$ and $\E_2$, but rather a single one, the information about the environment copied into the visible spins $v_{1,2}$ is much more correlated. This has the effect of reducing the mismatch $I(\V_{\Lambda}:\E | \H)$. Indeed, as seen in Fig.\ \ref{fig:appendix:toymodel1d}, for the same values of all other parameters as in the non-periodic case of Fig.(\ref{fig:toymodel1d}), the discrepancy between the mutual information retained by the two coarse-graining rules is significantly decreased. Note though, that decimation still is (marginally) better.

\subsection{2D system}\label{appendix:toymodel2d}
As discussed in the main text, the situation in a two-dimensional system is qualitatively different. We consider the toy model with the Hamiltonian given by Eq.(\ref{eq:toymodel_2D_Ham}). Since all visibles couple to \emph{the same} environment $\E$, which is now a single variable $E \in \{-4,-3,\ldots,4\}$, in an identical fashion, each copies the same amount of information (at $K_\V=0$). Similar to the 1D case $\V_\Lambda$ captures more information about $\E$ if the coupling is more evenly distributed among the visibles. Additionally, with the connected environment, this has the effect of amplifying the shared information about $\E$ in each visible spin by averaging out the independent noise. While coupling to $\V_\Lambda$ always leads to more compression loss $I(\V_{\Lambda}:\E | \H)$, compared to decimation, the scale of the two effects is different such that in the 2D (and presumably also in higher-dimensional) case
the information gain when coupling to more visibles outweighs the compression loss, as seen in Fig.(\ref{fig:toymodel2d2}).

\section{Comparison to other definitions of RG optimality}\label{appendix:comparison}

The perfect action approach of Hasenfratz and Niedermayer~\cite{Hasenfratz1994}, as well as the approaches by Goldenfeld et al.~\cite{Goldenfeld1998} and Degenhard and Laguna~\cite{Degenhard2002} define an \emph{optimal} renormalization. In these works the goal of the optimization is fundamentally different from ours: the starting point is a continuum problem, which is replaced by an appropriate coarse-grained version, in order to numerically solve it. Optimality is then defined as minimizing the error of the solution with respect to the solution of the continuum problem. In contrast, we study a coarse-graining that captures the long-range physics while discarding short-range fluctuations, in a very general information-theoretic sense. It is, by construction, optimally compressing long-range information. We show that reduced complexity of the effective theory (i.e.~tractable Hamiltonian) \emph{is a consequence, even though it is not explicitly optimized for}. We also do not require a reference problem, such as the the continuum theory in the above cases.

More concretely, Ref.~\cite{Hasenfratz1994} discussed so-called perfect actions in the context of field theories. There exist lattice actions which give cutoff independent results on coarse-grained lattices -- they are perfect in the sense that they show no lattice artefacts. While the perfect action can be obtained from the RG flow, it is not unique; in particular, its range and the order of interactions depend on the details of the coarse-graining rule. Ref.~\cite{Hasenfratz1994} mentions the necessity of using this freedom to obtain actions that are as tractable (in exactly the same sense we use) as possible, to make the process practically feasible. This necessitates a second optimization, whose goal is to obtain a perfect action that can be easily approximated by as few coupling constants as possible. 
It is thus this second optimization that bears similarity to our problem. There are however two crucial differences: while in our approach this tractability is \emph{a consequence} of general principles, i.e.~\emph{a result}, in Ref.~\cite{Hasenfratz1994} it is explicitly optimized for by computing the actions for several different coarse-graining rules and tuning the RG procedure with respect to the range. Furthermore, while we work given a fixed initial system and consider an optimal RG step, Ref.~\cite{Hasenfratz1994} optimizes for tractability considering solutions of the fixed-point equations of motion.
Additionally, the coarse-grainings considered were restricted to a block-average based ones, and only optimized on as single parameter $P$ (corresponding to the magnitude $\Lambda$ in our notation), in effect exploring only a small subset of coarse-graining transformations.
That choice was justified in the particular example considered, but \emph{removing such choices altogether} is exactly the point of our work.

Ideas of coarse-graining as an alternative to sampling have also been applied to numerical methods for solving partial and stochastic differential equations~\cite{Goldenfeld1998,Degenhard2002,Hou2001,Degenhard2002b}.
In that case optimality is based on minimizing the difference between the solution of the coarse-grained and the coarse-grained solution of the continuous equation.
Effectively, this is a measure of the non-commutativity of coarse-graining and time-evolution.
Initially, a geometric coarse-graining was proposed~\cite{Goldenfeld1998} and the issue of finding approximations that make the resulting operators as short-ranged as possible was discussed~\cite{Hou2001}. Such a discussion could also benefit both from our generalization of the coarse-graining rules. Ref.~\cite{Degenhard2002} pointed out, though, that a geometric coarse-graining of partial differential equations cannot be fully satisfactory, since it does not take into account the specific dynamics. This is important, because the dynamics can change the relevance of the degrees of freedom compared to the equilibrium situation. Consequently a direct approach based on minimizing the error at later times was proposed. We do not consider dynamical problems, however a generalization of our approach would not involve a direct application of RSMI to coarse grain spatial degrees of freedom. This is inherently based on the notion that the important information to be compressed and preserved is the one pertaining to long \emph{spatial} length scales (which is correct in deriving effective theory in equilibrium). Correctly generalizing the compression intuition, in the spirit of Information Bottleneck approach~\cite{infbottle1}, would involve a procedure where the definition of the relevant information to be preserved also includes behaviour at longer \emph{time} scales. This is a very interesting question and another potential future research direction.

\end{document}